\theoremstyle{plain}
\newtheorem{theorem}{Theorem}[section]
\newtheorem{proposition}[theorem]{Proposition}
\newtheorem{lemma}[theorem]{Lemma}
\newtheorem{corollary}[theorem]{Corollary}
\theoremstyle{definition}
\newtheorem{assumption}[theorem]{Assumption}
\newtheorem{remark}[theorem]{Remark}
\newtheorem*{remark*}{Remark}
\newtheorem{remarks}[theorem]{Remarks}
\newtheorem*{remarks*}{Remarks}
\numberwithin{equation}{section}
\title{Elementary commutator method for the Dirac equation with long-range perturbations}
\author{
Shinichi {\scshape Arita}\footnote{Graduate School of Mathematical Sciences, 
The University of Tokyo, 3-8-1 Komaba, Meguro-ku, Tokyo 153-8914, Japan.
E-mail: \texttt{shinichi0620a@g.ecc.u-tokyo.ac.jp}. 
}
\ \& 
Kenichi {\scshape Ito}\footnote{Department of Mathematics, Graduate School of Science, Kobe University,
1-1, Rokkodai, Nada-ku, Kobe 657-8501, Japan.
E-mail: \texttt{ito-ken@math.kobe-u.ac.jp}. 
}
}
\date{}
\begin{document}

\allowdisplaybreaks

\maketitle

\begin{abstract}
We present direct and elementary commutator techniques for the Dirac equation
with long-range electric and mass perturbations. 
The main results are absence of generalized eigenfunctions and locally uniform resolvent estimates,
both in terms of the optimal Besov-type spaces.  
With an additional massless assumption, we also obtain an algebraic radiation condition of projection type. 
For their proofs, following the scheme of Ito--Skibsted, we adopt, 
along with various weight functions, 
the generator of radial translations as conjugate operator,
and avoid any of advanced functional analysis, pseudodifferential calculus, or even reduction to the Schr\"odinger equation. 
The results of the paper would serve as a foundation for the stationary scattering theory of the Dirac operator. 
\end{abstract}

\medskip
\noindent
Keywords: Dirac operator, spectral theory, eigenfunctions, resolvent estimates

\medskip
\noindent
Mathematics Subject Classification 2020: 35P10, 35Q41, 81Q15

\maketitle

\tableofcontents

\section{Settings and results}

\subsection{Settings}

\subsubsection{Introduction}

In this paper we investigate the Dirac equation 
on the Euclidean space $\mathbb R^{d}$ of dimension $d\in \mathbb N=\{1,2,\ldots\}$.
We write it in the form of evolution equation as 
\[
 \partial_t\psi
=
-\mathrm iH\psi
\]
with 
\[
H=\alpha_jp_j+q,
\quad p_j=-\mathrm i\partial_j\ \text{for }j=1,\dots,d.
\]
Throughout the paper we adopt Einstein's summation convention 
however without tensorial superscripts. 
Here, for some $n\in\mathbb N$, $\psi$ is 
an unknown $\mathbb C^n$-valued function,
and $q$ is a given $n\times n$ Hermitian matrix-valued function, 
including simultaneously an electromagnetic potential and a mass term. 
We assume that $q$ is time-independent and long-range as will be formulated below. 
In addition, $\alpha_1,\dots ,\alpha_d$ are $n\times n$ Hermitian matrices satisfying 
the anti-commutation relations 
\begin{equation}
\{\alpha_j,\alpha_k\}=2\delta_{jk}I_n
\ \ \text{for }j,k=1,\dots,d
,
\label{25071914}
\end{equation}
where $I_n$ is the identity matrix, which is omitted in the following. 
It is well-known that such matrices exist if $n=2^{[d/2]}$, 
or if $n=2^{[(d+1)/2]}$ when $\beta$ satisfying \eqref{251025} below is added.
However, our arguments are independent of $n$ or of explicit representations of $\alpha_1,\dots,\alpha_d$, 
and dependent only on the above anti-commutation relations.

The purpose of the paper is to introduce a new commutator scheme to the analysis of the Dirac operator, 
following a paper by Ito--Skibsted~\cite{MR4062329}, which originally applied it to the Schr\"odinger operator. 
The scheme is quite elementary. 
The conjugate operator is simply the generator of radial translations, 
and the main tools are the product rule of derivatives and the Cauchy--Schwartz inequality,
combined with exquisite choice of weight functions.
We do not need energy cut-offs from functional analysis or pseudodifferential calculus, 
or even reduction to the Schr\"odinger operator. 
See the papers~\cite{MR1743451,MR1860844,MR2134232,MR2134850,MR3816195} for the Mourre theory with more technical conjugate operators. 
Then the difficulty of the paper is that the commutator is merely a first-order differential operator, for which we cannot generally expect a sign. 
However, we can extract (weighted) positivity from it, which is the most technical novelty of the paper.
We remark that a commutator argument of this type may be called 
the \textit{virial theorem}~\cite{MR400989,MR1832264,MR1927172,
MR1957634, 
MR2009334,
MR4679382}
or 
the \textit{method of multiplier}~\cite{MR4156219}, however we keep to use the terminology ``commutator''. 

The first result of the paper is \textit{Rellich's theorem}, or absence of generalized eigenfunctions 
in an optimally weighted Besov-type space, called the \textit{Agmon-H\"ormander space}. 
Although that of $L^2$-eigenfunctions has long been discussed in various settings~\cite{
MR400989,
MR928641,
MR1832264,
MR1927172,
MR1957634, 
MR2009334, 
MR4156219,
MR4316739,
MR4679382}, 
only Vogelsang~\cite{MR923045} seems to have considered it in 
a wider weighted space $L^2_{-1/2}(\mathbb R^d)$ for $d=3$. 
Hence we might for the first time extend it to the optimally largest space,
which properly contains $L^2_{-1/2}(\mathbb R^d)$. 
Note that, to this end, the choice of our conjugate operator is essential, 
while the generator of dilations, adopted in the ordinary Mourre theory, cannot go beyond $L^2(\mathbb R^d)$. 
The second result is the local \textit{LAP bounds}, or locally uniform resolvent estimates 
as operators between certain Besov-type spaces, which we also call the Agmon-H\"ormander spaces. 
We obtain them with a long-range electric and mass perturbation in a general dimension. 
This subject also has a long history \cite{MR320547,MR923045,MR1162140,MR1846785}, 
and recent interest is more in high energy, threshold or global estimates 
\cite{MR1249220,
MR1937841,
MR1743451,
MR2718928,
MR3526232,
MR3793450,
MR3816195,
MR3933410, 
MR4601596}. 
Our result is still local, however 
adoption of the Agmon-H\"ormander spaces would be new for the Dirac operator of such generality.  
Finally, with an additional massless assumption, we verify \textit{Sommerfeld's uniqueness}
for a solution to the associated Helmholtz equation. 
For that we present two types of radiation conditions, the \textit{analytic} and the \textit{algebraic} ones. 
The former is a natural analogue from the Schr\"odinger theory, 
extending a result by Pladdy--Sait{\=o}--Umeda~\cite{MR1625972} to long-range perturbations in the massless case. 
On the other hand, as for the latter, 
we construct an explicit orthogonal projections onto the outgoing and the incoming subspaces of $\mathbb C^n$. 
This extends a part of results by Kravchenko--Castillo P.~\cite{MR1949503} and 
Marmolejo-Olea--P\'erez-Esteva~\cite{MR3072343}. 
Let us remark that also Vogelsang~\cite{MR923045} seems to have discussed it in an indirect manner. 

Whereas these results have their own value, 
they would also serve as fundamental ingredients for the stationary scattering theory, 
as in the Schr\"odinger case~\cite{MR4398220,MR4874313}. 
So far, while the time-dependent scattering theory for the Dirac operator has been intensively studied 
\cite{MR711298,MR921790,MR1743451,MR2134232,MR2134850,MR4072582}, 
the stationary theory seems to have been less developed, and done mainly for 
the short-range perturbations~\cite{MR1162140,MR407471}. 
Though G\^atel--Yafaev~\cite{MR1846785} took stationary approach to the long-range case, 
their wave operators are formulated in a time-dependent manner. 
Hopefully, we could discuss elsewhere the \textit{purely} stationary scattering theory 
for long-range perturbations based on the results of the paper.

\subsubsection{Long-range perturbations}

To state precise assumptions on $q$, let us introduce a \textit{modified radius} and 
the associated radial derivative. 
Choose $\chi\in C^\infty(\mathbb{R})$ such that 
\begin{align}
\chi(t)
=\left\{\begin{array}{ll}
1 &\mbox{ for } t \le 1, \\
0 &\mbox{ for } t \ge 2,
\end{array}
\right.
\quad
\chi'\le 0,
\label{eq:14.1.7.23.24}
\end{align}
and define $f\in C^\infty(\mathbb R^d)$ as 
\[
f(x)=
\chi(|x|)+|x|(1-\chi(|x|))\ \ \text{for }x\in\mathbb R^d.
\]
Obviously, $f$ coincides with $r=|x|$ for $|x|\ge 2$, but the values for $|x|<2$ are modified so that 
$f$ is uniformly positive on the whole $\mathbb R^d$. 
Thus, in particular, we may use $f$ as a weight function replacing the standard one $\langle x\rangle=(1+x^2)^{1/2}$. 
In addition, we define 
\begin{equation}
\partial_f=(\partial_jf)\partial_j
,
\label{25710}
\end{equation}
which is non-singular and globally defined on $\mathbb R^d$. 
We denote the set of all $n\times n$ Hermitian matrices by $\mathbb C^{n\times n}_{\mathrm{Her}}$,
and its norm, arbitrarily fixed one, by $|\cdot|$.

\begin{assumption}\label{2571017}
The perturbation $q$ splits as 
\begin{align*}
q=q_0+q_1+q_2;\quad q_0,q_1\in (C^1\cap L^\infty)(\mathbb R^d;\mathbb C^{n\times n}_{\mathrm{Her}}),\ \  
q_2\in L^\infty(\mathbb R^d;\mathbb C^{n\times n}_{\mathrm{Her}}),
\end{align*}
such that the identities 
\begin{equation}
\widetilde q_0:=\alpha_1q_0\alpha_1=\alpha_2q_0\alpha_2=\dots=\alpha_dq_0\alpha_d 
\label{250719}
\end{equation}
hold on $\mathbb R^d$, and that for some $\rho\in(0,1]$ and $C>0$
\begin{align}\label{eq:60}
|q_1|\le Cf^{-(1+\rho)/2},\quad 
|\partial_1 q_0|+\dots+|\partial_d q_0|+|\partial_f q_1|+|q_2|\le Cf^{-1-\rho}
\end{align}
hold on $\mathbb R^d$. 
\end{assumption}
\begin{remarks}\label{251003}
\begin{enumerate}
\item 
The long-range part $q_0$ may be considered as a sum of electric potential and mass
possibly with non-zero limit at spatial infinity. 
To see this, consider the \textit{ordinary} Dirac equation with electromagnetic potential 
$A=(V,-A_1,\dots,-A_d)$ and mass $m$: 
\[
\partial_t\psi
=
-\mathrm i(\alpha_jp_j+\beta m+V-\alpha_jA_j)\psi
,
\]
where $\beta\in \mathbb C^{n\times n}_{\mathrm{Her}}$ satisfies the anti-commutation relations
\begin{equation}
\{\alpha_j,\beta\}=0\ \  \text{for }j=1,\dots,d
,\quad 
\beta^2=1
.
\label{251025}
\end{equation}
Our model covers this case with $q=\beta m+V-\alpha_jA_j$,
and $q_0=V+\beta m$ in fact satisfies \eqref{250719} with $\widetilde q_0=V-\beta m$.

\item 
The identity \eqref{250719} can be relaxed to admit short-range errors.

\item
We can further add local singularities, as long as $H$ with domain $C^\infty_{\mathrm c}(\mathbb R^d;\mathbb C^n)$ 
is essentially self-adjoint on $\mathcal H=L^2(\mathbb R^d;\mathbb C^n)$, and the unique continuation property is available. 
Note that the essential self-adjointness is required only for justification of doing and undoing commutators,  
as extensions from $C^\infty_{\mathrm c}(\mathbb R^d;\mathbb C^n)$. 
Such justification is not discussed in this paper, and we refer to the previous work \cite{MR4062329}. 
\end{enumerate}
\end{remarks}

\subsubsection{Agmon--H\"ormander spaces}

All of  our main results are stated in terms of the \textit{Agmon--H\"ormander spaces}, 
or the \textit{Besov spaces} associated with the multiplication operator $f$ on $\mathcal H$.
See also a paper by Jensen--Perry~\cite{MR789384}. 
Here let us recall them. 
Let $F(S)$ be the characteristic function of a given  subset $S\subset \mathbb R^d$,
and set
\begin{equation}
F_\nu=F\bigl(\{ x\in \mathbb R^d;\ 2^{\nu}\le f<2^{\nu+1}\} \bigr)
\quad \text{for }\nu\in\mathbb N_0=\{0\}\cup \mathbb N.
\label{250729}
\end{equation}
Then we define function spaces $\mathcal B$, $\mathcal B^*$ and $\mathcal B^*_0$ as 
\begin{align*}
\mathcal B&=\bigl\{\psi\in L^2_{\mathrm{loc}}(\mathbb R^d;\mathbb C^n);\ \|\psi\|_{\mathcal B}<\infty\bigr\}
,\quad 
\|\psi\|_{\mathcal B}=\sum_{\nu\in\mathbb N_0} 2^{\nu/2}\|F_\nu\psi\|_{{\mathcal H}}
,\\
\mathcal B^*&=\bigl\{\psi\in L^2_{\mathrm{loc}}(\mathbb R^d;\mathbb C^n);\ \|\psi\|_{\mathcal B^*}<\infty\bigr\}
,\quad 
\|\psi\|_{\mathcal B^*}=\sup_{\nu\in\mathbb N_0}2^{-\nu/2}\|F_\nu\psi\|_{{\mathcal H}},
\\
\mathcal B^*_0
&=
\Bigl\{\psi\in \mathcal B^*;\ \lim_{\nu\to\infty}2^{-\nu/2}\|F_\nu\psi\|_{{\mathcal H}}=0\Bigr\},
\end{align*}
respectively. 
Note that, if we denote the \textit{weighted $L^2$ spaces} by 
\[
L_s^2=f^{-s}L^2(\mathbb R^d;\mathbb C^n)\ \ \text{for }s\in\mathbb R 
,
\]
then for any $s>1/2$ we have strict inclusions 
\begin{align*}
 L^2_s\subsetneq \mathcal B\subsetneq L^2_{1/2}
\subsetneq \mathcal H
\subsetneq L^2_{-1/2}\subsetneq \mathcal B^*_0\subsetneq \mathcal B^*\subsetneq L^2_{-s}.
\end{align*}

\subsection{Main results}

\subsubsection{Rellich's theorem}

Our first main theorem is \textit{Rellich's theorem}. 
We set 
\begin{align*}
m_+&=\liminf_{R\to\infty}\{\lambda\in\mathbb R;\ \lambda \ge q_0(x)\ \text{for all }|x|\ge R\} 
,\\
m_-&=\limsup_{R\to\infty}\{\lambda\in\mathbb R;\ \lambda \le q_0(x)\ \text{for all }|x|\ge R\} 
,
\end{align*}
where the inequalities involving $\lambda$ and $q_0(x)$ are those for matrices, 
or quadratic forms on $\mathbb C^n$. 

\begin{theorem}\label{250717}
Suppose Assumption~\ref{2571017}.
If $\phi\in  \mathcal B^*_0$ and $\lambda\in \mathbb R\setminus [m_-,m_+]$ satisfy
\begin{align}
(H-\lambda)\phi=0\ \ \text{in the distributional sense,}
\label{25071112}
\end{align}
then $\phi=0$ as a function on $\mathbb R^d$.
In particular, 
the self-adjoint realization of $H$ has no eigenvalues outside $[m_-,m_+]$,
i.e., $\sigma_{\mathrm{pp}}(H)\setminus [m_-,m_+]=\emptyset$.
\end{theorem}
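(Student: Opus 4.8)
\emph{Step 1 (reduction).} The plan is an elementary commutator (virial) argument in the spirit of~\cite{MR4062329}. Replacing $H$ by $-H$ and $\lambda$ by $-\lambda$ preserves the anti-commutation relations~\eqref{25071914} and interchanges $m_+$ with $-m_-$, so it suffices to treat $\lambda>m_+$. Then fix $R_0,\varepsilon>0$ with $q_0(x)\le\lambda-2\varepsilon$, as forms on $\mathbb C^n$, for all $|x|\ge R_0$, so the ``effective gap'' $\lambda-q_0$ is uniformly positive near infinity. Since $H$ is first-order elliptic, $\phi\in H^1_{\mathrm{loc}}$, and all commutator computations below are to be justified by truncating with $\chi(f/R)$ and letting $R\to\infty$, the hypothesis $\phi\in\mathcal B^*_0$ being exactly what makes the boundary contributions at spatial infinity vanish in the limit; I would not carry this out in detail, referring to~\cite{MR4062329} (cf.\ Remarks~\ref{251003}(3)).

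\emph{Step 2 (the commutator identity).} Take as conjugate operator $A=\tfrac12\bigl((\partial_jf)p_j+p_j(\partial_jf)\bigr)$, the symmetrized generator of radial translations for the modified radius $f$, and for a nonnegative nondecreasing weight $\Theta=\Theta(f)$ with bounded regularizations $\Theta_R\uparrow\Theta$, set $B_R=\tfrac12(\Theta_RA+A\Theta_R)$; since $(H-\lambda)\phi=0$ one has the virial identity $\langle\phi,\mathrm i[H,B_R]\phi\rangle=0$. The leading part of $\mathrm i[H,B_R]$ comes from $\mathrm i[H,\Theta_R]=\Theta_R'\,\alpha_k(\partial_kf)$ paired with $A$; this is only first order in $p$ and carries no sign, and the mechanism for positivity is to feed in the equation. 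Left-multiplying~\eqref{25071112} by $\alpha_k(\partial_kf)$ and using $\bigl(\alpha_j(\partial_jf)\bigr)^2=|\nabla f|^2$ (equal to $1$ for $|x|\ge2$) from~\eqref{25071914}, one obtains $(\partial_jf)p_j\phi=\alpha_k(\partial_kf)(\lambda-q)\phi-S\phi$, with $S=\tfrac12(\partial_kf)[\alpha_k,\alpha_j]p_j$ the angular first-order remainder. After this substitution the virial identity reads, schematically,
\[
0=2\langle\phi,\Theta'(\lambda-q_0)\phi\rangle+2\,\mathrm{Re}\,\langle\phi,(\Theta/f-\Theta')\,\alpha_k(\partial_kf)S\phi\rangle+(\text{errors}),
\]
the errors being $-2\langle\phi,\Theta'(q_1+q_2)\phi\rangle$ together with terms carrying a factor $\partial_fq_0$ or $\partial_fq_1=O(f^{-1-\rho})$, and, for the merely bounded $q_2$, a term made lower order by moving $\partial_f$ off $q_2$ through an integration by parts; by~\eqref{eq:60} and a weighted Cauchy--Schwarz/elliptic estimate these add up to an absorbable fraction of $\varepsilon\int_{|x|\ge R_0}\Theta'|\phi|^2\,\mathrm dx$ plus a finite constant localized in $|x|\le R_0$. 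The assumption~\eqref{250719} enters precisely here: it identifies $\alpha_k(\partial_kf)\,q_0\,\alpha_l(\partial_lf)$ with $\widetilde q_0$, which has the same spectral bounds as $q_0$, so that $2\langle\phi,\Theta'(\lambda-q_0)\phi\rangle\ge4\varepsilon\int_{|x|\ge R_0}\Theta'|\phi|^2\,\mathrm dx-C$ retains sign and size.

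\emph{Step 3 (extracting positivity and bootstrapping).} For $\Theta=f$ one has $\Theta'\equiv1\equiv\Theta/f$ on $|x|\ge2$, so the angular term cancels outright and the virial identity gives $4\varepsilon\int_{|x|\ge R_0}|\phi|^2\,\mathrm dx\le C(R_0)$, that is, the a~priori improvement $\phi\in\mathcal B^*_0\Rightarrow\phi\in\mathcal H$ with a quantitative bound. Knowing $\phi\in\mathcal H$, I would re-run the estimate with the larger admissible weights $\Theta=f^{2\theta}$ and bootstrap in $\theta$ to get $\phi\in L^2_s$ for every $s$; the angular term no longer cancels, but rewriting $S\phi=\alpha_k(\partial_kf)(\lambda-q)\phi-(\partial_jf)p_j\phi$ and bounding the weighted radial momentum by a weighted elliptic estimate, one dominates it by Cauchy--Schwarz against the main term, the needed smallness again coming from the strict gap and from~\eqref{eq:60}. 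A final Froese--Herbst-type iteration with $\Theta=\mathrm e^{2\kappa f}$ then yields $\mathrm e^{\kappa f}\phi\in\mathcal H$ for all $\kappa>0$, i.e.\ faster-than-exponential decay.

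\emph{Step 4 (conclusion and the main obstacle).} A distributional solution of the first-order elliptic system $(H-\lambda)\phi=0$ with the coefficient regularity of Assumption~\ref{2571017} that decays faster than any exponential vanishes outside a compact set; the weak unique continuation property available under Assumption~\ref{2571017} (Remarks~\ref{251003}(3)) then forces $\phi\equiv0$ on $\mathbb R^d$. Finally, an eigenfunction of the self-adjoint realization of $H$ for $\lambda\in\mathbb R\setminus[m_-,m_+]$ lies in $\mathcal H\subset\mathcal B^*_0$, hence is $0$, so $\sigma_{\mathrm{pp}}(H)\setminus[m_-,m_+]=\emptyset$. The main obstacle is Step~3: unlike the Schr\"odinger case, where $\mathrm i[-\Delta,A]$ already carries the manifestly nonnegative leading term $f^{-1}|p_\perp|^2$, here the commutator is first order and signless, so everything hinges on manufacturing \emph{weighted} positivity out of it by substituting the Dirac equation and using both $\bigl(\alpha_j(\partial_jf)\bigr)^2=|\nabla f|^2$ and the definite sign of $\lambda-q_0$ off $[m_-,m_+]$ — and then on proving that the leftover angular first-order terms are genuinely subordinate, which is what dictates the admissible radial weights $\Theta$.
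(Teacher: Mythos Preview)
Your overall architecture---virial estimate to gain decay, bootstrap to super-exponential decay, then unique continuation---matches the paper's three-proposition scheme. The gap is at the very first rung of your bootstrap, the claimed improvement $\mathcal B^*_0\Rightarrow\mathcal H$ via $\Theta=f$.

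With the unbounded weight $\Theta=f$ you must regularize, say $\Theta_R=f\chi(f/R)$; then $\Theta_R/f-\Theta_R'=-(f/R)\chi'(f/R)$ is supported on the annulus $R\le f\le 2R$ and of size $O(1)$, not $O(f^{-1})$. The resulting boundary contribution to the virial identity (both from this coefficient and from the $(f/R)\chi'(f/R)$ piece of $\Theta_R'$ in the main term) is of order $\int_{R\le f\le 2R}|\phi|^2$, and the elliptic estimate only converts the derivative pieces into the same quantity. For $\phi\in\mathcal B^*_0$ this is $o(R)$, \emph{not} $o(1)$; the identity therefore yields nothing better than $\int_{f\le R}|\phi|^2\le C+o(R)$, which is no improvement over the hypothesis. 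Your Step~1 assertion that ``$\phi\in\mathcal B^*_0$ is exactly what makes the boundary contributions vanish'' is simply false for this choice of weight.

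The paper's remedy is to skip the polynomial stages entirely and pass directly from $\mathcal B^*_0$ to exponential decay with $\Theta=\chi_{a,b}\mathrm e^{\theta}$, where $\theta=2\mu\int_0^f(1+\tau/2^\nu)^{-1-\delta}\,\mathrm d\tau$ is a Yosida-type approximation of $2\mu f$. For fixed $\nu$ this weight is \emph{bounded}, so the boundary term from $\chi_b'$ keeps the crucial extra factor $f^{-1}$ and scales like $2^{-b}\|F_b\phi\|^2\to0$. The positivity extracted is not $\Theta'$ but $c\min\{f^{-1},\theta'\}\Theta$, obtained by a Cauchy--Schwarz with the \emph{specific} matrix weight $(\lambda-q_0)^{-1}$: this bounds the angular term with constant exactly $\tfrac14$, so that $\tfrac12(\theta'+f^{-1})-\tfrac14|f^{-1}-\theta'|-\tfrac14|f^{-1}-\theta'|=\min\{f^{-1},\theta'\}$ survives with a positive coefficient. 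That computation is precisely where \eqref{250719} is used---to commute each $\alpha_l$ through $(\lambda-q_0)^{-1}$---and your sketch of the angular term (``dominates it by Cauchy--Schwarz against the main term'') does not show that the constants close.
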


\begin{remarks}
\begin{enumerate}
\item
The function space $\mathcal B^*_0$ is optimal since 
we can in fact construct a non-trivial solution to \eqref{25071112} in $\mathcal B^*$
by using a WKB-type approximation and Corollary~\ref{cor:Sommerfeld-unique-result} below, 
see, e.g., \cite[Proposition~4.18]{MR4874313}. 
So far the largest space where the generalized eigenfunctions are absent 
is $L^2_{-1/2}(\mathbb R^d)$ for $d=3$ due to Vogelsang~\cite{MR923045}, as far as the authors are aware of.

\item
As remarked in a useful survey by Ito--Yamada \cite{MR4098339}, 
if $q$ is smooth and purely electric, we can reduce it to the Schr\"odinger operator. 
In fact, apply $\alpha^jp_j$ to \eqref{25071112} 
and use the identity \eqref{25071112} itself, and then we can deduce 
\begin{equation}
\bigl(p_jp_j+[\alpha_jp_j,q]-(\lambda-q)^2\bigr)\phi=0
.
\label{25082523}
\end{equation}
However, in general case it is not straightforward how to control $[\alpha_jp_j,q]$ since $\alpha_j$ and $q$ do not commute. 
We shall not go through this expression, but directly study the identity \eqref{25071112}.
\end{enumerate}
\end{remarks}

\subsubsection{LAP bounds}

We next discuss the \textit{LAP} (\textit{Limiting Absorption Principle}) \textit{bounds}
for the resolvent
\begin{align*}
R(z)=(H-z)^{-1}\in\mathcal L(\mathcal H)\ \ \text{for }z\in\mathbb C\setminus \mathbb R
.
\end{align*}
For any compact interval $I\subset \mathbb R\setminus [m_-,m_+]$ let us denote
\begin{align}
I_\pm=\bigl\{z=\lambda+\mathrm i\Gamma\in \mathbb C;\ \lambda\in I,\ 0<\pm \sigma \Gamma\le 1\bigr\}
,
\label{25100422}
\end{align}
respectively, where $\sigma=+$ if $I\subset (m_+,\infty)$, and $\sigma=-$ if $I\subset (-\infty,m_-)$.

\begin{theorem}\label{thm:12.7.2.7.9}
Suppose Assumption~\ref{2571017}, 
and let $I\subset \mathbb R\setminus[m_-,m_+]$ be a compact interval.
Then there exists $C>0$ such that 
for any $\phi=R(z)\psi$ with $z\in I_\pm$ and $\psi\in \mathcal B$
\begin{equation*}
\|\phi\|_{\mathcal B^*}
+\|p_1\phi\|_{\mathcal B^*}
+\dots
+\|p_d\phi\|_{\mathcal B^*}
\le C\|\psi\|_{\mathcal B}.
\end{equation*}
In particular, 
the self-adjoint realization of $H$ has no singular continuous spectrum outside $[m_-,m_+]$,
i.e., $\sigma_{\mathrm{sc}}(H)\setminus [m_-,m_+]=\emptyset$.
\end{theorem}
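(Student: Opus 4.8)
The plan is to establish Theorem~\ref{thm:12.7.2.7.9} by a commutator (Mourre-type) argument using the generator of radial translations
\[
A=\tfrac12\bigl(p_f+p_f^*\bigr),\qquad p_f=(\partial_jf)p_j,
\]
as conjugate operator, exactly in the spirit of Ito--Skibsted~\cite{MR4062329}. First I would fix $\phi=R(z)\psi$ with $z=\lambda+\mathrm i\Gamma\in I_\pm$ and $\psi\in\mathcal B$, and — after reducing to $C^\infty_{\mathrm c}$ vectors via essential self-adjointness (cf.\ Remark~\ref{251003}) — form the quadratic form identity obtained by pairing $(H-z)\phi=\psi$ against $\theta A\phi$ for suitable real weight functions $\theta=\theta(f)$ (a slowly varying, eventually constant or logarithmically growing cut-off adapted to the dyadic scales defining $\mathcal B^*$). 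Taking imaginary and real parts, the commutator $\mathrm i[H,\theta A]$ must be shown to dominate, modulo errors controlled by $\|\psi\|_{\mathcal B}$ and by lower-order terms that can be absorbed, the full quantity $\|\phi\|_{\mathcal B^*}^2+\sum_j\|p_j\phi\|_{\mathcal B^*}^2$. The key algebraic input is that $\mathrm i[\alpha_jp_j,A]$ reproduces (to leading order) the \emph{transversal} part of the momentum, i.e.\ $\sum_j p_j^*\sigma_{jk}p_k$ with $\sigma_{jk}=\delta_{jk}-(\partial_jf)(\partial_kf)$ the angular projection, which is nonnegative; the anti-commutation relations \eqref{25071914} are what make this computation go through without any pseudodifferential machinery. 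The radial part of the momentum is then recovered separately from the equation itself, using that $\lambda\notin[m_-,m_+]$ guarantees a spectral gap: schematically $p_f^2 \sim (\lambda-q_0)^2 - (\text{transversal})$ with $(\lambda-q_0)^2$ bounded below by a positive constant for $f$ large, which is precisely where the hypothesis $I\subset\mathbb R\setminus[m_-,m_+]$ enters.

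The key steps, in order, would be: (i) record the commutator identities $\mathrm i[H,f]$, $\mathrm i[H,A]$, and more generally $\mathrm i[H,\theta A]$ and $\mathrm i[H,\theta f]$, isolating the principal (nonnegative transversal) term, the error terms coming from $q_0,q_1,q_2$ and from derivatives of $\theta$, and the terms proportional to $\Gamma\,\mathrm{Im}$ that have a favorable sign for $z\in I_\pm$ (here the sign convention $\sigma\Gamma>0$ is used); (ii) a ``first a priori bound'' step proving $\phi\in L^2_{-1/2}$-type control and $p_f\phi,\ (\text{transversal }p)\phi$ in the same class, by choosing $\theta$ growing like $f^{\varepsilon}$ or $\log f$ and using the Cauchy--Schwarz inequality to absorb cross terms — this is essentially Theorem~\ref{250717}'s machinery run with an inhomogeneity; (iii) a ``bootstrap to Besov'' step, replacing the single weight by the dyadic partition $\{F_\nu\}$ and summing, which upgrades $L^2_{-1/2}$ bounds to the sharp $\mathcal B^*$ bound and simultaneously feeds $\|\psi\|_{\mathcal B}$ on the right-hand side with the correct (non-squared, after dividing) dependence; (iv) deducing the momentum bounds $\|p_j\phi\|_{\mathcal B^*}$ by combining the transversal estimate from the commutator with the radial estimate from the equation, and finally (v) the standard consequence $\sigma_{\mathrm{sc}}(H)\setminus[m_-,m_+]=\emptyset$ via the limiting absorption principle (uniform boundedness of $R(z)$ between $\mathcal B$ and $\mathcal B^*$ up to the real axis implies absolute continuity of the spectral measure on $I$, by Stone's formula applied to $\mathcal B$-vectors).

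The main obstacle I anticipate is exactly the one flagged in the introduction: the commutator $\mathrm i[H,A]$ is only a \emph{first-order} differential operator and has no definite sign — the transversal term $\sum p_j^*\sigma_{jk}p_k$ is nonnegative but the radial contribution and the potential-derivative contributions are not. So the crux is to extract \emph{weighted positivity}: one must choose the weight functions $\theta$ (and the profiles hidden inside $A$, or an auxiliary skew term added to $A$) so that, after integration by parts, the dangerous radial and long-range terms are converted into manifestly controllable quantities — the long-range hypotheses $|\partial q_0|\le Cf^{-1-\rho}$ and $|q_1|\le Cf^{-(1+\rho)/2}$, $|\partial_f q_1|\le Cf^{-1-\rho}$ are calibrated precisely so that these errors are integrable against the dyadic weights and can be absorbed. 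Handling the matrix non-commutativity of $\alpha_j$ and $q$ in these error terms — rather than reducing to a scalar Schr\"odinger equation as in \eqref{25082523} — is the technical novelty, and I expect the bookkeeping of which pieces are absorbed on the left versus bounded by $\|\psi\|_{\mathcal B}$ on the right to be the most delicate part of the argument. The uniformity of the constant $C$ over $z\in I_\pm$ (not just for a single $z$) follows because every estimate is proved with constants depending only on $I$, $\rho$, and the constants in Assumption~\ref{2571017}.
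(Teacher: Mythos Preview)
Your proposal has a genuine structural gap: you are attempting a \emph{direct} proof, but the paper's argument is \emph{indirect}, and the reason is intrinsic to the Dirac setting. The commutator estimate one actually obtains (the paper's Lemma~\ref{lem:14.10.4.1.17ffaa}) is
\[
\sigma\mathop{\mathrm{Im}}\bigl(A_\Theta(H-z)\bigr)\ge c\,\theta'-C\chi_b^2\theta-\mathop{\mathrm{Re}}(\gamma(H-z))-C(H-z)^*(H-z),
\]
i.e.\ positivity of the zeroth-order weight $\theta'$ \emph{modulo a compactly supported error} $\chi_b^2\theta$. Taking expectations on $\phi=R(z)\psi$ this yields only
\[
\|\theta'^{1/2}\phi\|^2\le C\bigl(\|\psi\|_{\mathcal B}^2+\|\chi_b\phi\|^2\bigr),
\]
with the localized term $\|\chi_b\phi\|^2$ unremoved. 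The paper eliminates it by contradiction: assuming the LAP bound fails, one normalizes $\|\phi_k\|_{\mathcal B^*}=1$, $\|\psi_k\|_{\mathcal B}\to0$, extracts a limit $\phi$ by local compactness, checks $(H-\lambda)\phi=0$ and $\phi\in\mathcal B_0^*$ from the displayed inequality, and invokes Rellich's theorem (Theorem~\ref{250717}) to force $\phi=0$, contradicting the normalization. Your outline never produces this compact remainder and never calls Rellich's theorem, so it is not clear how you would close the estimate; ``Theorem~\ref{250717}'s machinery run with an inhomogeneity'' is not the same as using Theorem~\ref{250717} itself as a black box in a compactness-contradiction argument.

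There is also a secondary slip: you write that $\mathrm i[\alpha_jp_j,A]$ reproduces the second-order transversal form $p_j^*\ell_{jk}p_k$, but for Dirac this commutator is genuinely first order, namely $\mathop{\mathrm{Re}}(\alpha_f\Theta'p_f)+\mathop{\mathrm{Re}}(\alpha_jf^{-1}\Theta\ell_{jk}p_k)$ (Lemma~\ref{25071822}). You acknowledge this later, but your step~(ii) still seems to rely on the Schr\"odinger intuition. The actual positivity extracted is the \emph{multiplication} operator $c\theta'$, obtained by combining the two first-order pieces via the $h_{jk}=(\partial_jf)(\partial_kf)-\ell_{jk}$ trick and Cauchy--Schwarz against $(\lambda-q_0)^{-1}$ (this is where $\lambda\notin[m_-,m_+]$ enters). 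The momentum bound $\|p_j\phi\|_{\mathcal B^*}\le C(\|\phi\|_{\mathcal B^*}+\|\psi\|_{\mathcal B^*})$ is established separately and elementarily from Lemma~\ref{2507172236}, not from the commutator.
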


\begin{remark}
For the proof of the LAP bounds it seems to have been typical to somehow reduce it to the (free) Schr\"odinger operator.
In fact, similarly to \eqref{25082523}, we can deduce from the identity 
$(H-z)\phi=\psi$ that 
\[
\bigl(p_jp_j+[\alpha_jp_j,q]-(z-q)^2\bigr)\phi
=
(\alpha_jp_j-q+z)\psi
.
\]
However, again, we provide a direct proof, 
employing an elementary commutator method very similar to that for Rellich's theorem.
Rellich's theorem itself also plays an essential role in contradiction argument in the proof of the LAP bounds.   
\end{remark}

\subsubsection{Radiation condition bounds and applications}

For further arguments we require the following massless assumption.

\begin{assumption}\label{2571017b}
In addition to Assumption~\ref{2571017}, there exists $C'\ge 0$ such that 
\[
|(q_0+\widetilde q_0)q_1-q_1(q_0+\widetilde q_0)|\le C'f^{-1-\rho}
,\quad 
|q_0-\widetilde q_0|\le C'f^{-\rho}
\]
hold on $\mathbb R^d$.
\end{assumption}
\begin{remark}
In terms of the ordinary model given in Remark~\ref{251003}, 
$q_0=V+\beta m$ satisfies the former bound as it is.
However, the latter requires $m$ to be of order $f^{-\rho}$, 
and this is exactly the massless assumption at spatial infinity.
\end{remark}

The following \textit{radiation condition bounds} describe radial oscillatory behavior of the resolvent 
at spatial infinity. Using $\sigma$ defined right after \eqref{25100422}, let us set 
\begin{equation}
p_f=(\partial_jf)p_j
,\quad 
\alpha_f=(\partial_jf)\alpha_j
,\quad 
\pi_{\pm} =\tfrac12(1\pm\sigma\alpha_f),
\label{251004}
\end{equation}
respectively, cf.\ \eqref{25710}. 
Due to \eqref{25071914}, 
$\pi_{\pm}$  for $|x|\ge 2$ are orthogonal projections on $\mathbb C^n$.
They represent the projections 
onto the subspaces of outgoing/incoming components, respectively, as can be seen in the following.

\begin{theorem}\label{25081317}
Suppose Assumption~\ref{2571017b},
let $I\subset \mathbb R\setminus[m_-,m_+]$ be a compact interval, and let $\kappa\in (0,\rho/2)$.
Then there exists $C>0$ such that 
for any $\phi=R(z)\psi$ with $z\in I_\pm$ and $\psi\in L^2_{1/2+\kappa}$
\begin{equation*}
\|\pi_{\mp}\phi\|_{L^2_{-1/2+\kappa}}
+\|(p_f-\alpha_f(z-q_0))\phi\|_{L^2_{-1/2+\kappa}}
\le C\|\psi\|_{L^2_{1/2+\kappa}}
,
\end{equation*}
respectively. 
\end{theorem}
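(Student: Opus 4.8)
The plan is to prove Theorem~\ref{25081317} by a commutator argument directly on the resolvent identity $(H-z)\phi=\psi$, closely paralleling the proof of the LAP bounds in Theorem~\ref{thm:12.7.2.7.9} but now testing against a more delicate conjugate-type operator that is sensitive to the \emph{sign} of the radial oscillation. Concretely, I would fix $z\in I_+$ (say $\sigma=+$; the case $I_-$ is symmetric), and consider the antisymmetric operator associated with the vector field $\partial_f$ together with a regularized weight $\theta=f^{2\kappa}\eta_R$, where $\eta_R$ is a smooth cut-off localizing to $f\ge R$ that is removed in the limit $R\to\infty$. The key object is the ``radiation observable'' $B=p_f-\alpha_f(z-q_0)$, whose $L^2_{-1/2+\kappa}$-boundedness is exactly what we want; the projection bound for $\pi_{\mp}\phi$ should follow from the $B$-bound since on $|x|\ge 2$ one has the algebraic relation $B=\alpha_f\bigl(\alpha_fp_f-(z-q_0)\bigr)$ and $\alpha_fp_f$ together with the Dirac equation controls $(z-q_0)\phi$ up to the transverse part, so that $\pi_{\mp}$ picks out precisely the ``bad'' half; here the massless Assumption~\ref{2571017b} is what makes $q_0+\widetilde q_0$ effectively commute with $\alpha_f$ modulo integrable errors, letting the cross terms close.

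First I would record the commutator computation: writing $H=\alpha_jp_j+q$ and using $\{\alpha_j,\alpha_k\}=2\delta_{jk}$, compute $\mathrm i[H,\theta]$ and the ``second commutator'' type quantity $2\,\mathrm{Re}\langle B\phi,\theta\,(\text{something})\,\phi\rangle$. The central identity should express, after integrating by parts, the square norm $\|\,\theta^{1/2}B\phi\,\|^2$ (up to constants) as a sum of: a boundary/positive principal term coming from $\partial_f\theta=2\kappa f^{2\kappa-1}+\dots$ which has a definite sign because $\kappa>0$; error terms involving $\partial_f q_0$, $q_1$, $q_2$ which by \eqref{eq:60} are $O(f^{-1-\rho})$, $O(f^{-(1+\rho)/2})$, $O(f^{-1-\rho})$ respectively and hence—crucially—are controlled by the a priori LAP bound $\phi\in\mathcal B^*$, $p_j\phi\in\mathcal B^*$ from Theorem~\ref{thm:12.7.2.7.9} since $\kappa<\rho/2$ gives the needed room $2\kappa-1-\rho<-1$; the commutator $[\alpha_f,q_0]$ and $[\alpha_f,q_1]$ terms, handled by the two inequalities of Assumption~\ref{2571017b} (the first kills the $q_1$-cross term, the second makes $\alpha_f(q_0-\widetilde q_0)\alpha_f$ small); and the source term $\langle\theta B\phi,(\dots)\psi\rangle$ which is bounded by $\|\psi\|_{L^2_{1/2+\kappa}}\|B\phi\|_{L^2_{-1/2+\kappa}}$ and absorbed. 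Then I would let $R\to\infty$, use monotone convergence on the positive term and dominated convergence (via the already-established $\mathcal B^*$ bounds) on the errors, and conclude $\|B\phi\|_{L^2_{-1/2+\kappa}}\le C\|\psi\|_{L^2_{1/2+\kappa}}$; finally deduce the $\pi_{\mp}\phi$ bound from the algebraic identity above plus the Dirac equation.

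The main obstacle I anticipate is extracting a genuinely positive principal term from what is only a first-order differential operator—exactly the difficulty the authors flag in the introduction. Unlike the Schrödinger case where $[-\Delta,\,\text{conjugate}]$ is second order and positive after localization in energy, here the naive commutator $\mathrm i[H,\theta]$ is $(\partial_f\theta)\alpha_f/|\partial f| + \text{l.o.t.}$, which is \emph{indefinite}. The resolution must be to commute not with $\theta$ alone but with $\theta$ times the projection-weighted quantity, i.e.\ to work with $\theta\,\pi_\mp$ or with $B^*\theta B$ directly, so that the ``square'' structure $\|\theta^{1/2}B\phi\|^2$ appears and its derivative along $\partial_f$ produces the good sign from $\partial_f(\theta)>0$. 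Getting the bookkeeping right so that all the cross terms between $p_f$, $\alpha_f$, $z-q_0$, and the weight $\theta$ either have the right sign or are dominated—in particular verifying that the ``spin'' commutators $[\alpha_j,\alpha_k]p_k$ arising from $\alpha_fp_f$ versus $p_j\alpha_j$ do not spoil positivity—will be the technical heart, and is where Assumption~\ref{2571017b} and the constraint $\kappa<\rho/2$ are used in an essential way. A secondary subtlety is the behavior near $|x|<2$ where $\pi_\pm$ need not be projections and $\partial f$ may vanish in directions; but since everything there is smooth and compactly supported it contributes only bounded error terms, harmless in the limiting argument.
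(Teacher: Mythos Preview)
Your overall strategy---weight $\Theta\sim f^{2\kappa}$, a commutator-type identity tested on $\phi=R(z)\psi$, errors absorbed via the LAP bound of Theorem~\ref{thm:12.7.2.7.9} under $\kappa<\rho/2$, with Assumption~\ref{2571017b} controlling the $\alpha_f$-commutators---matches the paper's approach. The paper packages this as an operator inequality (Proposition~\ref{25081322b}) that immediately gives both norms once one takes expectation on $\phi$ and invokes the LAP.

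There is, however, a genuine gap in your plan to deduce the $\pi_\mp$-bound \emph{a posteriori} from the $B$-bound ``algebraically plus the Dirac equation.'' The identity you have in mind is, up to $O(f^{-\rho})$ errors,
\[
B=(p_f\mp\sigma(z-q_0))+2(z-q_0)\pi_\mp,
\]
which says only that any \emph{two} of $\{B\phi,\ (p_f\mp\sigma(z-q_0))\phi,\ \pi_\mp\phi\}$ being small forces the third; knowing $B\phi$ alone does not give $\pi_\mp\phi$. Likewise, the Dirac equation $\alpha_fB\phi=(\widetilde q_0-q_0)\phi-(q_1+q_2)\phi+\psi-\alpha_j\ell_{jk}p_k\phi$ converts the $B$-bound into a bound on the \emph{transverse} derivatives $\alpha_j\ell_{jk}p_k\phi$, not on $\pi_\mp\phi$. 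So there is no closed algebraic route from $B$ to $\pi_\mp$.

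In the paper the two bounds are obtained \emph{together}, through a pair of coupled inequalities. Lemma~\ref{25100513} computes $\mathop{\mathrm{Im}}\bigl((\sigma p_f\mp(z-q_0))^*\Theta(H-z)\bigr)$ and extracts the positive term $c\,\pi_\mp^2 f^{-1+2\kappa}$ \emph{minus} an indefinite cross term $\sigma(\tfrac12-\kappa)\mathop{\mathrm{Re}}\bigl(f^{-1}\Theta(\alpha_fp_f-z+q_0)\bigr)$. Lemma~\ref{25100514} then shows that this very cross term, with the correct sign, is $\ge c\,B^*f^{-1+2\kappa}B-\text{errors}$; adding the two yields Proposition~\ref{25081322b}. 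Thus the $\pi_\mp$-positivity comes from its own commutator computation, not from $B$, and the cross term is what links them. Your suggestion to ``work with $B^*\theta B$ directly'' is essentially Lemma~\ref{25100514}; what you are missing is the companion computation (Lemma~\ref{25100513}) that produces $\pi_\mp^2$, and the recognition that the two must be combined. Note also that bounding $B^*\Theta B$ from above requires disposing of the transverse cross term $-2\mathop{\mathrm{Re}}(p_f^*\alpha_f\Xi\alpha_j\ell_{jk}p_k)$, which the paper handles in \eqref{250826223} by a nontrivial use of the anticommutation relations---this is the concrete place where your worry about ``spin commutators $[\alpha_j,\alpha_k]p_k$'' materializes.
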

\begin{remarks}
\begin{enumerate}
\item
The first term on the left-hand side corresponds to the \textit{algebraic} radiation condition 
due to Kravchenko--Castillo P.~\cite{MR1949503} and Marmolejo-Olea--P\'erez-Esteva~\cite{MR3072343}. 
While they discussed an exterior problem without electromagnetic potentials, 
we do it for long-range perturbations without obstacles under massless assumption. 
It would not be difficult to extend our result to the exterior case, even with some unbounded obstacles, 
following the setting of Ito--Skibsted~\cite{MR4062329}, but we shall not elaborate it in the present paper. 
On the other hand, the \textit{analytic} radiation condition, like the one by Pladdy--Sait{\=o}--Umeda~\cite{MR1625972},
can be obtained by combining the first and the second terms on the left-hand side,
see Corollary~\ref{cor:Sommerfeld-unique-result}. 
\item
The proof of Theorem~\ref{25081317} is more or less different from Ito--Skibsted~\cite{MR4062329}. 
\end{enumerate}
\end{remarks}

Then let us present applications of Theorems~\ref{250717}, 
\ref{thm:12.7.2.7.9} and \ref{25081317}. 
The first one is the \textit{LAP}, or existence of the limits of resolvent as $I_\pm\ni z\to \lambda\in I$. 
Denote by $\mathcal L(X,Y)$ the space of all bounded operators 
from a Banach space $X$ to another $Y$. 

\begin{corollary}\label{cor:Limiting-Absorption-Principle-Stark}
Suppose Assumption~\ref{2571017b},
let $I\subset\mathbb R\setminus[m_-,m_+]$ be a compact interval,
and let $s\in (1/2,(1+\rho)/2)$ and $\epsilon\in(0,s-1/2)$. 
Then there exists $C>0$ such that for any $z,w\in I_+$ or $z,w\in I_-$ 
\begin{equation}\label{eq:Holder-continuity}
\begin{split}
\|R(z)-R(w)\|_{\mathcal L(L^2_s,L^2_{-s})}
+\sum_{j=1}^d\|p_jR(z)-p_jR(w)\|_{\mathcal L(L^2_s,L^2_{-s})}
\le C|z-w|^\epsilon.
\end{split}
\end{equation}
In particular, $R(z), p_1R(z),\dots,p_dR(z)$ have uniform limits 
as $I_\pm \ni z\to\lambda\in I$ 
in the norm topology of $\mathcal L(L^2_s,L^2_{-s})$. 
In addition, if one denotes 
\begin{equation}\label{eq:uniform-limit-z-to-lambda}
\begin{split}
R_\pm (\lambda) = \lim_{I_\pm\ni z\to \lambda}R(z)
\ \ \text{in }\mathcal L(L^2_s,L^2_{-s}), 
\end{split}
\end{equation}
respectively, 
then $R_\pm (\lambda),
p_1R_\pm (\lambda),\dots,p_dR_\pm (\lambda)$ belong to 
$\mathcal L(\mathcal B,\mathcal B^*)$.
\end{corollary}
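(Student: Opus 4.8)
The plan is to derive everything from the a priori bounds of Theorems~\ref{thm:12.7.2.7.9} and \ref{25081317}, the resolvent identity, and an elementary contour integration; Theorem~\ref{250717} enters only implicitly, through those bounds. The uniform part is immediate: for $s>1/2$ the continuous inclusions $L^2_s\hookrightarrow\mathcal B$ and $\mathcal B^*\hookrightarrow L^2_{-s}$ turn Theorem~\ref{thm:12.7.2.7.9} into
\[
\|R(z)\|_{\mathcal L(L^2_s,L^2_{-s})}+\sum_{j=1}^d\|p_jR(z)\|_{\mathcal L(L^2_s,L^2_{-s})}\le C\qquad\text{for all }z\in I_\pm ,
\]
so \eqref{eq:Holder-continuity} is trivial when $|z-w|\ge 1$, and the whole content lies in the H\"older exponent for $|z-w|<1$.

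I would then reduce the H\"older estimate to the single bound
\[
\|R(z)^2\|_{\mathcal L(L^2_s,L^2_{-s})}+\sum_{j=1}^d\|p_jR(z)^2\|_{\mathcal L(L^2_s,L^2_{-s})}\le C\,|\mathrm{Im}\,z|^{\epsilon-1}\qquad\text{for }z\in I_\pm .
\]
Granting this, since $\tfrac{\mathrm{d}}{\mathrm{d}z}R(z)=R(z)^2$ on $\mathbb C\setminus\mathbb R$, I would integrate along a path from $w$ to $z$ inside $I_\pm$: if $\min\{|\mathrm{Im}\,z|,|\mathrm{Im}\,w|\}\ge|z-w|$, a straight segment already gives $C|z-w|^{\epsilon}$ since $|\mathrm{Im}\,\zeta|\ge|z-w|$ along it and $\epsilon-1<0$; otherwise I route through imaginary part $|z-w|$ via two short vertical segments, on each of which $\int_0^{|z-w|}t^{\epsilon-1}\,dt$ converges, joined by a horizontal segment of length $|z-w|$, again obtaining $C|z-w|^{\epsilon}$. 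The same path integral with $p_j$ prepended yields the remaining terms of \eqref{eq:Holder-continuity}, and hence the uniform limits \eqref{eq:uniform-limit-z-to-lambda} by completeness of $\mathcal L(L^2_s,L^2_{-s})$.

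The main obstacle is the bound on $R(z)^2$. Writing $\phi=R(z)\psi$ with $\psi\in L^2_s$, one has $R(z)^2\psi=R(z)\phi$; Theorem~\ref{thm:12.7.2.7.9} gives $\|\phi\|_{\mathcal B^*}\le C\|\psi\|_{L^2_s}$, and Theorem~\ref{25081317}---applicable since $L^2_s\subset L^2_{1/2+\kappa}$ whenever $\kappa\le s-\tfrac12$, with $s-\tfrac12<\rho/2$ by hypothesis---shows that $\phi$ satisfies the radiation condition with the same right-hand side. The difficulty is that $\phi$ lies only in $\mathcal B^*$, far too large a space for $R(z)$ to be applied by the a priori bound, so the naive composition in the resolvent identity fails. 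I would split $\phi=\phi_1+\phi_2$ at $f\approx|\mathrm{Im}\,z|^{-\theta}$ for a suitable $\theta\in(0,1)$: summing Besov shells, the truncated part obeys $\|\phi_1\|_{L^2_s}\lesssim|\mathrm{Im}\,z|^{-\theta(s+1/2)}\|\psi\|_{L^2_s}$, so $R(z)\phi_1$ (and $p_jR(z)\phi_1$) is controlled in $\mathcal B^*\hookrightarrow L^2_{-s}$ by Theorem~\ref{thm:12.7.2.7.9} again with the same power of $|\mathrm{Im}\,z|$; and the radiating tail $\phi_2=F(\{x:f>|\mathrm{Im}\,z|^{-\theta}\})\phi$ is handled by inserting the radiation condition of Theorem~\ref{25081317} into $R(z)\phi_2$ and integrating by parts, the non-zero imaginary part of $z$ cutting off the relevant secular growth near $f\approx|\mathrm{Im}\,z|^{-1}$. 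Choosing $\theta$ small and $\kappa$ close to $s-\tfrac12$ balances the two contributions against $|\mathrm{Im}\,z|^{\epsilon-1}$ for any $\epsilon\in(0,s-\tfrac12)$, the cap being precisely $\kappa<\rho/2$. This tail estimate, essentially a reprise of the commutator arguments behind Theorems~\ref{thm:12.7.2.7.9} and \ref{25081317}, is where the real work sits.

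Finally, for the $\mathcal L(\mathcal B,\mathcal B^*)$ statement I would take $\psi\in C^\infty_{\mathrm c}(\mathbb R^d;\mathbb C^n)\subset L^2_s$, which is dense in $\mathcal B$. Then $R(z)\psi\to R_\pm(\lambda)\psi$ and $p_jR(z)\psi\to p_jR_\pm(\lambda)\psi$ in $L^2_{-s}$, hence in $\mathcal H$ on each shell $\{x:2^\nu\le f<2^{\nu+1}\}$, so passing to the limit in $2^{-\nu/2}\|F_\nu\,\cdot\,\|_{\mathcal H}$ shell by shell and then taking the supremum over $\nu$ converts the bound of Theorem~\ref{thm:12.7.2.7.9} into $\|R_\pm(\lambda)\psi\|_{\mathcal B^*}+\sum_j\|p_jR_\pm(\lambda)\psi\|_{\mathcal B^*}\le C\|\psi\|_{\mathcal B}$. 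Density then extends $R_\pm(\lambda)$ and $p_jR_\pm(\lambda)$ to elements of $\mathcal L(\mathcal B,\mathcal B^*)$, consistent with \eqref{eq:uniform-limit-z-to-lambda}, which completes the proof.
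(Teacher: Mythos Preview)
Your route through the derivative bound $\|R(z)^2\|_{\mathcal L(L^2_s,L^2_{-s})}\lesssim|\mathrm{Im}\,z|^{\epsilon-1}$ and contour integration is in principle equivalent to the H\"older estimate, but the paper avoids it altogether by working with $R(z)-R(w)$ directly. The paper inserts a dyadic spatial cutoff $\chi_a$: the tails $R(\cdot)-\chi_aR(\cdot)\chi_a$ are $O(2^{-a\epsilon})$ straight from Theorem~\ref{thm:12.7.2.7.9}, and for the cut-off difference one uses the resolvent identity in the form $\chi_aR(z)\chi_a-\chi_aR(w)\chi_a=\mathrm i\chi_aR(z)\alpha_f\chi_{a+1}'R(w)\chi_a+(z-w)\chi_aR(z)\chi_{a+1}R(w)\chi_a$, then splits $\alpha_f$ (respectively the identity) into $\pi_+$ and $\pi_-$. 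Each resulting piece is a composition of a dual radiation bound for $R(z)$, a \emph{localized} weight $\chi_{a+1}'$ or $\chi_{a+1}$, and a radiation bound for $R(w)$; the localization to $f\approx 2^a$ is precisely what allows the two resolvent estimates to be composed, yielding $C2^{-a\epsilon}+C2^{a(1-\epsilon)}|z-w|$. Choosing $2^a\approx|z-w|^{-1}$ finishes.

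Your sketch of the $R(z)^2$ bound has a genuine gap at the tail $\phi_2=F(f>|\mathrm{Im}\,z|^{-\theta})R(z)\psi$. Theorem~\ref{25081317} improves only $\pi_\mp\phi_2$; the dominant piece $\pi_\pm\phi_2$ still lies merely in $\mathcal B^*$, and neither the dual radiation bound $R(z)\pi_\pm\colon L^2_{1/2-\kappa}\to L^2_{-1/2-\kappa}$ (its input space is far too small to contain $\pi_\pm\phi_2$) nor the crude $|\mathrm{Im}\,z|^{-1}$ bound on $\mathcal H$ (the tail is not in $\mathcal H$) controls $R(z)\pi_\pm\phi_2$. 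If one drops the radiation condition and just balances the two cutoff pieces, the optimum is only $\epsilon\le(s-\tfrac12)/(2s)$, strictly short of the claimed range $\epsilon<s-\tfrac12$. The missing mechanism is exactly the localized factor $\chi_{a+1}'$ in the paper's argument, which bridges the weight mismatch between the two resolvents; once you supply it you have essentially rewritten the paper's proof inside your $R(z)^2$ framework, and the contour detour is superfluous. Your final density argument for the $\mathcal L(\mathcal B,\mathcal B^*)$ extension is correct and matches the paper.
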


Theorem~\ref{25081317} immediately extends to $R_\pm (\lambda)$ as follows.

\begin{corollary}\label{cor:RC-bound-real}
Suppose Assumption~\ref{2571017b},
let $I\subset\mathbb R\setminus[m_-,m_+]$ be a compact interval,
and let $\kappa\in (0,\rho/2)$.
Then there exists $C>0$ such that 
for any $\phi=R_\pm (\lambda)\psi$ 
with $\lambda\in I$ and $\psi\in L^2_{1/2+\kappa}$ 
\begin{equation*}
\|\pi_{\mp}\phi\|_{L^2_{-1/2+\kappa}}
+\|(p_f-\alpha_f(\lambda-q_0))\phi\|_{L^2_{-1/2+\kappa}}
\le C\|\psi\|_{L^2_{1/2+\kappa}}
,
\end{equation*}
respectively. 
\end{corollary}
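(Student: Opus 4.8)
The plan is to obtain Corollary~\ref{cor:RC-bound-real} from Theorem~\ref{25081317} by passing to the boundary, $I_\pm\ni z\to\lambda$, using the resolvent convergence established in Corollary~\ref{cor:Limiting-Absorption-Principle-Stark}. Since $\kappa\in(0,\rho/2)$, the weight $s=1/2+\kappa$ lies in the admissible range $(1/2,(1+\rho)/2)$, so Corollary~\ref{cor:Limiting-Absorption-Principle-Stark} guarantees that $R(z)\to R_\pm(\lambda)$ and $p_jR(z)\to p_jR_\pm(\lambda)$ in $\mathcal L(L^2_{1/2+\kappa},L^2_{-1/2-\kappa})$ as $I_\pm\ni z\to\lambda$. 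Fix $\psi\in L^2_{1/2+\kappa}$, choose any sequence $I_\pm\ni z_n\to\lambda$, and set $\phi_n=R(z_n)\psi$, $\phi=R_\pm(\lambda)\psi$. Then $\phi_n\to\phi$ and $p_j\phi_n\to p_j\phi$ strongly in $L^2_{-1/2-\kappa}$, hence also in $L^2_{\mathrm{loc}}$.

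First I would treat the term $\pi_\mp\phi$. By Theorem~\ref{25081317} the sequence $\pi_\mp\phi_n$ is bounded in $L^2_{-1/2+\kappa}$ by $C\|\psi\|_{L^2_{1/2+\kappa}}$, uniformly in $n$. Moreover $\pi_\mp$ is multiplication by a bounded matrix-valued function (by \eqref{25071914} one has $\alpha_f^2=1$ for $|x|\ge2$, and $\pi_\mp$ is smooth and bounded on all of $\mathbb R^d$), so $\phi_n\to\phi$ in $L^2_{\mathrm{loc}}$ gives $\pi_\mp\phi_n\to\pi_\mp\phi$ in $L^2_{\mathrm{loc}}$. Hence, for every $\eta\in C^\infty_{\mathrm c}(\mathbb R^d;\mathbb C^n)$,
\[
|\langle \pi_\mp\phi,\eta\rangle|
=\lim_{n\to\infty}|\langle \pi_\mp\phi_n,\eta\rangle|
\le \Bigl(\sup_n\|\pi_\mp\phi_n\|_{L^2_{-1/2+\kappa}}\Bigr)\|\eta\|_{L^2_{1/2-\kappa}}
\le C\|\psi\|_{L^2_{1/2+\kappa}}\|\eta\|_{L^2_{1/2-\kappa}}.
\]
Since $C^\infty_{\mathrm c}$ is dense in $L^2_{1/2-\kappa}$ and $(L^2_{1/2-\kappa})^*=L^2_{-1/2+\kappa}$ under the $\mathcal H$-pairing, this shows $\pi_\mp\phi\in L^2_{-1/2+\kappa}$ with $\|\pi_\mp\phi\|_{L^2_{-1/2+\kappa}}\le C\|\psi\|_{L^2_{1/2+\kappa}}$.

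For the second term, set $g_n=(p_f-\alpha_f(z_n-q_0))\phi_n$ and $g=(p_f-\alpha_f(\lambda-q_0))\phi$, and write $g_n-g=(\partial_jf)p_j(\phi_n-\phi)-\alpha_f(z_n-\lambda)\phi_n-\alpha_f(\lambda-q_0)(\phi_n-\phi)$. Using that $\partial_jf$, $\alpha_f$ and $q_0$ (the last by Assumption~\ref{2571017}) are bounded, that $p_j\phi_n\to p_j\phi$ and $\phi_n\to\phi$ in $L^2_{-1/2-\kappa}$, and that $\{\phi_n\}$ is bounded there while $z_n-\lambda\to0$, one gets $g_n\to g$ in $L^2_{-1/2-\kappa}$, hence in $L^2_{\mathrm{loc}}$. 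Combined with the uniform bound $\|g_n\|_{L^2_{-1/2+\kappa}}\le C\|\psi\|_{L^2_{1/2+\kappa}}$ from Theorem~\ref{25081317}, the same testing-and-density argument yields $\|g\|_{L^2_{-1/2+\kappa}}\le C\|\psi\|_{L^2_{1/2+\kappa}}$. Summing the two estimates completes the proof.

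The only delicate point, which is mild, is the mismatch of weights: the a priori bounds of Theorem~\ref{25081317} live in $L^2_{-1/2+\kappa}$, whereas the resolvent convergence of Corollary~\ref{cor:Limiting-Absorption-Principle-Stark} is available only in the strictly larger space $L^2_{-1/2-\kappa}$. This is handled, as above, by transferring the bound through the weak (test-function) formulation rather than passing the $L^2_{-1/2+\kappa}$-norm directly through the limit; equivalently, one may extract a subsequence of $\{\pi_\mp\phi_n\}$ (resp.\ $\{g_n\}$) converging weakly in $L^2_{-1/2+\kappa}$ and identify its limit via the $L^2_{\mathrm{loc}}$-convergence, whence the norm bound follows by weak lower semicontinuity.
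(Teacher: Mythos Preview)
Your proof is correct and is precisely the limiting argument the paper has in mind: the paper's own proof is the single sentence ``The assertion is a direct consequence of Theorem~\ref{25081317} and Corollary~\ref{cor:Limiting-Absorption-Principle-Stark},'' and you have simply written out the standard weak-limit/lower-semicontinuity step that justifies this passage. The handling of the weight mismatch via testing against $C^\infty_{\mathrm c}$ (equivalently, weak compactness in $L^2_{-1/2+\kappa}$ plus identification of the limit through $L^2_{\mathrm{loc}}$-convergence) is the right way to make this rigorous.
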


The following \emph{Sommerfeld's uniqueness}
characterizes $R_\pm (\lambda)$ as solution operators 
to the Helmholtz equation with outgoing or incoming radiation conditions.

\begin{corollary}\label{cor:Sommerfeld-unique-result}
Suppose Assumption~\ref{2571017b}, and let $\lambda\in\mathbb R\setminus[m_-,m_+]$ and $\kappa\in (0,\rho/2)$. 
If $\psi\in L^2_{1/2+\kappa}$, then $\phi:=R_\pm (\lambda)\psi\in\mathcal B^*$, and it satisfies 
\begin{enumerate}
\item\label{item:18122818}
$(H-\lambda)\phi=\psi$ in the distributional sense,
\item\label{item:18122819}
$\pi_{\mp}\phi\in L^2_{-1/2+\kappa}$,
and $(p_f\mp\sigma(\lambda-q_0))\phi\in L^2_{-1/2+\kappa}$, 
\end{enumerate}
respectively. 
Conversely, if $\psi\in f^{-\kappa}\mathcal B$ and $\phi\in f^{\kappa}\mathcal B^*$ satisfy the above \ref{item:18122818} and 
\begin{enumerate}
\setcounter{enumi}{2}
\item\label{item:18122820}
$\pi_{\mp}\phi\in f^{-\kappa}\mathcal B_0^*$, 
or $(p_f\mp\sigma(\lambda-q_0))\phi\in f^{-\kappa}\mathcal B_0^*$, 
\end{enumerate}
then $\phi=R_\pm (\lambda)\psi$, respectively.
\end{corollary}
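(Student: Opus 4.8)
The plan is to prove Corollary~\ref{cor:Sommerfeld-unique-result} in two directions, reusing the machinery already assembled. For the forward direction, I would start by noting that for $\psi\in L^2_{1/2+\kappa}\subset\mathcal B$ the resolvent bound of Theorem~\ref{thm:12.7.2.7.9} gives $\phi=R(z)\psi\in\mathcal B^*$ with uniform bounds as $I_\pm\ni z\to\lambda$, and by Corollary~\ref{cor:Limiting-Absorption-Principle-Stark} these converge, so $\phi=R_\pm(\lambda)\psi\in\mathcal B^*$. Item~\ref{item:18122818} is then immediate: $(H-z)R(z)\psi=\psi$ passes to the limit in the distributional sense because $R(z)\psi\to\phi$ in $L^2_{-s}$ and $H$ is a fixed first-order differential operator with bounded-plus-controlled coefficients. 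For item~\ref{item:18122819}, I would apply Corollary~\ref{cor:RC-bound-real} directly: it gives $\pi_\mp\phi\in L^2_{-1/2+\kappa}$ and $(p_f-\alpha_f(\lambda-q_0))\phi\in L^2_{-1/2+\kappa}$. To rewrite the second statement in the form claimed in \ref{item:18122819}, I would multiply by $\pi_\pm$ and observe that on $|x|\ge 2$ one has $\alpha_f\pi_\pm=\pm\sigma\pi_\pm$, while the $\pi_\mp$-component of $(p_f-\alpha_f(\lambda-q_0))\phi$ differs from $(p_f\mp\sigma(\lambda-q_0))\phi$ by terms involving $\pi_\mp\phi$ and commutators $[\alpha_f,q_0]$; the massless Assumption~\ref{2571017b} bound $|q_0-\widetilde q_0|\le C'f^{-\rho}$ together with $\pi_\mp\phi\in L^2_{-1/2+\kappa}$ and $\kappa<\rho/2$ shows these error terms lie in $L^2_{-1/2+\kappa}$, hence $(p_f\mp\sigma(\lambda-q_0))\phi\in L^2_{-1/2+\kappa}$ as required.

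For the converse, suppose $\psi\in f^{-\kappa}\mathcal B$ and $\phi\in f^{\kappa}\mathcal B^*$ satisfy \ref{item:18122818} and \ref{item:18122820}. The strategy is a uniqueness argument: set $\phi_0=\phi-R_\pm(\lambda)\psi$; by the forward direction $R_\pm(\lambda)\psi$ solves \ref{item:18122818} with the same $\psi$ and satisfies the radiation conditions, so $\phi_0$ is a distributional solution of $(H-\lambda)\phi_0=0$ lying in $f^\kappa\mathcal B^*$ and satisfying the homogeneous radiation condition $\pi_\mp\phi_0\in f^{-\kappa}\mathcal B_0^*$ or $(p_f\mp\sigma(\lambda-q_0))\phi_0\in f^{-\kappa}\mathcal B_0^*$. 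I would then show such $\phi_0$ must vanish. The natural route is to upgrade the decay: a homogeneous solution with one outgoing (resp.\ incoming) component in $f^{-\kappa}\mathcal B_0^*$ should, by a bootstrap using the radiation-condition identities and the commutator estimates underlying Theorem~\ref{25081317}, actually lie in $\mathcal B_0^*$, whereupon Theorem~\ref{250717} (Rellich) forces $\phi_0=0$, and thus $\phi=R_\pm(\lambda)\psi$. The two alternative hypotheses in \ref{item:18122820} are handled symmetrically since, as in the forward direction, on $|x|\ge2$ the two quantities $\pi_\mp\phi_0$ and $(p_f\mp\sigma(\lambda-q_0))\phi_0$ are linked by the algebraic relation $\alpha_f\pi_\pm=\pm\sigma\pi_\pm$ modulo controlled errors, so membership of either in $f^{-\kappa}\mathcal B_0^*$ propagates to the other.

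I expect the main obstacle to be the bootstrap in the converse: passing from $\phi_0\in f^\kappa\mathcal B^*$ with a homogeneous radiation condition in $f^{-\kappa}\mathcal B_0^*$ to $\phi_0\in\mathcal B_0^*$, so that Rellich applies. This requires rerunning the commutator estimate that produced Theorem~\ref{25081317} in the homogeneous case ($\psi=0$) but with the weaker a priori information $\phi_0\in f^\kappa\mathcal B^*$ rather than $\phi_0\in L^2_{-1/2+\kappa}$; the point is that the radiation-condition quantity being in a space \emph{without} the $f^\kappa$ loss feeds back, through the same positivity-extraction from the first-order commutator, to remove the $f^\kappa$ from $\phi_0$ itself. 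One must be careful that the weight exponents close: $\kappa\in(0,\rho/2)$ leaves exactly the room needed for the long-range error terms of size $f^{-1-\rho}$ and $f^{-\rho}$ (from Assumptions~\ref{2571017} and~\ref{2571017b}) to be absorbed. I would organize this as a separate lemma — essentially a ``homogeneous sharpening'' of Theorem~\ref{25081317} — and then the corollary follows by the short contradiction argument above. The remaining steps (distributional passage to the limit, the algebraic identities relating $\pi_\pm$-components to $p_f$, and the inclusions among the Besov-type spaces) are routine given the earlier results.
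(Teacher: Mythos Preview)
Your forward direction and the overall architecture of the converse (form $\Phi=\phi-R_\pm(\lambda)\psi$, upgrade to $\Phi\in\mathcal B_0^*$, then invoke Theorem~\ref{250717}) match the paper. The gap is in the bootstrap you correctly flag as the hard step, and it has two parts.

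First, the claim that the two alternatives in \ref{item:18122820} ``are linked by the algebraic relation $\alpha_f\pi_\pm=\pm\sigma\pi_\pm$ modulo controlled errors, so membership of either in $f^{-\kappa}\mathcal B_0^*$ propagates to the other'' is not correct. Writing out $(H-\lambda)\Phi=0$ via \eqref{2507191540} gives
\[
(p_f\mp\sigma(\lambda-q_0))\Phi
=\mp2\sigma(\lambda-q_0)\pi_\mp\Phi+O(f^{-\rho})\Phi-\alpha_f(q_1+q_2)\Phi-\alpha_f\alpha_j\ell_{jk}p_k\Phi,
\]
and the angular term $\alpha_f\alpha_j\ell_{jk}p_k\Phi$ is of the same strength as $\Phi$ itself; it is not a pointwise ``controlled error'', so neither hypothesis directly implies the other. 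Second, ``rerunning the commutator estimate that produced Theorem~\ref{25081317}'' goes the wrong way: Proposition~\ref{25081322b} and Lemmas~\ref{25100513}--\ref{25100514} bound $\pi_\mp\phi$ and $(p_f-\alpha_f(z-q_0))\phi$ \emph{in terms of} $\phi$ and $(H-z)\phi$, not conversely. In the homogeneous case they yield control of the radiation quantities by $\|\Phi\|_{L^2_{-(1+\rho)/2+\kappa}}$, which does nothing towards $\Phi\in\mathcal B_0^*$.

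The paper replaces this by two short flux identities with the radial cut-off derivative $\bar\chi_a'$, in which the angular term vanishes \emph{exactly}. From $(H-\lambda)\Phi=0$ one expands $\mathop{\mathrm{Re}}\bigl(\alpha_f\bar\chi_a'(H-\lambda)\bigr)=0$; using that $\mathop{\mathrm{Re}}\bigl(\alpha_f\alpha_j\bar\chi_a'\ell_{jk}p_k\bigr)=0$ (a direct computation from \eqref{25071914} and \eqref{eq:15091112}), the identity reduces to
\[
\bigl\langle\pi_\mp\Phi,\bar\chi_a'(\lambda-q_0)\pi_\mp\Phi\bigr\rangle
=\mp\tfrac{\sigma}{2}\mathop{\mathrm{Re}}\bigl\langle\Phi,\bar\chi_a'(p_f\mp\sigma(\lambda-q_0))\Phi\bigr\rangle
+o(1),
\]
the $o(1)$ absorbing the $(q_0-\widetilde q_0)$ and $(q_1+q_2)$ contributions thanks to $\Phi\in f^\kappa\mathcal B^*$ and $\kappa<\rho/2$. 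This converts the analytic radiation hypothesis into $\pi_\mp\Phi\in\mathcal B_0^*$ directly (the algebraic hypothesis gives this for free). A second, even simpler identity $2\mathop{\mathrm{Im}}\bigl(\chi_a(H-\lambda)\bigr)=\alpha_f\chi_a'=\pm\sigma\chi_a'\mp2\sigma\pi_\mp\chi_a'$ then yields $\langle\Phi,\bar\chi_a'\Phi\rangle=2\langle\Phi,\bar\chi_a'\pi_\mp\Phi\rangle$, and Cauchy--Schwarz gives $\langle\Phi,\bar\chi_a'\Phi\rangle\le 4\langle\pi_\mp\Phi,\bar\chi_a'\pi_\mp\Phi\rangle\to 0$. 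No weighted commutator machinery is needed here, and the exact cancellation of the angular term is what makes the argument close.
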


\section{Proof of Rellich's theorem}

\subsection{Reduction to two propositions} 

In this section we prove Theorem~\ref{250717}, following the scheme of Ito--Skibsted~\cite{MR4062329}. 
We split the proof into two steps. 
Obviously, Theorem~\ref{250717} is a consequence of the following propositions. 
Throughout the section we assume Assumption~\ref{2571017}. 

\begin{proposition}\label{prop:absence-eigenvalues-1b}
Let $\phi\in \mathcal B^*_0$ and $\lambda\in\mathbb R\setminus [m_-,m_+]$ satisfy
\[
(H-\lambda)\phi=0\ \ \text{in the distributional sense.}
\]
Then for any $\kappa>0$ one has $\mathrm e^{\kappa f}\phi\in \mathcal B_0^*$.
\end{proposition}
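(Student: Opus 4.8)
The plan is to derive an exponential decay estimate by a commutator argument with the radial translation generator, using the weight $\mathrm e^{2\mu f}$ boosted into the conjugate operator. Concretely, I would introduce the first-order operator $A = \tfrac12(p_f + p_f^*)$ (a symmetrization of the radial momentum $p_f = (\partial_j f)p_j$), and, for parameters $\mu > 0$ small and a growing-then-stabilizing profile $R \mapsto$ cutoff, consider the quadratic form $\langle \phi, \mathrm i[H, \Theta] \phi\rangle$ for a carefully chosen bounded weight $\Theta = \Theta_{\mu,R}$ of the form $\Theta = \theta(f) \mathrm e^{2\mu g(f)} A$-type, where $g$ interpolates between $f$ near infinity and a capped version that keeps $\Theta$ bounded at each finite stage. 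Since $\phi$ solves $(H-\lambda)\phi = 0$, the form $\langle \phi, \mathrm i[H-\lambda, \Theta]\phi\rangle$ vanishes (after the usual justification via essential self-adjointness, referred to \cite{MR4062329}), and expanding the commutator should produce a positive principal term controlling $\|\mathrm e^{\mu f}\chi_R \phi\|$ in a suitable norm, plus error terms.

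The key computational step is the commutator expansion. Writing $H = \alpha_j p_j + q$, the commutator $\mathrm i[H, \Theta]$ splits into the kinetic part $\mathrm i[\alpha_j p_j, \Theta]$ and the potential part $\mathrm i[q, \Theta]$. The kinetic part, because $\alpha_j p_j$ is first order and $\Theta$ involves $f$ and $A$, yields terms proportional to $|\partial f|^2 = 1$ near infinity times $\mu \mathrm e^{2\mu f} \theta$ acting as a positive multiplier (this is where the anti-commutation relations \eqref{25071914} and the structure of $\partial_f$ enter — $p_f$ squared relates to $p_j p_j$ modulo lower order), together with terms where derivatives fall on $\theta$ or on the cutoff, which are localized in a compact shell or a far region controlled by $\mathcal B_0^*$ smallness. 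The potential part splits using the decomposition $q = q_0 + q_1 + q_2$: here I expect $[q_0, \Theta]$ to require the identity \eqref{250719} and the bound $|\partial_k q_0| \le C f^{-1-\rho}$ so that it is a negligible error against the $\mathcal O(\mu)$ gain; $q_1$ contributes through $|\partial_f q_1| \le C f^{-1-\rho}$ and the borderline $|q_1| \le C f^{-(1+\rho)/2}$, absorbed by Cauchy--Schwarz at the cost of a constant independent of $\mu$; and $q_2$ is straightforwardly small. The spectral gap hypothesis $\lambda \notin [m_-, m_+]$ enters to guarantee that, after all cancellations, the leading multiplier has a definite sign of the correct flavor (this is the analogue of the ellipticity/positivity condition $|\lambda - q_0|$ bounded below near infinity, which for the Dirac operator manifests through $(\lambda - q_0)$ and $(\lambda - \widetilde q_0)$ both staying away from zero).

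After establishing, for some fixed small $\mu > 0$ and all large $R$, a bound of the form $\|\mathrm e^{\mu f} F_{\ge R'}\phi\|_{\mathcal H}^2 \le$ (bounded, $R$-independent terms) $+$ (small multiple of itself), I would let $R \to \infty$ via monotone convergence to conclude $\mathrm e^{\mu f}\phi \in \mathcal H$, and then upgrade to $\mathrm e^{\mu f}\phi \in \mathcal B_0^*$ by a standard localization/interpolation argument (possibly shrinking $\mu$). The main obstacle I anticipate is exactly the one the authors flag in the introduction: the commutator $\mathrm i[H, \Theta]$ is only first order and has no a priori sign, so the delicate point is choosing the weight profile $\theta(f)\mathrm e^{2\mu g(f)}$ and the symmetrization so that the indefinite first-order contributions reorganize — via a further commutator or via completing a square against the term $\|(p_f - \alpha_f(\lambda - q_0))\phi\|$-type quantity — into a genuinely positive quadratic form, rather than merely a bounded one. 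Balancing the $\mu$-linear gain against the $f^{-\rho}$ long-range errors and the borderline $q_1$ term will fix how small $\mu$ must be taken.
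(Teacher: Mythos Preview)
Your overall architecture matches the paper's: a weighted conjugate operator $A_\Theta=2\mathop{\mathrm{Re}}(\Theta p_f)$ with $\Theta=\chi_{a,b}\mathrm e^{\theta}$, $\theta$ a Yosida-type approximation of $2\mu f$, then take the expectation of $\mathop{\mathrm{Im}}(A_\Theta(H-\lambda))$ on $\phi$, let $b\to\infty$ using $\phi\in\mathcal B_0^*$, then $\nu\to\infty$. But the account of where positivity comes from is not right, and this is the heart of the argument. The commutator (Lemma~\ref{25071822}) does \emph{not} produce a zeroth-order positive multiplier like ``$\mu\mathrm e^{2\mu f}$ times $|\partial f|^2$''; it produces the two genuinely first-order, matrix-coefficient terms $\mathop{\mathrm{Re}}(\alpha_f\Theta'p_f)+\mathop{\mathrm{Re}}(\alpha_jf^{-1}\Theta\ell_{jk}p_k)$, with \emph{different} scalar weights $\Theta'$ and $f^{-1}\Theta$ and no a~priori sign. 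The paper's mechanism for extracting positivity (see the chain \eqref{2507182240b}--\eqref{25072622}) is: rewrite the sum as $\tfrac12(\lambda-q)(\theta'+f^{-1})\Theta$ plus a cross term $-\tfrac12\mathop{\mathrm{Re}}(\alpha_j(f^{-1}-\theta')\Theta h_{jk}p_k)$; bound the cross term by Cauchy--Schwarz against $(\lambda-q_0)$, producing $-\tfrac14 p_i h_{il}\alpha_l(\lambda-q_0)^{-1}|f^{-1}-\theta'|\alpha_j h_{jk}p_k$; use the structural identity \eqref{250719} to pass $\alpha_l(\lambda-q_0)^{-1}\alpha_j$ to a scalar-type $\widetilde\Xi$, reduce to $-\tfrac14 p_i\widetilde\Xi p_i$, and then apply Lemma~\ref{2507172236} to convert this back to $-\tfrac14(\lambda-q_0)|f^{-1}-\theta'|\Theta$ plus admissible errors. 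The net is $(\lambda-q_0)\min\{f^{-1},\theta'\}\Theta$, and only \emph{then} does smallness of $\mu$ (so that $\theta'$ beats $f^{-1-\rho}$ errors but loses to $f^{-1}$) close the estimate. None of this is the ``$p_f$ squared relates to $p_jp_j$'' mechanism you sketch.

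Your alternative suggestion of completing a square against $\|(p_f-\alpha_f(\lambda-q_0))\phi\|$ is also a dead end here: that route is exactly the radiation-condition machinery of Section~4, and it relies on Assumption~\ref{2571017b} (the massless-at-infinity condition $|q_0-\widetilde q_0|\le C'f^{-\rho}$), which Proposition~\ref{prop:absence-eigenvalues-1b} does \emph{not} assume. With a genuine mass term $q_0-\widetilde q_0$ of order one, the quantity $\alpha_f(\lambda-q_0)$ is not Hermitian and the square-completion produces uncontrolled errors. So the missing ingredient is precisely the algebraic reorganization above, driven by \eqref{250719} and Lemma~\ref{2507172236}; without it the proposal does not close.
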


\begin{proposition}\label{prop:absence-eigenvalues-1bbb}
Let $\phi\in \mathcal B_0^*$ and $\lambda\in\mathbb R\setminus [m_-,m_+]$ satisfy 
\begin{enumerate}
\item
$(H-\lambda)\phi=0$ in the distributional sense, 
\item
$\mathrm e^{\kappa f}\phi\in \mathcal B_0^*$ for any $\kappa\ge 0$.
\end{enumerate}
Then $\phi=0$ as a function on $\mathbb R^d$.
\end{proposition}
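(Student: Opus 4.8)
The plan is to show that a distributional solution $\phi\in\mathcal B_0^*$ of $(H-\lambda)\phi=0$ which, together with all its exponentially weighted versions $\mathrm e^{\kappa f}\phi$, lies in $\mathcal B_0^*$ for \emph{every} $\kappa\ge 0$, must vanish identically. The mechanism is a Carleman-type/unique-continuation argument combined with the arbitrarily fast decay: first I would upgrade the hypothesis to genuine pointwise (or $L^2$) super-exponential decay, and then invoke the unique continuation property (available under Assumption~\ref{2571017} and the remarks following it) to conclude $\phi\equiv 0$. Concretely, since $H$ is a first-order elliptic (Dirac-type) operator with bounded coefficients, elliptic regularity for $(H-\lambda)\phi=0$ together with $\mathrm e^{\kappa f}\phi\in\mathcal B_0^*\subset L^2_{\mathrm{loc}}$ for all $\kappa$ gives that $\mathrm e^{\kappa f}\phi$ and $\mathrm e^{\kappa f}p_j\phi$ are locally bounded, hence $\phi$ decays faster than any exponential on every ball at infinity.

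The key steps, in order, are as follows. First I would set up the basic commutator/virial identity on a dense class (extensions from $C_{\mathrm c}^\infty(\mathbb R^d;\mathbb C^n)$, justified as in \cite{MR4062329}), testing $(H-\lambda)\phi=0$ against $\mathrm e^{2\mu f}\phi$ or against $A\,\mathrm e^{2\mu f}\phi$ with $A$ the generator of radial translations, exactly as in the preceding two propositions; the point of this proposition, in contrast to Proposition~\ref{prop:absence-eigenvalues-1bb}, is that we are now allowed to take $\mu\to\infty$. Second, using the algebraic structure \eqref{250719} and the long-range bounds \eqref{eq:60}, I would extract a lower bound for the relevant quadratic form that is uniform and strictly positive for all large $\mu$ (this is where the spectral gap $\lambda\notin[m_-,m_+]$ enters, giving positivity of something like $(\lambda-q_0)^2-\text{(small)}$ times $|\phi|^2$ after squaring out the Dirac operator, cf.\ \eqref{25082523}). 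Third, combining this coercivity with the finiteness of $\|\mathrm e^{\kappa f}\phi\|_{\mathcal B_0^*}$ for all $\kappa$, I would deduce that $\phi$ vanishes on a neighborhood of spatial infinity, say on $\{|x|>R_0\}$. Fourth and finally, I would invoke the unique continuation property for $H-\lambda$ to propagate the vanishing from the exterior region into all of $\mathbb R^d$, giving $\phi=0$.

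The main obstacle I anticipate is the third step: squeezing an \emph{actual} statement of compact support (or exterior vanishing) out of the weighted estimates, rather than merely faster-than-exponential decay. With the generator of radial translations as conjugate operator the commutator $[H,A]$ is only first order and has no definite sign, so — as the introduction stresses — one has to coax weighted positivity out of it by a delicate choice of weight; here one must additionally verify that the constants in the coercive bound do not degenerate as $\mu\to\infty$, and that the boundary/error terms generated by differentiating $\mathrm e^{2\mu f}$ and the cutoffs in $f$ stay controlled. A secondary technical point is making the weighted commutator identity rigorous for merely distributional $\phi$ of super-exponential decay: one regularizes by a further cutoff $\chi(f/n)$ and by mollification, derives the identity with explicit remainder terms, and then passes to the limit using that \emph{every} weighted norm $\|\mathrm e^{\kappa f}\phi\|_{\mathcal B_0^*}$ is finite, so the remainders — which carry at most a fixed power of $f$ against an exponential weight — vanish. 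Once exterior vanishing is in hand, the unique continuation step is standard given the hypotheses recorded in Remark~\ref{251003}.
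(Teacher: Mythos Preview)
Your outline matches the paper's approach: a weighted commutator estimate with conjugate operator $A_\Theta$ and exponential weight $\Theta=\chi_{a,b}\mathrm e^{2\kappa f}$, followed by sending $\kappa\to\infty$ to force compact support, and then unique continuation. You have also correctly located the crux at your ``third step''.

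However, your formulation of that crux---``verify that the constants in the coercive bound do not degenerate as $\mu\to\infty$''---misses the actual mechanism, and as stated your step 3 is a gap rather than an argument. Mere non-degeneration is not enough: the inner boundary term $\|\chi_{a-1,a+1}\mathrm e^{\kappa f}\phi\|^2$ itself grows exponentially in $\kappa$, so a coercive constant that is merely bounded below would yield nothing. What the paper establishes (Lemma~\ref{25071915bb}) is that the coercive term scales like $c\kappa^3\Theta$ while the cutoff errors scale only like $C\kappa^2(\chi_{a-1,a+1}^2+\chi_{b-1,b+1}^2)f^{-1}\mathrm e^{2\kappa f}$. Dividing out $\kappa^2$ and letting $b\to\infty$ (this is where $\mathrm e^{\kappa f}\phi\in\mathcal B_0^*$ is used) gives
\[
\kappa\,\|\bar\chi_a^{1/2}\mathrm e^{\kappa f}\phi\|^2\le C\,\|\chi_{a-1,a+1}\mathrm e^{\kappa f}\phi\|^2
\]
with $C$ independent of $\kappa$. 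Renormalizing by $\mathrm e^{-2\kappa\cdot 2^{a+2}}$, the right-hand side is bounded while the left-hand side blows up unless $\phi$ vanishes for $f>2^{a+2}$. The $\kappa^3$ versus $\kappa^2$ disparity comes from the computation \eqref{25072622bb}: shifting $p_i\alpha_i$ to $p_i\alpha_i+\mathrm i\kappa\alpha_f$ inside the quadratic form produces a $+\tfrac14\kappa^2\Xi\mathrm e^{2\kappa f}$ gain, and since $\Xi$ itself carries a factor $\sim\kappa$, this is where the extra power appears. This is exactly the ``exquisite choice of weight'' alluded to in the introduction, and it is the missing idea in your proposal.

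A minor point: you mention possibly squaring to the Schr\"odinger-type equation \eqref{25082523}. The paper deliberately avoids this route (it would require controlling $[\alpha_jp_j,q]$, which is awkward when $q$ does not commute with the $\alpha_j$) and works directly with the first-order commutator of Lemma~\ref{25071822}.
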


We will prove these propositions 
in Sections~\ref{subsec:150909511} and \ref{subsec:150909513}, respectively, 
after short preliminaries in Section~\ref{250718}.

\subsection{Preliminaries}\label{250718}

\subsubsection{Conjugate operator}

Here we present formulas, to be repeatedly referred to in this and later sections, involving a general weight function. 
The proofs of Propositions~\ref{prop:absence-eigenvalues-1b} and \ref{prop:absence-eigenvalues-1bbb} 
depend on a commutator method sharing conjugate operators of the same form, however, with different weight functions. 

Given a weight function $\Theta$, we introduce a radial differential operator 
\begin{equation}
A_\Theta
=2\mathop{\mathrm{Re}}(\Theta p_f)
=
\Theta p_f+p_f^*\Theta
,
\label{260728}
\end{equation}
where $p_f^*$ denotes the adjoint of $p_f$ from \eqref{251004}, unlike the physical convention. 
The function $\Theta$ will be specified soon below, 
but for the moment we only assume the following properties. 
We let $\Theta$ be a smooth function only of $f$ satisfying
\begin{align}
\Theta\ge 0,\quad 
\mathop{\mathrm{supp}}\Theta\subset \{x\in\mathbb R^d;\ |x|\ge 2\},\quad 
|\Theta^{(k)}|\le C_k\ \ \text{for any }k\in\mathbb N_0
,
\label{eq:150921}
\end{align}
with $\Theta^{(k)}$ being the $k$-th derivative of $\Theta$ in $f$. 
Due to the supporting property of $\Theta$,
as far as $\Theta$ is involved, 
we may always identify $f$ with $|x|$.

Recalling notation \eqref{251004}, we will often use a decomposition 
\begin{equation}
\alpha_jp_j=\alpha_fp_f+\alpha_j\ell_{jk}p_k
;\quad 
\ell_{jk}=\delta_{jk}-(\partial_jf)(\partial_kf)
.
\label{2507191540}
\end{equation}
In particular, we note that on $\mathop{\mathrm{supp}}\Theta$ 
the matrix $(\ell_{ij})_{i,j=1,\dots, d}$ represents the orthogonal projection onto the spherical direction,
and moreover 
\begin{equation}
\ell_{jk}=f(\partial_j\partial_kf)
\ \ \text{for }j,k=1,\dots,d
.
\label{eq:15091112}
\end{equation}

Now let us compute a commutator $2\mathrm i[H,A_\Theta]=\mathop{\mathrm{Im}}(A_\Theta H)$ 
for general $\Theta$. 

\begin{lemma}\label{25071822}
One has the identity 
\begin{align*}
\mathop{\mathrm{Im}}(A_\Theta H)
&=
\mathop{\mathrm{Re}}(\alpha_f\Theta'p_f)
+\mathop{\mathrm{Re}}(\alpha_jf^{-1}\Theta\ell_{jk}p_k\bigr)
\\&\phantom{{}={}}{}
-(\partial_f(q_0+q_1))\Theta
+(\Delta f)q_2\Theta 
+q_2\Theta' 
-2\mathop{\mathrm{Im}}(q_2\Theta p_f)
.
\end{align*}
\end{lemma}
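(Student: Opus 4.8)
\textbf{Proof plan for Lemma~\ref{25071822}.}
The plan is to compute $\mathop{\mathrm{Im}}(A_\Theta H)$ directly by expanding $A_\Theta=\Theta p_f+p_f^*\Theta$ and $H=\alpha_jp_j+q$, and collecting terms. First I would write
\[
A_\Theta H=\Theta p_f\alpha_jp_j+p_f^*\Theta\alpha_jp_j+\Theta p_f q+p_f^*\Theta q,
\]
and take the imaginary part, i.e.\ $\tfrac{1}{2\mathrm i}\bigl(A_\Theta H-(A_\Theta H)^*\bigr)$, using that $A_\Theta$ is symmetric so $\mathop{\mathrm{Im}}(A_\Theta H)=\tfrac{1}{2\mathrm i}[H,A_\Theta]$ once one is careful that $H$ is also symmetric on the relevant domain. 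The natural route is to split $\alpha_jp_j=\alpha_fp_f+\alpha_j\ell_{jk}p_k$ via \eqref{2507191540}, since $A_\Theta$ is built from $p_f$ and the commutator $[p_f,p_f]$-type terms simplify using that $\Theta$ depends only on $f$; here the identity \eqref{eq:15091112}, $\ell_{jk}=f(\partial_j\partial_kf)$, and the formula $\Delta f=(\partial_j\partial_jf)$ on $\mathop{\mathrm{supp}}\Theta$ will be what converts raw derivatives of $f$ into the stated coefficients $\Theta'$, $f^{-1}\Theta$, and $\Delta f$.

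The key steps, in order, are: (1) handle the ``kinetic-kinetic'' piece $\mathop{\mathrm{Im}}(A_\Theta\alpha_jp_j)$. Using $[p_f,\Theta]=-\mathrm i\Theta'|\nabla f|^2=-\mathrm i\Theta'$ on the support (since $|\nabla f|=1$ there) and the anticommutation relations \eqref{25071914}, the commutator $[\alpha_jp_j,A_\Theta]$ should produce exactly $2\mathop{\mathrm{Re}}(\alpha_f\Theta'p_f)$ from the radial part and $2\mathop{\mathrm{Re}}(\alpha_jf^{-1}\Theta\ell_{jk}p_k)$ from the spherical part, the latter coming from commuting $p_f$ past $\alpha_j\ell_{jk}p_k$ and invoking \eqref{eq:15091112}; I would be careful to track the curvature term $(\partial_f\ell_{jk})$ and the second-order term $p_f(\partial_kf)$ which recombine into the $f^{-1}$ coefficient. (2) Handle the ``potential'' piece. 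Write $q=q_0+q_1+q_2$. For the $C^1$ parts $q_0,q_1$: $\mathop{\mathrm{Im}}(A_\Theta(q_0+q_1))=\mathop{\mathrm{Im}}\bigl(\Theta p_f(q_0+q_1)+p_f^*\Theta(q_0+q_1)\bigr)$; since $q_0+q_1$ is Hermitian and $\Theta,f$ are real scalars, the leading multiplication terms cancel and one is left with the derivative term $-(\partial_f(q_0+q_1))\Theta$ (the sign coming from $p_f=-\mathrm i\partial_f$ acting to the right). (3) For the merely bounded part $q_2$ one cannot differentiate, so one keeps $-2\mathop{\mathrm{Im}}(q_2\Theta p_f)$ explicitly; the extra terms $(\Delta f)q_2\Theta+q_2\Theta'$ arise from rewriting $p_f^*\Theta q_2$ as $\Theta q_2 p_f$ plus the commutator $[p_f^*,\Theta q_2]$, using $p_f^*=p_f+\mathrm i(\Delta f)$ and $[p_f,\Theta]=-\mathrm i\Theta'$.

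The main obstacle is bookkeeping the noncommutativity: $\alpha_j$ do not commute with $p_f=(\partial_jf)p_j$, and $q$ does not commute with $\alpha_j$, so one must resist the temptation to treat $p_f$ as a scalar operator. Concretely, the delicate point is step (1): showing that the second-order differential contributions from $p_f\,\alpha_j\ell_{jk}\,p_k$ and its adjoint combine to leave only a first-order operator with coefficient exactly $f^{-1}\Theta$, which forces one to use \eqref{eq:15091112} and the Euler-type identity $(\partial_jf)(\partial_j\partial_kf)=0$ on $\mathop{\mathrm{supp}}\Theta$ (since $|\nabla f|^2\equiv1$ there, differentiating gives $(\partial_kf)(\partial_j\partial_kf)=0$). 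Once that geometric simplification is in hand, the rest is a careful but routine separation of real/imaginary parts, and I would present it as a short displayed computation rather than grinding every index contraction. A secondary subtlety worth a sentence is domain justification — all manipulations are valid on $C_{\mathrm c}^\infty(\mathbb R^d;\mathbb C^n)$, and extend by the essential self-adjointness noted in Remark~\ref{251003}, as in \cite{MR4062329}.
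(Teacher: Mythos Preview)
Your three-part decomposition $\mathop{\mathrm{Im}}(A_\Theta H)=\mathop{\mathrm{Im}}(A_\Theta\alpha_jp_j)+\mathop{\mathrm{Im}}(A_\Theta(q_0+q_1))+\mathop{\mathrm{Im}}(A_\Theta q_2)$ is exactly what the paper does, and your treatment of the potential pieces (2) and (3) matches the paper's. Two small corrections: $p_f^*=p_f-\mathrm i\Delta f$, not $+\mathrm i\Delta f$; and each $\alpha_j$ \emph{does} commute with $p_f$, since $\alpha_j$ is a constant matrix and $p_f$ is a scalar differential operator --- the only noncommutativity is between $q$ and the $\alpha_j$, and among the $\alpha_j$ themselves.

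For the kinetic piece (1) your route is more laborious than necessary. You propose to split $\alpha_jp_j=\alpha_fp_f+\alpha_j\ell_{jk}p_k$ \emph{before} commuting, which forces you to track $(\partial_f\ell_{jk})$, second-order cancellations, and the Euler identity $(\partial_jf)(\partial_j\partial_kf)=0$. The paper avoids all of this: since each $\alpha_j$ is constant, one writes directly
\[
\mathop{\mathrm{Im}}(A_\Theta\alpha_jp_j)
=\mathop{\mathrm{Im}}(\alpha_j\Theta p_fp_j)-\mathop{\mathrm{Im}}(\alpha_jp_j\Theta p_f)
=\mathop{\mathrm{Im}}\bigl(\alpha_j[\Theta p_f,p_j]\bigr),
\]
and the scalar commutator $[\Theta p_f,p_j]=\mathrm i\partial_j(\Theta\partial_kf)p_k=\mathrm i\bigl(\Theta'(\partial_jf)(\partial_kf)+\Theta(\partial_j\partial_kf)\bigr)p_k$ immediately yields $\mathop{\mathrm{Re}}(\alpha_f\Theta'p_f)+\mathop{\mathrm{Re}}(\alpha_j\Theta(\partial_j\partial_kf)p_k)$; only at the very end does one invoke \eqref{eq:15091112} to rewrite $\partial_j\partial_kf=f^{-1}\ell_{jk}$. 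No second-order terms ever appear, and the splitting \eqref{2507191540} is not used at all in this lemma. Your approach would arrive at the same place, but the paper's is a one-line commutator computation.
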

\begin{remark}\label{251002}
In most of the later applications we will squeeze (weighted) positivity from the first and second terms 
on the right-hand side without energy cut-offs, 
in spite that they are first order differential operators
with different weights $\Theta'$ and $f^{-1}\Theta$. 
This is the most technical novelty of the paper.
The other terms along with the left-hand side are negligible.
For instance, 
the expectation of 
$\mathop{\mathrm{Im}}(A_\Theta H)=\mathop{\mathrm{Im}}(A_\Theta (H-\lambda))$ 
vanishes on an eigenstate $\phi$. 
\end{remark}
\begin{proof}
We decompose the left-hand side of the asserted identity as 
\begin{equation}
\mathop{\mathrm{Im}}(A_\Theta H)
=
\mathop{\mathrm{Im}}(A_\Theta \alpha_jp_j)
+\mathop{\mathrm{Im}}(A_\Theta (q_0+q_1))
+\mathop{\mathrm{Im}}(A_\Theta q_2)
.
\label{25071114}
\end{equation}
We can compute the first term on the right-hand side of \eqref{25071114}
by using \eqref{eq:15091112} as 
\begin{align*}
\mathop{\mathrm{Im}}(A_\Theta \alpha_jp_j)
&=
\mathop{\mathrm{Im}}\bigl((\Theta p_f+p_f^*\Theta)\alpha_jp_j\bigr)
\\&=
\mathop{\mathrm{Im}}(\alpha_j\Theta p_fp_j)
-\mathop{\mathrm{Im}}(\alpha_jp_j\Theta p_f)
\\&=
\mathop{\mathrm{Re}}(\alpha_f \Theta' p_f)
+\mathop{\mathrm{Re}}\bigl(\alpha_j\Theta(\partial_j\partial_kf) p_k\bigr)
\\&=
\mathop{\mathrm{Re}}(\alpha_f \Theta' p_f)
+\mathop{\mathrm{Re}}(\alpha_jf^{-1}\Theta\ell_{jk}p_k\bigr)
. 
\end{align*}
As for the second and third terms on the right-hand side of \eqref{25071114}, 
it is easy to see that 
\begin{equation*}
\mathop{\mathrm{Im}}(A_\Theta (q_0+q_1))
=-(\partial_f(q_0+q_1))\Theta,
\end{equation*}
and that 
\begin{align*}
\mathop{\mathrm{Im}}(A_\Theta q_2)
&
=\mathop{\mathrm{Im}}\bigl((\Theta p_f+p_f^*\Theta)q_2\bigr)
\\&
=
\mathop{\mathrm{Re}}\bigl((\partial_j\Theta\partial_jf) q_2\bigr)
+2\mathop{\mathrm{Im}}(p_f^*\Theta q_2)
\\&
=
(\Delta f) q_2\Theta
+q_2\Theta' 
-2\mathop{\mathrm{Im}}(q_2\Theta p_f)
.
\end{align*}
Hence we obtain the assertion. 
\end{proof}

We will also often use the following identity. 

\begin{lemma}\label{2507172236}
For any $z\in\mathbb C$ and $\Xi\in C^2(\mathbb R^d;\mathbb R)$ one can write 
\begin{align*}
p_j\Xi p_j
&
=
(z-q)^*(z-q)\Xi
-\mathop{\mathrm{Im}}\bigl(\alpha_j(\partial_j\Xi) (z-q)\bigr)
+\tfrac12(\Delta \Xi)
\\&\phantom{={}}{}
+\mathop{\mathrm{Re}}
\bigl(\bigl(2(z-q)^*\Xi+\mathrm i\alpha_j(\partial_j\Xi)\bigr)(H-z)\bigr)
+(H-z)^*\Xi(H-z)
.
\end{align*}
\end{lemma}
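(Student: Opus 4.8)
The plan is to expand the conjugated operator $(H-z)^*\Xi(H-z)$ directly, treating $\Xi$ as a scalar multiplication operator and $H=\alpha_jp_j+q$ as a first-order differential operator, and then to collect the purely ``second-order'' contribution into the left-hand side $p_j\Xi p_j$ while routing everything else either into the explicit lower-order terms or into the two $(H-z)$-proportional remainders. Since the identity is an operator identity on, say, $C^\infty_{\mathrm c}(\mathbb R^d;\mathbb C^n)$, no functional analysis is needed; it is purely the product rule and the anti-commutation relations \eqref{25071914}.

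First I would write $H-z=\alpha_jp_j+(q-z)$ and expand
\[
(H-z)^*\Xi(H-z)
=
(\alpha_jp_j)^*\Xi(\alpha_kp_k)
+(\alpha_jp_j)^*\Xi(q-z)
+(q-z)^*\Xi(\alpha_kp_k)
+(q-z)^*\Xi(q-z),
\]
using that $\alpha_j$ is Hermitian and $\Xi$ real. The last term is exactly $(z-q)^*(z-q)\Xi$. For the cross terms I would move $\Xi$ to the left past $p_j$ via $p_j\Xi=\Xi p_j-\mathrm i(\partial_j\Xi)$, so that $(\alpha_jp_j)^*\Xi(q-z)=\Xi\,\alpha_jp_j(q-z)-\mathrm i\alpha_j(\partial_j\Xi)(q-z)$; adding this to its formal adjoint-type partner $(q-z)^*\Xi\alpha_kp_k$ and recognizing $\alpha_jp_j(q-z)\cdot+(q-z)^*\alpha_kp_k\cdot$ as $2\mathop{\mathrm{Re}}$ of $(q-z)^*$ times the first-order part of $H-z$, one rewrites the combination as $2\mathop{\mathrm{Re}}\bigl((z-q)^*\Xi(H-z)\bigr)$ minus the two $\Xi$-times-potential-times-potential pieces already absorbed, plus the genuinely new term $-\mathrm i\alpha_j(\partial_j\Xi)(q-z)+\mathrm i\alpha_j(\partial_j\Xi)^*(q-z)^*$, which is $-\mathop{\mathrm{Im}}\bigl(\alpha_j(\partial_j\Xi)(z-q)\bigr)$ up to sign bookkeeping, together with the commutator term $\mathrm i\alpha_j(\partial_j\Xi)(\alpha_kp_k)$ that must be folded back in so that the $(H-z)$ in $\mathop{\mathrm{Re}}(\mathrm i\alpha_j(\partial_j\Xi)(H-z))$ is complete. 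This is the bookkeeping step where the precise coefficient $2(z-q)^*\Xi+\mathrm i\alpha_j(\partial_j\Xi)$ in front of $(H-z)$ emerges.

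For the genuinely second-order term $(\alpha_jp_j)^*\Xi(\alpha_kp_k)=\alpha_jp_j\Xi\alpha_kp_k$, I would again commute $\Xi$ leftward: $p_j\Xi=\Xi p_j-\mathrm i(\partial_j\Xi)$, giving $\alpha_j\alpha_k\Xi p_jp_k-\mathrm i\alpha_j\alpha_k(\partial_j\Xi)p_k$. Symmetrizing $p_jp_k$ and using \eqref{25071914} in the form $\alpha_j\alpha_k+\alpha_k\alpha_j=2\delta_{jk}$ collapses the leading piece to $\Xi p_jp_j=p_j\Xi p_j+\mathrm i(\partial_j\Xi)p_j$; the antisymmetric-in-$(j,k)$ remainder $\alpha_j\alpha_k(\partial_j\Xi)p_k$ combines with its partner to produce, after one more commutation of $p_k$ past $\partial_j\Xi$, precisely the Laplacian term $\tfrac12(\Delta\Xi)$ and further first-order pieces of the form $\alpha_j(\partial_j\Xi)\alpha_kp_k$ which are exactly the contributions needed to complete $H-z$ inside the two $\mathop{\mathrm{Re}}$-terms. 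Assembling all pieces — the $(z-q)^*(z-q)\Xi$ term, the $-\mathop{\mathrm{Im}}(\alpha_j(\partial_j\Xi)(z-q))$ term, the $\tfrac12(\Delta\Xi)$ term, the $\mathop{\mathrm{Re}}\bigl((2(z-q)^*\Xi+\mathrm i\alpha_j(\partial_j\Xi))(H-z)\bigr)$ term, and $(H-z)^*\Xi(H-z)$ — and solving for $p_j\Xi p_j$ yields the claimed formula.

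The main obstacle is purely organizational rather than conceptual: one must track the several first-order-in-$p$ terms that appear from repeated use of the product rule (from commuting $\Xi$ and $\partial_j\Xi$ past the momenta, and from the antisymmetric part of $\alpha_j\alpha_k$) and verify that they reassemble \emph{exactly} into the combination $\bigl(2(z-q)^*\Xi+\mathrm i\alpha_j(\partial_j\Xi)\bigr)(H-z)$ under $\mathop{\mathrm{Re}}$, with no leftover $p_j$-terms and no residual potential-squared terms beyond $(z-q)^*(z-q)\Xi$. The anti-commutation relation \eqref{25071914} is what makes this cancellation possible, since it is the only structural input that turns the symmetric double-momentum $\alpha_j\alpha_kp_jp_k$ into a scalar Laplacian. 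I would check the scalar case $n=1$, $\alpha_j\rightsquigarrow$ ordinary first-order symbol as a sanity test, and then confirm the matrix coefficients match; once the coefficient of $(H-z)$ is pinned down the rest is forced.
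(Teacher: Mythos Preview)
Your proposal is correct and rests on the same two ingredients as the paper's proof: the anti-commutation relation \eqref{25071914} to pass between $p_jp_j$ and $\alpha_j\alpha_kp_jp_k$, and the substitution $\alpha_jp_j=(z-q)+(H-z)$. The paper runs the computation in the opposite direction: it starts from $p_j\Xi p_j$, writes it as $\mathop{\mathrm{Re}}(\alpha_j\alpha_k\Xi p_jp_k)+\tfrac12(\Delta\Xi)$, then as $p_j\alpha_j\Xi\alpha_kp_k-\mathop{\mathrm{Im}}(\alpha_j(\partial_j\Xi)\alpha_kp_k)+\tfrac12(\Delta\Xi)$, and finally substitutes $\alpha_kp_k=(z-q)+(H-z)$ so that the binomial expansion of $((z-q)+(H-z))^*\Xi((z-q)+(H-z))$ delivers the four structured terms directly. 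This ordering sidesteps the reassembly bookkeeping you identify as the main obstacle, since the grouping into $(z-q)^*(z-q)\Xi$, $2\mathop{\mathrm{Re}}((z-q)^*\Xi(H-z))$, and $(H-z)^*\Xi(H-z)$ is automatic rather than something to be recovered from scattered cross terms.
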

\begin{remark}
This implies that $p_j\Xi p_j$ is comparable to $(z-q)^*(z-q)\Xi\sim \Xi$.
In fact, the second and the third term are of lower order 
since $\Xi$ is differentiated, 
and the third and the fourth are negligible due to the factor $(H-z)$. 
\end{remark}
\begin{proof}
We can directly compute it as follows. 
In fact, by using \eqref{25071914} we have 
\begin{align*}
p_j\Xi p_j
&=
\mathop{\mathrm{Re}}(\Xi p_jp_j)
+\mathop{\mathrm{Im}}\bigl((\partial_j\Xi )p_j\bigr)
\\&
=
\mathop{\mathrm{Re}}(\alpha_j\alpha_k\Xi p_jp_k)
+\tfrac12(\Delta \Xi )
\\&
=p_j\alpha_j\Xi\alpha_kp_k
-\mathop{\mathrm{Im}}\bigl(\alpha_j(\partial_j\Xi)\alpha_kp_k\bigr)
+\tfrac12(\Delta \Xi )
\\&
=
(z-q+H-z)^*\Xi(z-q+H-z)
\\&\phantom{={}}{}
-\mathop{\mathrm{Im}}\bigl(\alpha_j(\partial_j\Xi) (z-q+H-z)\bigr)
+\tfrac12(\Delta \Xi)
\\&
=
(z-q)^*(z-q)\Xi
+2\mathop{\mathrm{Re}}\bigl((z-q)^*\Xi(H-z)\bigr)
+(H-z)^*\Xi(H-z)
\\&\phantom{={}}{}
-\mathop{\mathrm{Im}}\bigl(\alpha_j(\partial_j\Xi) (z-q)\bigr)
-\mathop{\mathrm{Im}}\bigl(\alpha_j(\partial_j\Xi)(H-z)\bigr)
+\tfrac12(\Delta \Xi)
.
\end{align*}
Thus obtain the assertion.
\end{proof}

\subsubsection{Weight functions for Rellich's theorem}

The weight function we use in this section is of the form 
\begin{align}
\Theta= \Theta_{a,b,\nu}^{\kappa,\mu}
=\chi_{a,b}\mathrm e^{\theta}
,
\label{eq:15.2.15.5.8bb}
\end{align}
which is dependent on parameters $a,b,\nu\in\mathbb N_0$ and $\kappa,\mu\ge 0$. 
Here $\chi_{a,b}\in C^\infty(\mathbb R^d)$ is a smooth cutoff function to a dyadic annular domain, defined as 
\begin{align}
\chi_{a,b}=\bar\chi_a\chi_b;\quad 
\bar\chi_a=1-\chi(f/2^a),\quad \chi_b=\chi(f/2^b),
\label{eq:11.7.11.5.14}
\end{align}  
with $\chi\in C^\infty(\mathbb R)$ being from \eqref{eq:14.1.7.23.24}. 
The exponent $\theta\in C^\infty(\mathbb R^d)$ is defined as 
\begin{equation}
\theta=\theta_\nu^{\kappa,\mu}
=2\kappa f+2\mu\int_0^f(1+\tau/2^{\nu})^{-1-\delta}\,\mathrm d\tau
.
\label{2507191448}
\end{equation}
Here and below the constant $\delta\in (0,\rho)$ is arbitrarily fixed, so that the dependence on it is suppressed
in \eqref{eq:15.2.15.5.8bb}. 
Note that the integral in \eqref{2507191448} is a refinement of the \textit{Yosida approximation}, satisfying 
\begin{equation*}
0< 
\int_0^f(1+\tau/2^{\nu})^{-1-\delta}\,\mathrm d\tau
\le \min\{f,\delta^{-1} 2^{\nu}\}\ \ \text{for any }\nu\in\mathbb N_0,
\end{equation*}
and 
\begin{equation*}
\int_0^f(1+\tau/2^{\nu})^{-1-\delta}\,\mathrm d\tau
\uparrow f\ \ \text{pointwise as }\nu\to\infty
.
\end{equation*}
We note that, if we denote the derivatives in $f$ by primes as before, then 
\begin{align*}
\begin{split}
\theta'=2\kappa+2\mu(1+f/2^{\nu})^{-1-\delta},\quad
\theta''=-2(1+\delta)\mu 2^{-\nu}(1+f/2^{\nu})^{-2-\delta},\quad \ldots. 
\end{split}
\end{align*} 
In particular, noting $2^{-\nu}(1+f/2^{\nu})^{-1}\leq f^{-1}$, we can bound 
for any $k=2,3,\ldots$
\begin{align*}
|(\theta-2\kappa f)'|\leq C_{\delta,1}(\theta'-2\kappa)
,\qquad 
|(\theta-2\kappa f)^{(k)}|\leq C_{\delta,k}f^{2-k}|\theta''|
.
\end{align*}

\subsection{A priori exponential decay estimate}\label{subsec:150909511}

Now we present a key lemma for the proof of Proposition~\ref{prop:absence-eigenvalues-1b}. 

\begin{lemma}\label{25071915}
Let $\lambda\gtrless m_\pm$, respectively, and fix any $\kappa_0\ge 0$. 
Then there exist $\mu,c,C>0$ and $a_0\in\mathbb N_0$ such that uniformly in $\kappa\in [0,\kappa_0]$, $b>a\ge a_0$ and $\nu\in\mathbb N_0$
\begin{align*}
\pm\mathop{\mathrm{Im}}\bigl(A_\Theta(H-\lambda)\bigr)
&\ge 
c\min\{f^{-1},\theta'\}\Theta
-C\bigl(\chi_{a-1,a+1}^2+\chi_{b-1,b+1}^2\big)f^{-1}\mathrm e^\theta
\\&\phantom{={}}{}
-(H-\lambda)\gamma(H-\lambda),
\end{align*}
respectively, 
where $\gamma=\gamma_{a,b}$ is a certain function, 
independent of $\kappa$ and $\nu$, satisfying 
$\mathop{\mathrm{supp}}\gamma\subset\mathop{\mathrm{supp}}\chi_{a,b}$
and 
$|\gamma|\le C_{a,b}$.
\end{lemma}
\begin{remarks}\label{26020817}
\begin{enumerate}
\item
The first term on the right-hand side is from the first and second term of 
Lemma~\ref{25071822} with $\Theta$ of the form \eqref{eq:15.2.15.5.8bb}. 
It is exactly the minimum of the coefficients
\item 
Let us outline the deduction of Proposition~\ref{prop:absence-eigenvalues-1b} from Lemma~\ref{25071915}: 
Assume 
\begin{equation}
\kappa_0=\sup\{\kappa\ge 0;\ \mathrm e^{\kappa f}\phi\in \mathcal B^*_0\}<\infty,  
\label{2602081719}
\end{equation}
let $\mu>0$ be as above, 
and let $\kappa=0$ if $\kappa_0=0$, and $\kappa\in [0,\kappa_0)$ with $\kappa+\mu>\kappa_0$ otherwise; 
Take the expectation of the above inequality on the eigenstate $\phi$,
and the left-hand side and the last term on the right-hand side vanish; 
Let $b\to\infty$, and a contribution from 
$\chi_{b-1,b+1}^2f^{-1}\mathrm e^\theta$ vanishes by definition of 
$\mathcal B^*_0$; Let $\nu\to\infty$,
which replaces the Yosida approximation $\theta$ by $f$; 
Then we obtain $\mathrm e^{(\kappa+\mu) f}\phi\in \mathcal B^*_0$, which contradicts the assumption. 
Proposition~\ref{prop:absence-eigenvalues-1b} thus follows. 
The second term on the right-hand side of Lemma~\ref{25071822} may be seen as a
boundary contribution, as in Gauss's divergence theorem. 
\end{enumerate}
\end{remarks}

\begin{proof}
Let us only discuss the upper sign since the lower one can be done by the same manner. 
Fix $\lambda> m_+$ and $\kappa_0\ge 0$ as in the assertion.
For the moment we merely choose $a_0\in \mathbb N_0$ such that for some $c_1>0$
\begin{equation}
\min\{\lambda-q_0,\lambda-q_0-q_1\}\ge c_1,
\ \ \text{uniformly in }\{x\in\mathbb R^d;\ |x|\ge 2^{a_0}\}. 
\label{2507182241b}
\end{equation}
Then, except for the last part of the proof,
the below estimates are all uniform in $\kappa\in [0,\kappa_0]$, $\mu\in (0,1]$, $b>a\ge a_0$ and $\nu\in\mathbb N_0$,
and thus until then $c_*,C_*>0$ are independent of them.  
Only in the very last step we shall restrict ranges of all these parameters to verify the assertion,
where we may also retake $a_0\in\mathbb N_0$ larger if necessary.

In this proof we gather and absorb \textit{admissible error terms} into   
\begin{align}
\begin{split}
Q&=
\bigl(\chi_{a,b}|\theta''|+\chi_{a,b}f^{-1-\rho}+|\chi_{a,b}'|\bigr)\mathrm e^\theta
\\&\phantom{{}={}}{}
+p_j\bigl(\chi_{a,b}|\theta''|+\chi_{a,b}f^{-1-\rho}+|\chi_{a,b}'|\bigr)\mathrm e^\theta p_j
+(H-\lambda)f^{1+\rho}\Theta(H-\lambda)
.
\end{split}\label{eq:15091119b}
\end{align}
We will later prove that it is in fact negligible. 

Now let us start to compute the left-hand side of the asserted inequality.
By Lemma~\ref{25071822} and the Cauchy--Schwarz inequality we first have 
\begin{align}
\begin{split}
\mathop{\mathrm{Im}}\bigl(A_\Theta(H-\lambda)\bigr)
&=
\mathop{\mathrm{Re}}\bigl(\alpha_f\theta'\Theta p_f\bigr)
+ \mathop{\mathrm{Re}}\bigl(\alpha_f\chi_{a,b}'\mathrm e^\theta p_f\bigr)
+ \mathop{\mathrm{Re}}\bigl(\alpha_jf^{-1}\Theta\ell_{jk}p_k\bigr)
\\&\phantom{{}={}}{}
- \bigl(\partial_f(q_0+q_1)\bigr)\Theta
+ (\Delta f)q_2\Theta 
+ q_2\Theta' 
- 2\mathop{\mathrm{Im}}(q_2\Theta p_f)
\\&
\ge 
\mathop{\mathrm{Re}}\bigl(\alpha_f\theta'\Theta p_f\bigr)
+ \mathop{\mathrm{Re}}\bigl(\alpha_jf^{-1}\Theta\ell_{jk}p_k\bigr)
-C_1Q
.
\end{split}
\label{25071722b}
\end{align}
Recalling the expression \eqref{2507191540} of $\ell_{jk}$, 
and using the Cauchy--Schwarz inequality and \eqref{2507182241b}, 
we can combine and bound the first and second terms on the right-hand side of \eqref{25071722b} as 
\begin{align}
\begin{split}
&
\mathop{\mathrm{Re}}\bigl(\alpha_f\theta'\Theta p_f\bigr)
+ \mathop{\mathrm{Re}}\bigl(\alpha_jf^{-1}\Theta\ell_{jk}p_k\bigr)
\\&
=
\tfrac12\mathop{\mathrm{Re}}\bigl(\alpha_f(\theta'+f^{-1})\Theta p_f\bigr)
-\tfrac12\mathop{\mathrm{Re}}\bigl(\alpha_f(f^{-1}-\theta')\Theta p_f\bigr)
\\&\phantom{{}={}}
+\tfrac12\mathop{\mathrm{Re}}\bigl(\alpha_j(\theta'+f^{-1})\Theta\ell_{jk}p_k\bigr)
+\tfrac12\mathop{\mathrm{Re}}\bigl(\alpha_j(f^{-1}-\theta')\Theta\ell_{jk}p_k\bigr)
\\&
=
\tfrac12\mathop{\mathrm{Re}}\bigl(\alpha_j(\theta'+f^{-1})\Theta p_j\bigr)
-\tfrac12\mathop{\mathrm{Re}}\bigl(\alpha_j(f^{-1}-\theta')\Theta h_{jk}p_k\bigr)
\\&
=
\tfrac12(\lambda-q)(\theta'+f^{-1})\Theta 
+\tfrac12\mathop{\mathrm{Re}}\bigl((\theta'+f^{-1})\Theta (H-\lambda)\bigr)
\\&\phantom{{}={}}
-\tfrac12\mathop{\mathrm{Re}}\bigl(\chi_{a,b}(f^{-1}-\theta')f^{2-d}\alpha_jh_{jk}p_kf^{d-2}\mathrm e^{\theta}\bigr)
\\&
\ge 
\tfrac12(\lambda-q_0-q_1)(\theta'+f^{-1})\Theta 
- \tfrac14(\lambda-q_0)|f^{-1}-\theta'|\Theta 
\\&\phantom{{}={}}
- \tfrac14\mathrm e^{\theta}f^{d-2}p_ih_{il} \alpha_l\chi_{a,b}(\lambda-q_0)^{-1}|f^{-1}-\theta'|f^{4-2d}\mathrm e^{-\theta}\alpha_jh_{jk}p_kf^{d-2}\mathrm e^{\theta}
\\&\phantom{{}={}}
-C_2Q
,
\end{split}
\label{2507182240b}
\end{align}
where we have set 
\begin{equation}
h_{jk}=(\partial_jf)(\partial_kf)-\ell_{jk}.
\label{25072714}
\end{equation}
We are going to further compute and bound the third term on the right-hand side of \eqref{2507182240b},
which requires a subtle treatment. 
Let us set for short
\begin{align*}
\Xi&=\chi_{a,b}(\lambda-q_0)^{-1}\bigl((f^{-1}-\theta')^2+f^{-2-2\rho}\bigr)^{1/2},\\
\widetilde\Xi&=\chi_{a,b}(\lambda-\widetilde q_0)^{-1}\bigl((f^{-1}-\theta')^2+f^{-2-2\rho}\bigr)^{1/2}
,
\end{align*}
cf.\ \eqref{250719},  and we proceed with the third term of \eqref{2507182240b} 
by using \eqref{25072714}, \eqref{2507191540} and $\alpha_f^2=1$ 
as 
\begin{align}
\begin{split}
&
- \tfrac14\mathrm e^{\theta}f^{d-2}p_ih_{il} \alpha_l\chi_{a,b}(\lambda-q_0)^{-1}|f^{-1}-\theta'|f^{4-2d}\mathrm e^{-\theta}\alpha_jh_{jk}p_kf^{d-2}\mathrm e^{\theta}
\\&
\ge 
- \tfrac14\mathrm e^{\theta}f^{d-2}p_ih_{il} \alpha_l\Xi f^{4-2d}\mathrm e^{-\theta}\alpha_jh_{jk}p_kf^{d-2}\mathrm e^{\theta}
\\&
=
- \tfrac14\mathrm e^{\theta/2}\bigl(p_i+\tfrac{\mathrm i}2(\partial_if)\theta'+\mathrm i(d-2)(\partial_if)f^{-1}\bigr)h_{il} \alpha_l\Xi\alpha_jh_{jk}
\\&\phantom{{}={}}{}
\cdot 
\bigl(p_k-\tfrac{\mathrm i}2(\partial_kf)\theta'-\mathrm i(d-2)(\partial_kf)f^{-1}\bigr)\mathrm e^{\theta/2}
\\&
=
- \tfrac14\mathrm e^{\theta/2}p_ih_{il} \alpha_l\Xi\alpha_jh_{jk}p_k\mathrm e^{\theta/2}
\\&\phantom{{}={}}{}
+\tfrac14\mathop{\mathrm{Im}}\bigl(\alpha_f\Xi(\theta'+2(d-2) f^{-1})\mathrm e^{\theta/2}\alpha_jh_{jk}p_k\mathrm e^{\theta/2}\bigr)
\\&\phantom{{}={}}{}
- \tfrac1{16}\widetilde\Xi(\theta'+2(d-2)f^{-1})^2\mathrm e^{\theta}
\\&
\ge 
- \tfrac14\mathop{\mathrm{Re}}\bigl(\alpha_l\alpha_j\mathrm e^{\theta/2}\widetilde\Xi h_{il} h_{jk}p_ip_k\mathrm e^{\theta/2}\bigr)
- \tfrac14\mathop{\mathrm{Im}}\bigl(\alpha_l\alpha_j\mathrm e^{\theta/2}(\partial_i\widetilde\Xi h_{il} h_{jk})p_k\mathrm e^{\theta/2}\bigr)
\\&\phantom{{}={}}{}
+\tfrac14\mathop{\mathrm{Im}}\bigl(\widetilde\Xi (\theta'+2(d-2) f^{-1})\mathrm e^{\theta/2}p_f\mathrm e^{\theta/2}\bigr)
\\&\phantom{{}={}}{}
-\tfrac14\mathop{\mathrm{Im}}\bigl(\alpha_f\Xi(\theta'+2(d-2) f^{-1})\mathrm e^{\theta/2}\alpha_j\ell_{jk}p_k\mathrm e^{\theta/2}\bigr)
\\&\phantom{{}={}}{}
- \tfrac1{16}\widetilde\Xi\theta'^2\mathrm e^{\theta}
-\tfrac{d-2}4\widetilde\Xi f^{-1}\theta' \mathrm e^{\theta}
-C_3Q
.
\end{split}
\label{250726}
\end{align}
We continue to compute \eqref{250726}. 
Note that by \eqref{25072714}, \eqref{2507191540} and \eqref{eq:15091112}
\begin{equation}
\partial_ih_{jk}=2f^{-1}\ell_{ij}(\partial_kf)+2f^{-1}(\partial_jf)\ell_{ik}
,\quad 
(\Delta f)=(d-1)f^{-1}
.
\label{260208}
\end{equation}
Then, after some computations employing 
\eqref{25071914}, Assumption~\ref{2571017}, the Cauchy--Schwarz inequality, \eqref{260208} and 
$\ell_{ij}\alpha_i\alpha_j=(d-1)$ along with \eqref{2507191540}, \eqref{eq:15091112}, \eqref{25072714} and $\alpha_f^2=1$ 
as above, we obtain 
\begin{align}
\begin{split}
&
- \tfrac14\mathrm e^{\theta}f^{d-2}p_ih_{il} \alpha_l\chi_{a,b}(\lambda-q_0)^{-1}|f^{-1}-\theta'|f^{4-2d}\mathrm e^{-\theta}\alpha_jh_{jk}p_kf^{d-2}\mathrm e^{\theta}
\\&
\ge 
- \tfrac14\mathop{\mathrm{Re}}\bigl(\mathrm e^{\theta/2}\widetilde\Xi h_{ij} h_{jk}p_ip_k\mathrm e^{\theta/2}\bigr)
- \tfrac{d-1}2\mathop{\mathrm{Im}}\bigl(\alpha_f\alpha_j\widetilde\Xi f^{-1}\mathrm e^{\theta/2} h_{jk}p_k\mathrm e^{\theta/2}\bigr)
\\&\phantom{{}={}}{}
- \tfrac12\mathop{\mathrm{Im}}\bigl(\alpha_l\alpha_j\widetilde\Xi f^{-1}\mathrm e^{\theta/2}h_{il}\ell_{ij}p_f\mathrm e^{\theta/2}\bigr)
- \tfrac12\mathop{\mathrm{Im}}\bigl(\alpha_l\alpha_f\widetilde\Xi f^{-1}\mathrm e^{\theta/2}h_{il}\ell_{ik}p_k\mathrm e^{\theta/2}\bigr)
\\&\phantom{{}={}}{}
+\tfrac{d-1}8\widetilde\Xi f^{-1}\theta'\mathrm e^{\theta}
-\tfrac14\mathop{\mathrm{Im}}\bigl(\alpha_f\Xi(\theta'+2(d-2) f^{-1})\mathrm e^{\theta}\alpha_j\ell_{jk}p_k\bigr)
\\&\phantom{{}={}}{}
- \tfrac1{16}\widetilde\Xi\theta'^2\mathrm e^{\theta}
-\tfrac{d-2}4\widetilde\Xi f^{-1}\theta' \mathrm e^{\theta}
-C_4Q
\\&
\ge 
- \tfrac14\mathop{\mathrm{Re}}\bigl(\mathrm e^{\theta/2}\widetilde\Xi \delta_{ik}p_ip_k\mathrm e^{\theta/2}\bigr)
-\tfrac14\mathop{\mathrm{Im}}\bigl(\alpha_f\Xi\theta'\mathrm e^{\theta}\alpha_j\ell_{jk}p_k\bigr)
\\&\phantom{{}={}}{}
- \tfrac1{16}\widetilde\Xi\theta'^2\mathrm e^{\theta}
-\tfrac{d-3}8\widetilde\Xi f^{-1}\theta' \mathrm e^{\theta}
-C_5Q
.
\end{split}
\label{250726f}
\end{align}
Let us compute the first term on the right-hand side of \eqref{250726f} as follows: 
\begin{align}
\begin{split}
&
- \tfrac14\mathop{\mathrm{Re}}\bigl(\widetilde\Xi \mathrm e^{\theta/2}\delta_{ik}p_ip_k\mathrm e^{\theta/2}\bigr)
\\&
=- \tfrac14\mathop{\mathrm{Re}}
\bigl(\widetilde\Xi \mathrm e^{\theta/2}\alpha_i\alpha_kp_ip_k\mathrm e^{\theta/2}\bigr)
\\&
=
- \tfrac14\mathrm e^{\theta/2}p_i\alpha_i\Xi \alpha_kp_k\mathrm e^{\theta/2}
+ \tfrac14\mathop{\mathrm{Im}}
\bigl(\mathrm e^{\theta/2}\alpha_i (\partial_i\Xi) \alpha_kp_k\mathrm e^{\theta/2}\bigr)
\\&
\ge 
- \tfrac14\bigl(p_i\alpha_i+\tfrac{\mathrm i}2\alpha_f\theta'\bigr)\Xi \mathrm e^{\theta}
\bigl(\alpha_kp_k-\tfrac{\mathrm i}2\alpha_f\theta'\bigr)
-C_6Q
\\&
=
- \tfrac14p_i\alpha_i\Xi \mathrm e^{\theta}\alpha_kp_k
+\tfrac14\mathop{\mathrm{Im}}\bigl(\alpha_f\Xi \theta'\mathrm e^{\theta}\alpha_kp_k\bigr)
-\tfrac1{16}\widetilde\Xi\theta'^2 \mathrm e^{\theta}
-C_6Q
\\&
=
- \tfrac14(\lambda-q+H-\lambda)\Xi \mathrm e^{\theta}(\lambda-q+H-\lambda)
+\tfrac14\mathop{\mathrm{Im}}\bigl(\alpha_f\Xi \theta'\mathrm e^{\theta}\alpha_fp_f\bigr)
\\&\phantom{{}={}}
+\tfrac14\mathop{\mathrm{Im}}\bigl(\alpha_f\Xi \theta'\mathrm e^{\theta}\alpha_j\ell_{jk}p_k\bigr)
-\tfrac{1}{16}\widetilde\Xi \theta'^2\mathrm e^{\theta}
-C_6Q
\\&
\ge 
- \tfrac14(\lambda-q)\Xi \mathrm e^{\theta}(\lambda-q)
+\tfrac18\bigl(\partial_i(\partial_if)\widetilde\Xi \theta'\mathrm e^{\theta}\bigr)
+\tfrac14\mathop{\mathrm{Im}}\bigl(\alpha_f\Xi \theta'\mathrm e^{\theta}\alpha_j\ell_{jk}p_k\bigr)
\\&\phantom{{}={}}
-\tfrac{1}{16}\widetilde\Xi \theta'^2\mathrm e^{\theta}
-C_7Q
\\&
\ge 
- \tfrac14(\lambda-q_0-2q_1)|f^{-1}-\theta'|\Theta
+\tfrac14\mathop{\mathrm{Im}}\bigl(\alpha_f\Xi \theta'\mathrm e^{\theta}\alpha_j\ell_{jk}p_k\bigr)
\\&\phantom{{}={}}
+\tfrac{1}{16}\widetilde\Xi \theta'^2\mathrm e^{\theta}
+\tfrac{d-1}8\widetilde\Xi f^{-1}\theta'\mathrm e^{\theta}
-C_8Q
.
\end{split}
\label{25072622}
\end{align}
Hence by \eqref{25071722b}, \eqref{2507182240b}, \eqref{250726f}, \eqref{25072622}, and \eqref{2507182241b} we obtain  
\begin{align}
\begin{split}
\mathop{\mathrm{Im}}\bigl(A_\Theta(H-\lambda)\bigr)
&
\ge 
c_2\bigl(\min\{f^{-1},\theta'\}+|f^{-1}-\theta'|f^{-1}\theta'\bigr)\Theta
-C_9Q
\\&\ge 
c_2\min\{f^{-1},\theta'\}\Theta
-C_9Q
.
\end{split}
\label{2507182240bb}
\end{align}
Note that we have dropped a positive contribution since it is not necessary here. 
However, it is essential in the proof of Proposition~\ref{prop:absence-eigenvalues-1bbb}. 

Next we discuss $Q$. We claim that 
\begin{align}
\begin{split}
Q&\le 
C_{10}\bigl(|\theta''|+f^{-1-\rho}\bigr)\Theta
+C_{10}\bigl(\chi_{a-1,a+1}^2+\chi_{b-1,b+1}^2\big)f^{-1}\mathrm e^\theta
\\&\phantom{{}={}}{}
+C_{10}(H-\lambda)f^{1+\rho}\Theta(H-\lambda)
.
\end{split}
\label{250727}
\end{align}
For that it suffices to consider the second term of \eqref{eq:15091119b}
since the contribution from $|\chi_{a,b}'|\mathrm e^\theta$ is obviously absorbed into the 
second term of \eqref{250727}. 
However, the second term of \eqref{eq:15091119b} is obviously absorbed into the 
right-hand side of \eqref{250727} due to Lemma~\ref{2507172236} 
and the Cauchy--Schwarz inequality.
Thus we obtain \eqref{250727}. 

Therefore by \eqref{2507182240bb} and \eqref{250727} it follows that 
\begin{align}
\begin{split}
\mathop{\mathrm{Im}}\bigl(A_\Theta(H-\lambda)\bigr)
&
\ge 
c_2\bigl(\min\{f^{-1},\theta'\}-C_{11}|\theta''|-C_{11}f^{-1-\rho}\bigr)\Theta
\\&\phantom{{}={}}{}
-C_{11}\bigl(\chi_{a-1,a+1}^2+\chi_{b-1,b+1}^2\big)f^{-1}\mathrm e^\theta
\\&\phantom{{}={}}{}
-C_{11}(H-\lambda)f^{1+\rho}\Theta(H-\lambda)
,
\end{split}
\label{2507182240bbf}
\end{align}
and finally it suffices to look at the coefficient of the first term on the right-hand side of \eqref{2507182240bbf}. 
However, it is straightforward by restricting the ranges of parameters. 
We first retake $a_0\in\mathbb N_0$ larger if necessary, and then choose $\mu\in (0,1]$ sufficiently small.
Then we can verify that on $\mathop{\mathrm{supp}}\Theta=\mathop{\mathrm{supp}}\chi_{a,b}$ 
\begin{align*}
\min\{f^{-1},\theta'\}-C_{11}|\theta''|-C_{11}f^{-1-\rho}
\ge c_3\min\{f^{-1},\theta'\}.
\end{align*}
Thus we are done with the proof. 
\end{proof}

\begin{proof}[Proof of Proposition~\ref{prop:absence-eigenvalues-1b}]
Let $\phi\in\mathcal B^*_0$ and $\lambda\in\mathbb R\setminus[m_-,m_+]$ be as in the assertion, 
and we follow the strategy outlined in Remarks~\ref{26020817}. 
Let $\kappa_0\in[0,\infty]$ be from \eqref{2602081719}, 
let $\mu>0$ and $a_0\in\mathbb N_0$ be from Lemma~\ref{25071915},
and choose $\kappa\ge 0$ as in Remarks~\ref{26020817}.
Then, taking the expectation of the inequality from Lemma~\ref{25071915} 
on the state $\chi_{a-2,b+2}\phi$, 
we obtain that for $b>a\ge a_0$ and $\nu\in\mathbb N_0$
\begin{align}
\begin{split}
\bigl\|(\min\{f^{-1},\theta'\}\Theta)^{1/2}\phi\bigr\|^2
&
\le 
C_a\|\chi_{a-1,a+1}\phi\|^2
+C_\nu 2^{-\nu}\|\chi_{b-1,b+1}\phi\|^2
\end{split}
\label{eq:11.7.16.3.22a}
\end{align}
with $C_a$ being independent of $b,\nu$, and $C_\nu$ of $a,b$. 
Now we take the limit $b\to\infty$ in \eqref{eq:11.7.16.3.22a}. 
Since the second term on the right-hand side vanishes 
due to the assumption $\phi\in \mathcal B^*_0$,
it follows by Lebesgue's monotone convergence theorem that 
\begin{align}
\bigl\|(\bar\chi_a \min \{f^{-1},\theta'\}\mathrm{e}^{\theta})^{1/2}\phi\bigr\|^2
 &\le 
C_a\|\chi_{a-1,a+1}\phi\|^2
\label{eq:11.7.16.3.43a}
\end{align}
with $\bar\chi_{a}$ being from \eqref{eq:11.7.11.5.14}. 
Next we let $\nu \to\infty$ in \eqref{eq:11.7.16.3.43a},
and then by Lebesgue's monotone convergence theorem again it follows that 
\[
\bar\chi_a^{1/2}f^{-1/2}\mathrm{e}^{(\kappa+\mu) f}\phi\in \mathcal H.
\] 
This implies $\mathrm e^{(\kappa+\mu) f}\phi\in L^2_{-1/2}\subset \mathcal B^*_0$,
and thus a contradiction. We are done.
\end{proof}

\subsection{Absence of super-exponentially decaying eigenstates}\label{subsec:150909513}

To prove Proposition~\ref{prop:absence-eigenvalues-1bbb} we present to the following key lemma. 

\begin{lemma}\label{25071915b}
Let $\lambda\gtrless m_\pm$, respectively, and fix $\mu=0$
in the definition \eqref{eq:15.2.15.5.8bb} of $\Theta$, 
so that $\theta=2\kappa f$, and that $\Theta$ is independent of $\nu\in\mathbb N_0$.
Then there exist $c,C>0$ and $a_0\in\mathbb N_0$ such that uniformly in $\kappa\ge 1$
and $b>a\ge a_0$
\begin{align*}
\pm\mathop{\mathrm{Im}}\bigl(A_\Theta(H-\lambda)\bigr)
&\ge 
c\kappa^2f^{-1}\Theta
-C\kappa^2\bigl(\chi_{a-1,a+1}^2+\chi_{b-1,b+1}^2\big)f^{-1}\mathrm e^{2\kappa f}
\\&\phantom{={}}{}
-(H-\lambda)\gamma(H-\lambda),
\end{align*}
respectively, 
where $\gamma=\gamma_{a,b,\kappa}$ is a certain function
satisfying $\mathop{\mathrm{supp}}\gamma\subset\mathop{\mathrm{supp}}\chi_{a,b}$ 
and $|\gamma|\le C_{a,b,\kappa}$.
\end{lemma}

\begin{proof}
The proof is very similar to that of Lemma~\ref{25071915}, 
and we present only key steps, omitting details of the computations. 
Note that here we have to take particular care of dependence on the parameter $\kappa\ge 1$, 
and we squeeze positivity from a different term from Lemma~\ref{25071915}. 
We only discuss the upper sign. 
Fix any $\lambda> m_+$ and $\mu=0$ as in the assertion.
We first choose $a_0\in \mathbb N_0$ such that for some $c_1>0$
\begin{equation}
\min\{\lambda-q_0,\lambda-q_0-q_1\}\ge c_1\ \ \text{and}\ \ 
\theta'-f^{-1}= 2\kappa-f^{-1}\ge c_1\kappa
\label{2507182241bb}
\end{equation}
uniformly in $\{x\in\mathbb R^d;\ |x|\ge 2^{a_0}\}$ and $\kappa\ge 1$,
but later we may retake it larger. 
All below estimates are uniform in $\kappa\ge 1$ and $b>a\ge a_0$.

Now let us start to compute the left-hand side of the asserted inequality. 
In this proof we set    
\begin{align}
\begin{split}
Q&=
\kappa^2\bigl(\chi_{a,b}f^{-1-\rho}+|\chi_{a,b}'|\bigr)\mathrm e^{2\kappa f}
+p_j\bigl(\chi_{a,b}f^{-1-\rho}+|\chi_{a,b}'|\bigr)\mathrm e^{2\kappa f} p_j
\\&\phantom{{}={}}
+\kappa(H-\lambda)f^{1+\rho}\Theta(H-\lambda)
.
\end{split}\label{beq:15091119b}
\end{align}
Then, similarly to \eqref{25071722b} and \eqref{2507182240b}, we can deduce, noting also \eqref{2507182241bb},  
\begin{align}
\begin{split}
\mathop{\mathrm{Im}}\bigl(A_\Theta(H-\lambda)\bigr)
&
\ge 
\tfrac12(\lambda-q_0-q_1)(2\kappa+f^{-1})\Theta 
- \tfrac14(\lambda-q_0)(2\kappa-f^{-1})\Theta 
\\&\phantom{{}={}}
- \tfrac14\mathrm e^{2\kappa f}f^{d-2}p_ih_{il} \alpha_l\Xi f^{4-2d}\mathrm e^{-2\kappa f}\alpha_jh_{jk}p_kf^{d-2}\mathrm e^{2\kappa f}
\\&\phantom{{}={}}
-C_1Q
,
\end{split}
\label{2507182240bbc}
\end{align}
where $h_{jk}$ is defined as \eqref{25072714}, and 
\begin{equation*}
\Xi=\chi_{a,b}(\lambda-q_0)^{-1}(2\kappa-f^{-1}),\quad 
\widetilde\Xi=\chi_{a,b}(\lambda-\widetilde q_0)^{-1}(2\kappa-f^{-1})
.
\end{equation*}
As for the third term on the right-hand side of \eqref{2507182240bbc}, 
we repeat the arguments as in \eqref{250726}, \eqref{250726f} and \eqref{25072622}, 
and then we obtain 
\begin{align}
\begin{split}
&
- \tfrac14\mathrm e^{2\kappa f}f^{d-2}p_ih_{il} \alpha_l\widetilde\Xi f^{4-2d}\mathrm e^{-2\kappa f}\alpha_jh_{jk}p_kf^{d-2}\mathrm e^{2\kappa f}
\\&
\ge 
- \tfrac14(\lambda-q_0-2q_1)(2\kappa-f^{-1})\Theta
+\tfrac{\kappa}2\widetilde\Xi f^{-1}\mathrm e^{2\kappa f}
-C_3Q
.
\end{split}
\label{250726b}
\end{align}
Hence by \eqref{2507182240bbc} and \eqref{250726b} it follows that 
\begin{align}
\begin{split}
\mathop{\mathrm{Im}}\bigl(A_\Theta(H-\lambda)\bigr)
&
\ge 
c_2\kappa^2 f^{-1}\Theta
-C_4Q
.
\end{split}
\label{2507182240bbb}
\end{align}
Finally, similarly to \eqref{250727}, we can control $Q$ by Lemma~\ref{2507172236} as 
\begin{align}
\begin{split}
Q&\le 
C_5\kappa^2f^{-1-\rho}\Theta
+C_5\kappa^2\bigl(\chi_{a-1,a+1}^2+\chi_{b-1,b+1}^2\big)f^{-1}\mathrm e^{2\kappa f}
\\&\phantom{{}={}}{}
+C_5\kappa (H-\lambda)f^{1+\rho}\Theta(H-\lambda)
,
\end{split}
\label{250727b}
\end{align}
so that by \eqref{2507182240bbb} and \eqref{250727b}
\begin{align*}
\begin{split}
\mathop{\mathrm{Im}}\bigl(A_\Theta(H-\lambda)\bigr)
&
\ge 
(c_3-C_6f^{-\rho})\kappa^2f^{-1}\Theta
+C_6\kappa^2\bigl(\chi_{a-1,a+1}^2+\chi_{b-1,b+1}^2\big)f^{-1}\mathrm e^{2\kappa f}
\\&\phantom{{}={}}{}
+C_6\kappa (H-\lambda)f^{1+\rho}\Theta(H-\lambda)
.
\end{split}
\end{align*}
Therefore, letting $a_0\in\mathbb N_0$ larger if necessary, we are done with the proof. 
\end{proof}

\begin{proof}[Proof of Proposition~\ref{prop:absence-eigenvalues-1bbb}]
Let $\phi\in\mathcal B^*_0$ and $\lambda\in\mathbb R\setminus[m_-,m_+]$ be as in the assertion. 
Let $a_0\in\mathbb N_0$ be from Lemma~\ref{25071915b},
and taking the expectation of the inequality from Lemma~\ref{25071915b} on the state $\chi_{a-2,b+2}\phi$. 
Then we have for any $\kappa\ge 1$ and $b>a\ge a_0$
\begin{equation}
\|\chi_{a,b}^{1/2}\mathrm e^{\kappa f}\phi\|^2
\leq 
C\|\chi_{a-1,a+1}\mathrm e^{\kappa f}\phi\|^2
+C 2^{-b}\|\chi_{b-1,b+1}\mathrm e^{\kappa f}\phi\|^2.
\label{eq:11.7.16.3.22ab}
\end{equation}
If we let $b\to\infty$ in \eqref{eq:11.7.16.3.22ab},  
the second term on the right-hand side vanishes 
since $\mathrm e^{\kappa f}\phi\in \mathcal B^*_0$, 
and thus by Lebesgue's monotone convergence theorem 
\[
\bigl\|\bar\chi_a^{1/2} \mathrm{e}^{\kappa f}\phi\bigr\|^2
\le 
C\|\mathrm{e}^{\kappa f}\chi_{a-1,a+1}\phi\|^2,
\]
or 
\begin{equation}
\bigl\|\bar\chi_a^{1/2} \mathrm{e}^{\kappa (f-2^{a+2})}\phi\bigr\|^2
\le 
C\|\chi_{a-1,a+1}\phi\|^2,
\label{eq:11.7.16.3.43ab}
\end{equation}
with $\bar\chi_{a}$ being from \eqref{eq:11.7.11.5.14}. 
Now we assume $\bar\chi_{a+2}\phi\not\equiv 0$ on $\mathbb R^d$.
Then the left-hand side of \eqref{eq:11.7.16.3.43ab} grows exponentially as $\kappa\to\infty$, 
whereas the right-hand side remains bounded.
This is a contradiction, and thus $\bar\chi_{b+2}\phi\equiv 0$. 
Then by the unique continuation property \cite{MR1741374} we conclude that $\phi\equiv 0$ on $\mathbb R^d$. 
We are done. 
\end{proof}

\section{Proof of LAP bounds}

In this section prove Theorem~\ref{thm:12.7.2.7.9}. 
The proof again relies on commutator arguments \cite{MR4062329} with conjugate operator 
$A_\Theta$ from \eqref{260728}, however, 
with a different weight function $\Theta$ from that for Rellich's theorem.
We in fact use
\begin{align}
\Theta=\Theta_{a,\nu}=\bar\chi_a\theta,\quad 
\theta=\theta_\nu=\int_0^{f/2^\nu}(1+\tau)^{-1-\delta}\,\mathrm d\tau
=\delta^{-1}\bigl(1-(1+f/2^\nu)^{-\delta}\bigr), 
\label{eq:15.2.15.5.8}
\end{align}
depending on parameters $a,\nu\in\mathbb N_0$, see \eqref{eq:11.7.11.5.14}
for the definition of $\bar\chi_a$.
Here $\delta\in (0,\rho)$ is an arbitrarily fixed constant,
and the dependence on it is suppressed 
since we shall not retake it in this section.

In Section~\ref{subsec:Improved radiation conditionsb}
we present preliminary commutator computations.  
In Section~\ref{subsec:15.2.14.14.41} 
we prove Theorem~\ref{thm:12.7.2.7.9} by contradiction 
to Rellich's theorem. 
Throughout the section we assume Assumption~\ref{2571017}.

\subsection{Preliminaries}
\label{subsec:Improved radiation conditionsb}

\subsubsection{Properties of weight function}
Here we present a key commutator estimate for the proof of Theorem~\ref{thm:12.7.2.7.9},
but before that  
let us recall some elementary properties of the function $\theta$ from \eqref{eq:15.2.15.5.8}.
As in the previous section we denote the derivatives in $f$ by primes,
such as 
\begin{align}
\theta'=2^{-\nu}(1+f/2^\nu)^{-1-\delta},\quad
\theta''=-2^{-2\nu}(1+\delta)(1+f/2^\nu{})^{-2-\delta}.
\label{eq:15.2.15.5.9}
\end{align}

\begin{lemma}\label{lem:15.2.18.14.28}
There exist $c,C>0$ such that uniformly in $\nu\in\mathbb N_0$  
\begin{align*}
&c/2^\nu{} \le \theta\le \min\{C,f/2^\nu\},\quad 
c (\min\{2^\nu,f\})^{\delta}f^{-1-\delta}\theta\le \theta'\le f^{-1}\theta.
\end{align*}
In addition, for any $k=2,3,\dots$ there exists $C_k$ such that uniformly in $\nu\in\mathbb N_0$  
\begin{align*}
0\le (-1)^{k-1}\theta^{(k)}\le C_kf^{-k}\theta.
\end{align*}
\end{lemma}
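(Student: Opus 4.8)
The plan is to verify all the stated bounds by direct computation from the explicit formula
$\theta=\delta^{-1}\bigl(1-(1+f/2^\nu)^{-\delta}\bigr)$ together with the derivative formulas \eqref{eq:15.2.15.5.9}, treating the two regimes $f\le 2^\nu$ and $f\ge 2^\nu$ separately so that the quantities $\min\{2^\nu,f\}$ and $\min\{C,f/2^\nu\}$ become explicit. First I would record the elementary inequality $\tfrac12\min\{1,s\}\le 1-(1+s)^{-\delta}\le \min\{\delta s,1\}$ for $s\ge 0$ (valid for $\delta\in(0,1)$), which upon substituting $s=f/2^\nu$ immediately yields $c/2^\nu\le\theta$ in the regime $f\gtrsim 2^\nu$, the bound $\theta\le f/2^\nu$ in the regime $f\lesssim 2^\nu$, and the uniform bound $\theta\le C:=\delta^{-1}$ everywhere; one only has to double-check the lower bound $\theta\ge c/2^\nu$ when $f$ is comparable to or smaller than $2^\nu$ is vacuous since there $c/2^\nu$ may exceed $\theta$ — so in fact I expect the intended reading is $\theta\ge c\min\{1,f/2^\nu\}/2^\nu$ is not what is written; rather $c/2^\nu\le\theta$ should be read with the understanding that $\theta\ge\theta(2^\nu)=c'/2^\nu$ only for $f\ge 2^\nu$, and for $f\le 2^\nu$ one uses instead $\theta\ge cf/2^{2\nu}$... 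Here I would simply follow the paper's normalization: since $\theta$ is increasing in $f$ and $\theta\ge\delta^{-1}(1-2^{-\delta})$ for $f\ge 2^\nu$, while for $f\le 2^\nu$ we have the stronger comparison $\theta\sim f/2^\nu$, the two-sided statement $c/2^\nu\le\theta\le\min\{C,f/2^\nu\}$ is to be interpreted as holding where both sides are finite, and I would make the regime split explicit in the writeup.

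Next, for the ratio $\theta'/\theta$, I would compute directly: from $\theta'=2^{-\nu}(1+f/2^\nu)^{-1-\delta}$ and $\theta=\delta^{-1}(1-(1+f/2^\nu)^{-\delta})$, set $u=(1+f/2^\nu)^{-\delta}\in(0,1)$, so $\theta'=2^{-\nu}(1+f/2^\nu)^{-1}u$ and $\theta=\delta^{-1}(1-u)$, giving
\[
\frac{\theta'}{\theta}=\frac{\delta\,u}{(1-u)}\cdot\frac{2^{-\nu}}{1+f/2^\nu}=\frac{\delta\,u}{(1-u)(2^\nu+f)}.
\]
For the upper bound $\theta'\le f^{-1}\theta$ I need $\delta u/(1-u)\le (2^\nu+f)/f=1+2^\nu/f$, which after rearranging becomes an elementary inequality in $s=f/2^\nu$, namely $\delta(1+s)^{-\delta}\le(1-(1+s)^{-\delta})(1+1/s)$, equivalently (writing $v=(1+s)^{-\delta}$, $v\in(0,1)$) a one-variable inequality provable by monotonicity/convexity; the cleanest route is to note $\theta'\le f^{-1}\theta$ is equivalent to $(f\theta)'\le 2\theta$... actually to $(\log\theta)'\le 1/f$, i.e. $f\mapsto\theta/f$ is... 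I would instead verify it via the integral representation $\theta=\int_0^{f}2^{-\nu}(1+\tau/2^\nu)^{-1-\delta}\,d\tau$, whence $\theta'=2^{-\nu}(1+f/2^\nu)^{-1-\delta}$ and, since the integrand is decreasing, $\theta\ge f\cdot 2^{-\nu}(1+f/2^\nu)^{-1-\delta}=f\theta'$, which is exactly $\theta'\le f^{-1}\theta$ — clean and immediate. For the lower bound $c(\min\{2^\nu,f\})^\delta f^{-1-\delta}\theta\le\theta'$ I again split: when $f\le 2^\nu$, $\min\{2^\nu,f\}=f$ and the claim reduces to $cf^{-1}\theta\le\theta'$, which follows since on this regime $(1+f/2^\nu)^{-1-\delta}\ge 2^{-1-\delta}$ while $\theta\le f/2^\nu$, giving $\theta'\ge 2^{-1-\delta}2^{-\nu}\ge 2^{-1-\delta}f^{-1}\theta$; when $f\ge 2^\nu$, $\min\{2^\nu,f\}=2^\nu$ and we need $c\,2^{\nu\delta}f^{-1-\delta}\theta\le\theta'$, i.e. (using $\theta\le C$ and $\theta'\ge c'2^{-\nu}(f/2^\nu)^{-1-\delta}=c'2^{\nu\delta}f^{-1-\delta}$) this is immediate from the lower bound on $\theta'$ and upper bound on $\theta$.

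Finally, for the higher derivatives: differentiating $\theta'=2^{-\nu}(1+f/2^\nu)^{-1-\delta}$ repeatedly in $f$ gives
\[
\theta^{(k)}=2^{-k\nu}(-1)^{k-1}(1+\delta)(2+\delta)\cdots(k-1+\delta)\,(1+f/2^\nu)^{-k-\delta}\quad(k\ge 1),
\]
so the sign $(-1)^{k-1}\theta^{(k)}\ge 0$ is manifest, and $|\theta^{(k)}|\le C_k\,2^{-k\nu}(1+f/2^\nu)^{-k-\delta}$. To finish I need $2^{-k\nu}(1+f/2^\nu)^{-k-\delta}\le C_k f^{-k}\theta$. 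Using $\theta'\ge c2^{\nu\delta}(\min\{2^\nu,f\})^{\delta}\cdot$ — more directly, I would compare $|\theta^{(k)}|$ to $f^{-(k-1)}\theta'$: since $2^{-k\nu}(1+f/2^\nu)^{-k-\delta}=\bigl(2^{-\nu}(1+f/2^\nu)^{-1}\bigr)^{k-1}\theta'\le f^{-(k-1)}\theta'$ (because $2^{-\nu}(1+f/2^\nu)^{-1}=1/(2^\nu+f)\le f^{-1}$), we get $|\theta^{(k)}|\le C_k f^{-(k-1)}\theta'\le C_k f^{-(k-1)}\cdot f^{-1}\theta=C_k f^{-k}\theta$ by the already-established $\theta'\le f^{-1}\theta$. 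The main obstacle I anticipate is not any single hard estimate but rather the bookkeeping around the two-sided bound on $\theta$ itself: the displayed inequality $c/2^\nu\le\theta\le\min\{C,f/2^\nu\}$ is only consistent where $f$ is large compared to $2^\nu$, so in the writeup I must be careful to state precisely in which regime each half holds (or reinterpret the first inequality as $\theta\ge c\min\{1,f/2^\nu\}\cdot 2^{-\nu}$ is not it either) — the safest is to prove $\theta\le\min\{\delta^{-1},f/2^\nu\}$ globally and $\theta\ge c\,2^{-\nu}$ for $f\ge 2^\nu$ together with $\theta\ge c\,f/2^\nu$ for $f\le 2^\nu$, and note these are exactly what is used downstream; everything else is the routine regime-splitting sketched above.
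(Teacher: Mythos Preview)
The paper does not actually prove this lemma; it simply refers to \cite{MR4062329}. Your direct approach therefore goes beyond what the paper provides, and most of it is correct: the integral-representation argument for $\theta'\le f^{-1}\theta$ (the integrand is decreasing, so $\theta\ge f\theta'$) is clean, the regime split for the lower bound on $\theta'$ works, and the chain $|\theta^{(k)}|\le C_k f^{-(k-1)}\theta'\le C_k f^{-k}\theta$ for the higher derivatives is exactly right.

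Your one genuine stumble is the confusion over the lower bound $c/2^\nu\le\theta$. You write that this inequality ``is only consistent where $f$ is large compared to $2^\nu$'' and then propose to reinterpret or weaken the statement. This is unnecessary, and the missing observation is simply that $f\ge 1$ on all of $\mathbb R^d$ by construction (recall $f(x)=\chi(|x|)+|x|(1-\chi(|x|))$, which equals $1$ for $|x|\le 1$ and is a convex combination of $1$ and $|x|\ge 1$ otherwise). Hence $f/2^\nu\ge 2^{-\nu}$ always, and since $\theta$ is increasing in $f$,
\[
\theta\ge \delta^{-1}\bigl(1-(1+2^{-\nu})^{-\delta}\bigr)\ge 2^{-1-\delta}\,2^{-\nu}
\]
uniformly in $\nu\in\mathbb N_0$, using $1-(1+s)^{-\delta}\ge \delta s(1+s)^{-1-\delta}$ with $s=2^{-\nu}\le 1$. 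The two-sided bound $c/2^\nu\le\theta\le\min\{C,f/2^\nu\}$ is then globally consistent (since $c\le 1\le f$), and no reinterpretation is needed. Once you invoke $f\ge 1$, everything in your sketch goes through exactly as the lemma is stated.
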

\begin{proof}
See \cite{MR4062329} for the proof. 
\end{proof}

The weight $\theta'$ conforms with the $\mathcal B^*$-norm in 
the following manner. 

\begin{lemma}\label{25072913}
There exist $c,C>0$ such that for any $\psi\in\mathcal B^*$ 
\begin{equation*}
c\|\psi\|_{\mathcal B^*}
\le \sup_{\nu\in\mathbb N_0}\|\theta'^{1/2}\psi\|
\le C\|\psi\|_{\mathcal B^*}.
\end{equation*}
\end{lemma}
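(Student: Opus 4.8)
The plan is to compare both quantities with the $\mathcal B^*$-norm shell by shell, using the dyadic partition of $\mathbb R^d$ underlying the definition of $\mathcal B^*$. I write $\theta_\nu'=2^{-\nu}(1+f/2^\nu)^{-1-\delta}$ to display the parameter $\nu$ suppressed in \eqref{eq:15.2.15.5.9}, and set $a_\mu=\|F_\mu\psi\|_{\mathcal H}^2$ for $\mu\in\mathbb N_0$. Since the shells $\{2^\mu\le f<2^{\mu+1}\}$, $\mu\in\mathbb N_0$, partition $\mathbb R^d$,
\[
\|(\theta_\nu')^{1/2}\psi\|^2=\sum_{\mu\in\mathbb N_0}\int_{F_\mu}\theta_\nu'\,|\psi|^2,
\qquad
\|\psi\|_{\mathcal B^*}^2=\sup_{\mu\in\mathbb N_0}2^{-\mu}a_\mu,
\]
and, since $\psi\in\mathcal B^*$, each $a_\mu\le 2^\mu\|\psi\|_{\mathcal B^*}^2<\infty$.

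For the \emph{lower} estimate I take $\nu=\mu$: on $F_\mu$ one has $f/2^\mu\in[1,2)$, hence $\theta_\mu'\ge 2^{-\mu}3^{-1-\delta}$ there, so that
\[
2^{-\mu}3^{-1-\delta}a_\mu\le\int_{F_\mu}\theta_\mu'|\psi|^2\le\|(\theta_\mu')^{1/2}\psi\|^2\le\sup_{\nu\in\mathbb N_0}\|(\theta_\nu')^{1/2}\psi\|^2.
\]
Taking the supremum over $\mu$ gives $\|\psi\|_{\mathcal B^*}^2\le 3^{1+\delta}\sup_\nu\|(\theta_\nu')^{1/2}\psi\|^2$, i.e.\ the left inequality of the lemma.

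For the \emph{upper} estimate I use the two trivial pointwise bounds $\theta_\nu'\le 2^{-\nu}$ and $\theta_\nu'\le 2^{\nu\delta}f^{-1-\delta}$ (the latter from $1+t\ge t$), applying the former on the shells with $\mu\le\nu$ and the latter, together with $f\ge 2^\mu$ on $F_\mu$, on those with $\mu>\nu$:
\[
\|(\theta_\nu')^{1/2}\psi\|^2
\le 2^{-\nu}\sum_{\mu\le\nu}a_\mu+2^{\nu\delta}\sum_{\mu>\nu}2^{-(1+\delta)\mu}a_\mu
\le\Big(2+\sum_{k\ge1}2^{-\delta k}\Big)\|\psi\|_{\mathcal B^*}^2,
\]
where I inserted $a_\mu\le 2^\mu\|\psi\|_{\mathcal B^*}^2$ and used $2^{-\nu}\sum_{\mu\le\nu}2^\mu\le 2$ together with $2^{\nu\delta}\sum_{\mu>\nu}2^{-\delta\mu}=\sum_{k\ge1}2^{-\delta k}=(2^\delta-1)^{-1}$. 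Since this final bound does not depend on $\nu$, taking the supremum over $\nu$ yields the right inequality of the lemma.

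The argument is entirely elementary; the only point that uses structure rather than bookkeeping is the convergence of the tail series $\sum_{k\ge1}2^{-\delta k}$, which rests on $\theta_\nu'$ decaying at the rate $f^{-1-\delta}$ and not merely $f^{-1}$ — exactly the effect of the fixed exponent $\delta\in(0,\rho)$ in the weight \eqref{eq:15.2.15.5.8}. The same estimate appears in \cite{MR4062329}.
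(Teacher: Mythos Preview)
Your proof is correct and follows essentially the same approach as the paper's: both establish the lower bound by restricting $\theta_\nu'$ to the single shell $F_\nu$ where it is comparable to $2^{-\nu}$, and the upper bound by splitting the dyadic sum at $\mu\approx\nu$ and applying the two pointwise bounds $\theta_\nu'\le 2^{-\nu}$ and $\theta_\nu'\le 2^{\nu\delta}f^{-1-\delta}$. The only cosmetic difference is that you work with squared norms and exploit the orthogonality of the shells, whereas the paper applies the triangle inequality to the unsquared norms.
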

\begin{proof}
The former inequality is obvious since there exists $C_1>0$ such that 
uniformly in $\nu\in \mathbb N_0$ 
\[
2^{-\nu}F_\nu\le C_1\theta',
\]
where $F_\nu$ is from \eqref{250729}. 
On the other hand, as for the latter, 
by noting the expression \eqref{eq:15.2.15.5.9} 
we can decompose and bound the norm for any $\nu\in\mathbb N_0$ as 
\begin{align*}
\|\theta'^{1/2}\psi\|
&\le 
\sum_{\mu=0}^{\nu-1}\|F_\mu\theta'^{1/2}\psi\|
+
\sum_{\mu=\nu}^\infty\|F_\mu\theta'^{1/2}\psi\|
\\&\le 
\sum_{\mu=0}^{\nu-1}2^{-\nu/2}\|F_\mu\psi\|
+
\sum_{\mu=\nu}^\infty 2^{\delta\nu/2}\|F_\mu f^{-(1+\delta)/2}\psi\|
\\&\le 
\sum_{\mu=0}^{\nu-1}2^{-(\nu-\mu)/2}2^{-\mu/2}\|F_\mu\psi\|
+
\sum_{\mu=\nu}^\infty 2^{-\delta(\mu-\nu)/2}2^{-\mu/2}\|F_\mu \psi\|
\\&\le 
\|\psi\|_{\mathcal B^*}
\Biggl(
\sum_{\kappa=1}^{\nu}2^{-\kappa/2}
+
\sum_{\kappa=0}^\infty 2^{-\delta\kappa/2}
\Biggr).
\end{align*}
This in fact implies the latter bound. 
\end{proof}

\subsubsection{Commutator estimate}

Now we state and prove the main commutator inequality of the section. 
Recall notation $\sigma$ defined right after \eqref{25100422}. 

\begin{lemma}\label{lem:14.10.4.1.17ffaa}
Let $I\subset \mathbb R\setminus [m_-,m_+]$ be a compact interval. 
Then there exist $c,C>0$ and $a,b\in\mathbb N_0$ such that uniformly in $z\in I_\pm$
and $\nu\in\mathbb N_0$
\begin{equation*}
\sigma\mathop{\mathrm{Im}}\bigl(A_\Theta(H-z)\bigr)
\ge c\theta'-C\chi_b^2\theta-\mathop{\mathrm{Re}}(\gamma (H-z))
-C(H-z)^*(H-z),
\end{equation*}
respectively, 
where $\gamma=\gamma_{z,\nu}$ is a uniformly bounded $n\times n$ matrix-valued function,
i.e., $|\gamma|\le C$ independently of $z\in  I_\pm$ and $\nu\in\mathbb N_0$.
\end{lemma}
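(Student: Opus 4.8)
The plan is to mimic the structure of the proof of Lemma~\ref{25071915}, but now with the \emph{sign} coming from $\kappa>0$ (here encoded in the fact that $\theta$ from \eqref{eq:15.2.15.5.8} is bounded while $\theta'>0$ conforms with the $\mathcal B^*$-norm), and with $\lambda$ replaced by the complex parameter $z\in I_\pm$. First I would apply Lemma~\ref{25071822} with $\Theta=\bar\chi_a\theta$, writing $\mathop{\mathrm{Im}}(A_\Theta(H-z))$ as a sum of a main term $\mathop{\mathrm{Re}}(\alpha_f\theta'\Theta p_f)+\mathop{\mathrm{Re}}(\alpha_jf^{-1}\Theta\ell_{jk}p_k)$ (note $\Theta'=\bar\chi_a'\theta+\bar\chi_a\theta'$, so the $\bar\chi_a'$ part is a localized error supported near $f\sim 2^a$, and we will choose $a$ so that this region contributes negligibly after also introducing the outer cutoff $\chi_b$) plus lower-order terms $-(\partial_f(q_0+q_1))\Theta+(\Delta f)q_2\Theta+q_2\Theta'-2\mathop{\mathrm{Im}}(q_2\Theta p_f)$, all of which are controlled by the admissible-error form $Q$ built (as in \eqref{eq:15091119b}) from $(f^{-1}\theta'+f^{-1-\rho}+\text{localized cutoffs})\mathrm e^{\theta}$-type weights — except here there is no exponential, so the error reservoir is $Q=(\chi_b f^{-1}\theta+\chi_b f^{-1-\rho}+|\text{cutoffs}|)+p_j(\cdots)p_j+(H-z)^*f^{1+\rho}\Theta(H-z)$, together with the $(H-z)$-controlled pieces that will be absorbed into the $\mathop{\mathrm{Re}}(\gamma(H-z))$ and $(H-z)^*(H-z)$ terms of the statement.

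Next, following \eqref{2507182240b}, I would symmetrize: split $\mathop{\mathrm{Re}}(\alpha_f\theta'\Theta p_f)=\tfrac12\mathop{\mathrm{Re}}(\alpha_f(\theta'+f^{-1})\Theta p_f)-\tfrac12\mathop{\mathrm{Re}}(\alpha_f(f^{-1}-\theta')\Theta p_f)$ and combine with the spherical term $\mathop{\mathrm{Re}}(\alpha_jf^{-1}\Theta\ell_{jk}p_k)$ so that the $(\theta'+f^{-1})$-part reassembles into $\tfrac12\mathop{\mathrm{Re}}(\alpha_j(\theta'+f^{-1})\Theta p_j)=\tfrac12(z-q)(\theta'+f^{-1})\Theta+\tfrac12\mathop{\mathrm{Re}}((\theta'+f^{-1})\Theta(H-z))$ — here, since $z\notin\mathbb R$ but $\mathrm{Re}(z-q_0)=\mathrm{Re}\,z-q_0\ge c_1>0$ uniformly for large $|x|$ on $I\subset(m_+,\infty)$ (or $\le -c_1$ on $I\subset(-\infty,m_-)$, whence the $\sigma$), the real part of the expectation is positive. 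The remaining $(f^{-1}-\theta')$-pieces are handled exactly as in \eqref{250726}--\eqref{25072622}: estimate $\mathrm e^{\theta/2}\mapsto$ (here just $1$, no exponential dressing) using $|f^{-1}-\theta'|\le f^{-1}\theta$ up to $f^{-1-\delta}\theta$ from Lemma~\ref{lem:15.2.18.14.28}, introduce the auxiliary $\Xi=\chi_b(\mathrm{Re}\,z-q_0)^{-1}((f^{-1}-\theta')^2+f^{-2-2\rho})^{1/2}$ and its twiddle version via \eqref{250719}, and complete the square in $p_j$ to convert the quadratic-in-$p$ terms into $(z-q)^*\Xi(z-q)\sim\Xi$ plus $(H-z)$-errors. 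The upshot, as in \eqref{2507182240bb}, is $\sigma\mathop{\mathrm{Im}}(A_\Theta(H-z))\ge (c_1-q_1)\min\{f^{-1},\theta'\}\Theta - C Q$; but $\min\{f^{-1},\theta'\}\Theta\ge c\,\theta'\,\theta/(f^{-1}\theta+\theta')\cdot$(something) — more directly, since $\theta\ge c/2^\nu$ and $\theta'\ge c(\min\{2^\nu,f\})^\delta f^{-1-\delta}\theta$, one gets $\min\{f^{-1},\theta'\}\bar\chi_a\theta\ge c'\theta'$ after absorbing $f^{-1}$-losses for $|x|\ge 2^{a_0}$, which is the clean lower bound claimed.

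The main obstacle, I expect, is controlling the error reservoir $Q$ by the right-hand side of the assertion without an exponential weight to spare: in Lemma~\ref{25071915} the terms $f^{-1-\rho}\Theta$ and $p_jf^{-1-\rho}\Theta p_j$ were absorbed because $\rho>\delta$ gave room against the dominant $f^{-1}\theta'\mathrm e^\theta$, and here the analogous statement is that $f^{-1-\rho}\theta\le C f^{-\rho+\delta}\cdot f^{-1-\delta}\theta\le C' f^{-\rho+\delta}\theta'$, which is $o(\theta')$ for $|x|$ large since $\rho>\delta$; so one chooses $a$ (hence $a_0$) large enough that $C f^{-\rho+\delta}\le$ (half the main constant) on $\mathrm{supp}\,\bar\chi_a$. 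The quadratic error $p_j(\chi_b f^{-1-\rho}\theta)p_j$ must be pushed back onto the main term using Lemma~\ref{2507172236} with $\Xi=\chi_b f^{-1-\rho}\theta$, which produces $(z-q)^*(z-q)\chi_b f^{-1-\rho}\theta\sim \chi_b f^{-1-\rho}\theta$ plus $\tfrac12(\Delta\Xi)$ and $\mathrm{Im}(\alpha_j(\partial_j\Xi)(z-q))$ lower-order terms and $(H-z)$-controlled terms; all of these are again $\le$ main term $+$ localized cutoff $+$ $(H-z)$-terms. The localized cutoff pieces $\chi_{a-1,a+1}^2$ and the outer $\chi_b^2\theta$ are exactly the terms allowed on the right-hand side (the inner one near $f\sim 2^a$ being swallowed into $C\chi_b^2\theta$ after noting $\theta\gtrsim 2^{-a}\gtrsim f^{-1}$ there, or kept separate and absorbed once $a$ is fixed), and the $(H-z)$-terms assemble into $\mathop{\mathrm{Re}}(\gamma(H-z))$ with $|\gamma|\le C$ uniformly (the weights multiplying $(H-z)$ are all uniformly bounded functions of $f$ times $\bar\chi_a$, hence bounded) and $(H-z)^*f^{1+\rho}\Theta(H-z)\le C(H-z)^*(H-z)$ since $f^{1+\rho}\Theta=f^{1+\rho}\bar\chi_a\theta$ is \emph{not} bounded — so actually this last absorption needs care: one instead keeps $f^{1+\rho}\theta\le C f$ only where... this is the genuinely delicate bookkeeping point, and I would resolve it by noting that in the LAP setting $\phi=R(z)\psi$ lets us pair $(H-z)\phi=\psi\in\mathcal B$ against $f^{1+\rho}\Theta\phi$ with $f^{1+\rho}\theta\le C f\cdot\theta\le C'$ failing, so one must instead use the weaker $f^{1-\delta}$-type bound — but since the lemma only claims a $C(H-z)^*(H-z)$ term, I would redefine the relevant error in $Q$ with weight $\min\{f^{1+\rho}\theta, C\}$ which is bounded, making the absorption into $C(H-z)^*(H-z)$ immediate, at the cost of a harmless extra localized term. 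Finally, restricting the parameters $a,b$ (and retaking $a_0$ larger) to make all the small-constant conditions hold simultaneously finishes the proof.
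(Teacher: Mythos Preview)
There is a genuine gap in your splitting of the commutator. You write the main term from Lemma~\ref{25071822} as $\mathop{\mathrm{Re}}(\alpha_f\theta'\Theta p_f)$ and then split with weights $(\theta'\pm f^{-1})\Theta$, copying \eqref{2507182240b} verbatim. But here $\Theta=\bar\chi_a\theta$, so Lemma~\ref{25071822} gives $\mathop{\mathrm{Re}}(\alpha_f\Theta'p_f)$ with $\Theta'=\bar\chi_a\theta'+\bar\chi_a'\theta$; the main radial weight is $\bar\chi_a\theta'$, \emph{not} $\theta'\Theta=\bar\chi_a\theta\theta'$. (In the Rellich weight $\Theta=\chi_{a,b}\mathrm e^\theta$ the two happen to coincide up to a cutoff, which is why your transcription looks plausible.) The correct pairing with the spherical term, whose weight is $f^{-1}\Theta=f^{-1}\bar\chi_a\theta$, yields the split $\bar\chi_a(\theta'\pm f^{-1}\theta)$, and the machinery of \eqref{250726}--\eqref{25072622} then cancels the $f^{-1}\theta$ part exactly, leaving the clean main term $(\lambda-q_0-q_1)\bar\chi_a\theta'$. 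Your route instead produces $(c_1-q_1)\min\{f^{-1},\theta'\}\Theta=(c_1-q_1)\min\{f^{-1},\theta'\}\bar\chi_a\theta$, and your claimed bound $\min\{f^{-1},\theta'\}\bar\chi_a\theta\ge c'\theta'$ fails uniformly in $\nu$: for $f$ fixed on $\mathop{\mathrm{supp}}\bar\chi_a$ and $\nu\to\infty$ one has $\theta\sim f/2^\nu\to 0$ while $\theta'\sim 2^{-\nu}$, so the ratio is $\sim f/2^\nu\to 0$.

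Two further points. First, you never isolate the term $-\Gamma A_\Theta$ coming from $\mathop{\mathrm{Im}}z\neq 0$; this is a first-order operator with no a priori sign and is precisely the source of the $\mathop{\mathrm{Re}}(\gamma(H-z))$ in the statement. The paper handles it separately via Lemma~\ref{2507172236} with $\Xi=1$, trading $\Gamma p_jp_j$ for $\Gamma(z-q)^*(z-q)$ plus $(H-z)$-terms, and then $\Gamma=\mathop{\mathrm{Im}}(H-z)$. Second, your struggle with $(H-z)^*f^{1+\rho}\Theta(H-z)$ is self-inflicted: once the split is done correctly, every occurrence of $(H-z)$ carries a \emph{bounded} coefficient (of size $\theta$, $f^{-1}\theta$, etc.), so the error reservoir is simply $Q=f^{-1-\rho}\theta+p_jf^{-1-\rho}\theta p_j+(H-z)^*(H-z)$ with no growing weight, and the absorption $f^{-1-\rho}\theta\le Cf^{-(\rho-\delta)}\theta'$ from Lemma~\ref{lem:15.2.18.14.28} closes the estimate.
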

\begin{proof}
We discuss only the case with $\sigma=+$ and the upper sign. 
Fix a compact interval $I\subset (m_+,\infty)$, and choose $a\in\mathbb N_0$ such that 
 for some $c_1>0$
\begin{equation*}
\lambda-q_0\ge c_1
\ \ \text{uniformly in }\lambda\in I\text{ and }|x|\ge 2^{a}. 
\end{equation*}
The below estimates are all uniform in $z=\lambda+\mathrm i\Gamma\in I_+$ and $\nu\in\mathbb N_0$. 
In this proof for notational simplicity we let  
\begin{align*}
Q=f^{-1-\rho}\theta
+p_jf^{-1-\rho}\theta p_j
+(H-z)^*(H-z).
\end{align*}

Now we compute and bound the left-hand side of the asserted inequality.
By Lemmas~\ref{25071822} and \ref{lem:15.2.18.14.28} and the Cauchy--Schwarz inequality
we first have 
\begin{align}
\begin{split}
\mathop{\mathrm{Im}}\bigl(A_\Theta(H-z)\bigr)
&=
\mathop{\mathrm{Re}}\bigl(\alpha_f\Theta' p_f\bigr)
+ \mathop{\mathrm{Re}}\bigl(\alpha_jf^{-1}\Theta\ell_{jk}p_k\bigr)
- \bigl(\partial_f(q_0+q_1)\bigr)\Theta
\\&\phantom{{}={}}{}
+ (\Delta f)q_2\Theta 
+ q_2\Theta' 
- 2\mathop{\mathrm{Im}}(q_2\Theta p_f)
-\Gamma A_\Theta
\\&
\ge 
\mathop{\mathrm{Re}}\bigl(\alpha_f\bar\chi_a\theta' p_f\bigr)
+ \mathop{\mathrm{Re}}\bigl(\alpha_jf^{-1}\Theta\ell_{jk}p_k\bigr)
-\Gamma A_\Theta
-C_1Q
.
\end{split}\label{eq:14.9.4.18.22ffaa}
\end{align} 
The first and second terms on the right-hand side 
of \eqref{eq:14.9.4.18.22ffaa} combine,
similarly to \eqref{2507182240b}, as 
\begin{align}
\begin{split}
&
\mathop{\mathrm{Re}}\bigl(\alpha_f\bar\chi_a\theta' p_f\bigr)
+ \mathop{\mathrm{Re}}\bigl(\alpha_jf^{-1}\Theta\ell_{jk}p_k\bigr)
\\&
=
\tfrac12(\lambda-q)\bar\chi_a(\theta'+f^{-1}\theta) 
+\tfrac12\mathop{\mathrm{Re}}\bigl(\bar\chi_a(\theta'+f^{-1}\theta) (H-z)\bigr)
\\&\phantom{={}}{}{}
-\tfrac12\mathop{\mathrm{Re}}\bigl(\bar\chi_a(f^{-1}\theta-\theta')\alpha_jh_{jk}p_k\bigr)
\\&
\ge 
\tfrac12(\lambda-q_0-q_1)\bar\chi_a(\theta'+f^{-1}\theta)
- \tfrac14(\lambda-q_0)\bar\chi_a(f^{-1}\theta-\theta')
\\&\phantom{={}}{}
- \tfrac14p_ih_{il} \alpha_l(\lambda-q_0)^{-1}\bar\chi_a(f^{-1}\theta-\theta')\alpha_jh_{jk}p_k
-C_2Q
,
\end{split}
\label{2507182240bd}
\end{align}
with $h_{jk}$ being from \eqref{25072714}.
To bound the third term on the right-hand side of \eqref{2507182240bd}, 
we set 
\begin{equation*}
\Xi=(\lambda-q_0)^{-1}\bar\chi_a(f^{-1}\theta-\theta'),\quad 
\widetilde\Xi=(\lambda-\widetilde q_0)^{-1}\bar\chi_a(f^{-1}\theta-\theta')
\end{equation*}
for short, and proceed similarly to \eqref{250726}, \eqref{250726f} and \eqref{25072622} as 
\begin{align}
\begin{split}
&
- \tfrac14p_ih_{il} \alpha_l(\lambda-q_0)^{-1}\bar\chi_a(f^{-1}\theta-\theta')\alpha_jh_{jk}p_k
\\&
=
- \tfrac14\mathop{\mathrm{Re}}
\bigl(\widetilde\Xi h_{ij} h_{jk}p_ip_k\bigr)
- \tfrac14\mathop{\mathrm{Im}}
\bigl(\alpha_l(\partial_ih_{il} \Xi h_{jk})\alpha_jp_k\bigr)
\\&
\ge 
- \tfrac14\mathop{\mathrm{Re}}
\bigl(\widetilde\Xi \delta_{ik}p_ip_k\bigr)
-C_3Q
\\&
=
- \tfrac14p_i\alpha_i\Xi \alpha_kp_k
+ \tfrac14\mathop{\mathrm{Im}}\bigl(\alpha_i (\partial_i\Xi) \alpha_kp_k\bigr)
-C_3Q
\\&
\ge 
- \tfrac14(\lambda-q_0-2q_1)\bar\chi_a(f^{-1}\theta-\theta')
-C_4Q
.
\end{split}
\label{250726d}
\end{align}
Hence by \eqref{2507182240bd} and \eqref{250726d}
\begin{align}
\begin{split}
\mathop{\mathrm{Re}}\bigl(\alpha_f\bar\chi_a\theta' p_f\bigr)
+ \mathop{\mathrm{Re}}\bigl(\alpha_jf^{-1}\Theta\ell_{jk}p_k\bigr)
&
\ge 
(\lambda-q_0-q_1)\bar\chi_a\theta' 
-C_5Q
\\&
\ge 
(c_1-q_1)\theta'
-C_6Q
.
\end{split}\label{eq:14.9.4.18.22ffaad}
\end{align} 
As for the third term on the right-hand side of \eqref{eq:14.9.4.18.22ffaa}, 
we use the Cauchy--Schwarz inequality and Lemma~\ref{2507172236} with $\Xi=1$ 
to deduce 
\begin{align}
\begin{split}
-\Gamma A_\Theta
&
\ge 
-C_7\Gamma p_j p_j
-C_7\Gamma
\\&
\ge 
-C_7\Gamma \bigl(p_j p_j
-(z-q)^*(z-q)
\bigr)
-C_8\Gamma
\\&
=
-2C_7\Gamma\mathop{\mathrm{Re}}\bigl((z-q)^*(H-z)\bigr)
-C_7\Gamma(H-z)^*(H-z)
\\&\phantom{={}}{}
+C_8 \mathop{\mathrm{Im}}(H-z)
\\&
\ge 
-\mathop{\mathrm{Re}}(\gamma(H-z))
-C_9Q
,
\end{split}
\label{250728}
\end{align}
where we have set 
\[
\gamma=
2C_7\Gamma(z-q)^*
+\mathrm iC_8
\]
Therefore by \eqref{eq:14.9.4.18.22ffaa}, \eqref{eq:14.9.4.18.22ffaad} and \eqref{250728}
it follows that 
\begin{equation}
\mathop{\mathrm{Im}}\bigl(A_\Theta(H-z)\bigr)
\ge 
(c_1-q_1)\theta' 
-\mathop{\mathrm{Re}}(\gamma(H-z))
-C_{9}Q
,
\label{eq:14.9.4.18.22ffaadd}
\end{equation}

Finally we can show by Lemmas~\ref{2507172236} and \ref{lem:15.2.18.14.28} that 
\begin{align}
\begin{split}
Q
&\le 
C_{11}f^{-1-\rho}\theta
+C_{11}(H-z)^*(H-z)
\\&
\le 
C_{12}f^{-(\rho-\delta)}\theta'
+C_{11}(H-z)^*(H-z)
,
\end{split}
\label{eq:14.10.3.2.16ffaa}
\end{align}
which, combined with \eqref{eq:14.9.4.18.22ffaadd}, implies 
\begin{align*}
\mathop{\mathrm{Im}}\bigl(A_\Theta(H-z)\bigr)
&\ge 
\bigl(c_1-q_1-C_{13}f^{-(\rho-\delta)}\bigr)\theta' 
-\mathop{\mathrm{Re}}(\gamma(H-z))
\\&\phantom{={}}{}
+C_{13}(H-z)^*(H-z)
.
\end{align*}
Thus by letting $b\in\mathbb N_0$ be large enough we obtain the assertion. 
\end{proof}

\subsection{Contradiction to Rellich's theorem}\label{subsec:15.2.14.14.41}

Here we provide the proof of the LAP bounds. 

\begin{proof}[Proof of Theorem~\ref{thm:12.7.2.7.9}]

Let $I\subset \mathbb R\setminus[m_-,m_+]$ be a compact interval,
and fix $a,b\in\mathbb N_0$ as the assertion of Lemma~\ref{lem:14.10.4.1.17ffaa},

\smallskip
\noindent
\textit{Step I.}\ 
We first prove that there exists $C_1>0$ such that 
for any $\phi=R(z)\psi$ with $z\in I_\pm$ and $\psi\in\mathcal B$ 
\begin{equation}
\|p_1\phi\|_{\mathcal B^*}
+\dots
+\|p_d\phi\|_{\mathcal B^*}
\le 
C_1 \bigl(\|\phi\|_{\mathcal B^*}
+ \|\psi\|_{\mathcal B^*}\bigr)
.
\label{25072914}
\end{equation}
In fact, by Lemma~\ref{2507172236} with $\Xi=\theta'$ we have 
\begin{align*}
p_j\theta' p_j
&
=
(z-q)^*(z-q)\theta'
-\mathop{\mathrm{Im}}\bigl(\alpha_j(\partial_j\theta') (z-q)\bigr)
+\tfrac12(\Delta \theta')
\\&\phantom{={}}{}
+\mathop{\mathrm{Re}}
\bigl(\bigl(2(z-q)^*\theta'+\mathrm i\alpha_j(\partial_j\theta')\bigr)(H-z)\bigr)
+(H-z)^*\theta'(H-z)
,
\end{align*}
and, if we take the expectation of this inequality on the state $\phi=R(z)\psi$ and use
the Cauchy--Schwarz inequality, it follows that 
\begin{align*}
\|\theta'^{1/2} p_1\phi\|^2
+\dots
+\|\theta'^{1/2} p_d\phi\|^2
&\le 
C_2\bigl(\|\theta'^{1/2}\phi\|^2
+\|\theta'^{1/2}\psi\|^2\bigr)
.
\end{align*}
Hence by Lemma~\ref{25072913} we obtain the claim \eqref{25072914}.

\smallskip
\noindent
\textit{Step II.}\ 
Next we claim that there exist $C_3, C_4>0$ such that 
uniformly in $\nu\in\mathbb N_0$ and 
$\phi=R(z)\psi$ with $z\in I_\pm$ and $\psi\in\mathcal B$
\begin{align}
\|\theta'{}^{1/2}\phi\|^2
&\le C_3\bigl(\|\psi\|_{\mathcal B}^2
+\|\chi_b \theta^{1/2}\phi\|^2\bigr),
\label{eq:13.8.22.4.59cc}
\\
\|\phi\|_{\mathcal B^*}^2
&\le C_4\bigl(\|\psi\|_{\mathcal B}^2
+\|\chi_b \phi\|^2\bigr).
\label{eq:13.8.22.4.59ccc}
\end{align}
In fact, take the expectation of the inequality 
from Lemma~\ref{lem:14.10.4.1.17ffaa} on the state $\phi=R(z)\psi$,
and use the $\mathcal B$-$\mathcal B^*$ duality, 
the Cauchy--Schwarz inequality and \eqref{25072914},
and then we can immediately deduce \eqref{eq:13.8.22.4.59cc}.
In addition, \eqref{eq:13.8.22.4.59cc} directly follows 
from \eqref{eq:13.8.22.4.59cc} and Lemmas~\ref{lem:15.2.18.14.28} and \ref{25072913}.

\smallskip
\noindent
\textit{Step III.}
Finally we prove the assertion. By \eqref{25072914} it suffices to show 
that uniformly in $\phi=R(z)\psi$ with $z\in I_\pm$ and $\psi\in\mathcal B$
\begin{align}
\|\phi\|_{\mathcal B^*}\le C_5\|\psi\|_{\mathcal B}.
\label{eq:15.2.15.5.2}
\end{align}
We assume that this is false, 
and choose $z_k\in I_\pm$ and $\psi_k\in \mathcal B$ such that 
\begin{align}
\lim_{k\to\infty}\|\psi_k\|_{\mathcal B}=0,\quad
\|\phi_k\|_{\mathcal B^*}=1;\quad
\phi_k=R(z_k)\psi_k.
\label{eq:15.2.15.5.51}
\end{align}

We first note that we may let $(z_k)_{k\in\mathbb N}$ be convergent to a point 
in $I\subset \mathbb R$, i.e., for some $\lambda\in I$
\begin{align}
\lambda=\lim_{k\to\infty}z_k\in I.
\label{eq:15.2.15.5.52}
\end{align}
In fact, 
since 
\begin{align*}
\|\phi_k\|_{\mathcal B^*}
\le 
\|\phi_k\|
\le \|R(z_k)\|_{\mathcal L(\mathcal H)}\|\psi_k\|
\le C_6|\mathop{\mathrm{Im}}z_k|^{-1}\|\psi_k\|_{\mathcal B}
,
\end{align*}
we have $\mathop{\mathrm{Im}}z_k\to 0$ as $k\to\infty$ 
due to \eqref{eq:15.2.15.5.51}. 
Then by compactness of $I$, we can choose a 
convergent subsequence of $(\mathop{\mathrm{Re}}z_k)_{k\in\mathbb N}$.

Next, we show that for any $s>1/2$ by choosing a further subsequence 
we may let $(\phi_k)_{k\in\mathbb N}$ be convergent in $L^2_{-s}$,
i.e., for some $\phi\in L^2_{-s}$
\begin{align}
\phi=\lim_{k\to\infty}\phi_k\in L^2_{-s}
.
\label{eq:15.2.15.5.53}
\end{align}
In fact, choose any $s'\in (1/2,s)$, and then 
by \eqref{eq:15.2.15.5.51} and \eqref{25072914}
the sequence $(f^{-s'}\phi_k)_{k\in\mathbb N}$ 
is uniformly bounded in $H^1(\mathbb R^d;\mathbb C^n)$. 
Since the multiplication operator 
\[
f^{-(s-s')}\colon H^1(\mathbb R^d;\mathbb C^n)\to L^2(\mathbb R^d;\mathbb C^n)
\]
is compact, we obtain the claim \eqref{eq:15.2.15.5.53}.

By \eqref{eq:15.2.15.5.51}, \eqref{eq:15.2.15.5.52} and \eqref{eq:15.2.15.5.53}
it follows that 
\begin{align}
(H-\lambda)\phi=0 \text{ in the distributional sense}
.
\label{eq:15.2.15.14.32}
\end{align}
Furthermore, we can verify $\phi\in \mathcal B^*_0$, 
so that $\phi=0$ by Theorem~\ref{250717}.
In fact, by \eqref{eq:13.8.22.4.59cc} we have 
\[
\|\theta'{}^{1/2}\phi_k\|^2
\le C_3\bigl(\|\psi_k\|_{\mathcal B}^2
+\|\chi_b \theta^{1/2}\phi_k\|^2\bigr).
\]
If we let $s=(1+\delta)/2$ and take the limit $k\to\infty$,
we obtain by \eqref{eq:15.2.15.5.53}, \eqref{eq:15.2.15.5.51} and Lemma~\ref{lem:15.2.18.14.28} that 
for any $\nu\in\mathbb N_0$
\begin{align}
\begin{split}
\|\theta'{}^{1/2}\phi\|
&\le C_7\|\chi_b\theta^{1/2}\phi\|
\le C_72^{-\nu/2} \|\chi_bf^{1/2}\phi\|
.
\end{split}
\label{eq:15.2.15.12.9}
\end{align}
Letting $\nu\to\infty$ in \eqref{eq:15.2.15.12.9},
we obtain $\phi\in \mathcal B^*_0$, and then conclude $\phi=0$.

However, this is a contradiction, because by \eqref{eq:13.8.22.4.59ccc} we have 
\begin{align*}
1=\|\phi_k\|_{\mathcal B^*}^2\le C_4\bigl(\|\psi_k\|_{\mathcal B}^2+\|\chi_b\phi_k\|^2\big),
\end{align*}
and, as $k\to \infty$, the right-hand side converges to $0$ by \eqref{eq:15.2.15.5.51} and 
\eqref{eq:15.2.15.5.53} and $\phi=0$.
Therefore \eqref{eq:15.2.15.5.2} holds true.
\end{proof}

\section{Proof of radiation condition bounds, and applications}

\subsection{Commutator method for radiation condition bounds}

Here we prove Theorem~\ref{25081317}. 
The proof, again, depends on commutator arguments, however, it is fairly different from Ito--Skibsted~\cite{MR4062329}.  
Theorem~\ref{25081317} is almost trivial from the following bounds.

\begin{proposition}\label{25081322b}
Suppose Assumption~\ref{2571017b},
let $I\subset \mathbb R\setminus[m_-,m_+]$ be a compact interval, and let $\kappa\in (0,1/2)$. 
Then there exists $C>0$ such that for any  $z\in I_\pm$ 
\begin{align*}
&
\pi_{\mp}^2f^{-1+2\kappa}
+(p_f-\alpha_f(z-q_0))^*f^{-1+2\kappa}(p_f-\alpha_f(z-q_0))
\\&
\le 
Cf^{-1-\rho+2\kappa}
+C(H-z)^*f^{1+2\kappa}(H-z)
,
\end{align*}
respectively. 
\end{proposition}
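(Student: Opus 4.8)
The plan is to mimic the commutator scheme used for Rellich's theorem and the LAP bounds, but now with a conjugate-type operator built from the projections $\pi_\pm$ themselves, so that the sought oscillatory bounds appear as the ``positive'' part of an operator inequality. First I would introduce a weight $\Theta=\Theta_{a,\nu}$ of the form $\bar\chi_a\theta$ with $\theta=\theta_\nu$ a Yosida-type regularization of $f^{2\kappa}$ (or rather of a function with $\theta'\sim f^{-1+2\kappa}$), exactly as in \eqref{eq:15.2.15.5.8} but with the new exponent $2\kappa$, and record the analogues of Lemmas~\ref{lem:15.2.18.14.28} and \ref{25072913}. The key algebraic observation, to be exploited throughout, is that on $\{|x|\ge 2\}$ one has $\alpha_f^2=1$, hence $\pi_\pm$ are orthogonal projections with $\pi_++\pi_-=1$, $\pi_+\pi_-=0$, $\sigma\alpha_f\pi_\pm=\pm\pi_\pm$, and moreover $\alpha_f$ anticommutes with $\alpha_j\ell_{jk}$ (the spherical part of $\alpha_jp_j$) because of \eqref{25071914}. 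This is what lets one trade the Dirac operator $H-z$ for the first-order radial operator $p_f-\alpha_f(z-q_0)$ modulo spherical derivatives and $q_1$-errors.

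The main computation I would carry out is to expand
\[
\mathrm{Re}\bigl(\beta_\pm\,\Theta'\,(p_f-\alpha_f(z-q_0))\bigr)
\]
for a suitable matrix coefficient $\beta_\pm$ (essentially $\pi_\mp$ together with a scalar piece tuned to the sign of $\sigma\Gamma$), using the product rule, \eqref{2507191540}, and the decomposition $\alpha_jp_j=\alpha_fp_f+\alpha_j\ell_{jk}p_k$. Writing $B=p_f-\alpha_f(z-q_0)$, one has $\alpha_jp_j-q=\alpha_fB+\alpha_j\ell_{jk}p_k -q_1+\tfrac12(\widetilde q_0-q_0)\,\text{-type terms}$, so $H-z=\alpha_fB+(\text{spherical})+(\text{short-range})$; here the massless Assumption~\ref{2571017b} is precisely what makes $q_0-\widetilde q_0$ and the commutator $[q_0+\widetilde q_0,q_1]$ contribute only $f^{-\rho}$ resp.\ $f^{-1-\rho}$ errors, i.e.\ admissible. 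Taking the imaginary part of $A_\Theta' (H-z)$ for this modified $A_\Theta'$ should produce, as its leading positive contribution, a sum of $\pi_\mp^2\,\theta'$ and $B^*\theta' B$ (the latter coming from the $p_f$-$p_f$ cross term, controlled as in \eqref{250726}--\eqref{25072622} via Cauchy--Schwarz and Lemma~\ref{2507172236}), while the spherical terms $\alpha_j\ell_{jk}p_k$ get absorbed by another application of Lemma~\ref{2507172236} (so that $p_j\Xi p_j\sim\Xi$) together with the already-established LAP control on $p_j\phi$; the term $-\Gamma A_\Theta'$ and the remainders are bounded exactly as in \eqref{250728}. Comparing $f^{-1+2\kappa}\sim\theta'$ by Lemma~\ref{lem:15.2.18.14.28} then yields the operator inequality
\[
\pi_\mp^2 f^{-1+2\kappa}+B^*f^{-1+2\kappa}B
\le Cf^{-1-\rho+2\kappa}+C(H-z)^*f^{1+2\kappa}(H-z),
\]
after letting $a,\nu$ be chosen appropriately and discarding the manifestly nonnegative boundary/cutoff pieces; note $2\kappa<1$ keeps all the weight powers below the borderline values where the arguments of the earlier sections apply verbatim.

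The hard part will be bookkeeping the non-commutativity: $\alpha_f$, $q_0$, $\widetilde q_0$, $q_1$ and the $\pi_\pm$ do not commute with one another, so when one writes $H-z$ in terms of $B$ and differentiates the weight, cross terms like $[\alpha_f,q_1]$ or $\pi_\mp q_1\pi_\pm$ appear that are \emph{not} automatically short-range — this is exactly why Assumption~\ref{2571017b} imposes the two extra bounds, and the delicate point is to check that every term generated by the commutator either is manifestly $\le Cf^{-1-\rho+2\kappa}$, or is of the form $(H-z)^*(\cdots)(H-z)$, or can be re-absorbed into the positive left-hand side by a Cauchy--Schwarz with a small constant. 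A secondary subtlety is that the coefficient $\beta_\pm$ must be chosen so that $\mathrm{Im}(\beta_\pm\alpha_f B^* B)$ produces $\pi_\mp^2$ with the correct sign under $\sigma\mathrm{Im}(\cdot)$; I expect the right choice to involve $\beta_\pm = \pi_\mp$ plus a purely imaginary scalar multiple of the identity to handle the $-\Gamma A_\Theta'$ term, paralleling the role of the constant $C_8$ in \eqref{250728}. Once the operator inequality is in hand, Proposition~\ref{25081322b} is immediate, and Theorem~\ref{25081317} follows by sandwiching between $\phi=R(z)\psi$ and using $(H-z)\phi=\psi\in L^2_{1/2+\kappa}$ so that the last term is $C\|\psi\|_{L^2_{1/2+\kappa}}^2$.
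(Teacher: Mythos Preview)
Your proposal has the right general spirit but misses the paper's key structural insight and contains some conceptual gaps.  First, the paper does \emph{not} use a Yosida-type regularization here; the weight is simply $\Theta=\bar\chi_a f^{2\kappa}$, and since $\kappa<1/2$ all weights are subcritical, so no approximation in $\nu$ is needed.  More importantly, the paper does not extract both $\pi_\mp^2 f^{-1+2\kappa}$ and $B^* f^{-1+2\kappa}B$ (with $B=p_f-\alpha_f(z-q_0)$) from a single commutator computation, as you hope.  Instead the argument splits into two lemmas.  Lemma~\ref{25100513} computes
\[
\mathop{\mathrm{Im}}\bigl((\sigma p_f\mp(z-q_0))^*\Theta(H-z)\bigr),
\]
so the conjugate is built from the \emph{scalar} first-order operator $\sigma p_f\mp(z-q_0)$, not from $\pi_\mp$ as you propose; this yields the lower bound $c\,\pi_\mp^2 f^{-1+2\kappa}$ \emph{minus} an indefinite cross term $\sigma(\tfrac12-\kappa)\mathop{\mathrm{Re}}\bigl(f^{-1}\Theta(\alpha_f p_f-z+q_0)\bigr)$.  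Then Lemma~\ref{25100514}, a separate computation starting from $B^*f^{-1}\Theta B$ itself, shows that precisely this cross term (with its sign) is bounded below by $c\,B^*f^{-1+2\kappa}B$ modulo admissible errors.  The restriction $\kappa<1/2$ enters only to make the coefficient $\tfrac12-\kappa$ positive so that the two pieces combine.

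Your choice $\beta_\pm=\pi_\mp+(\text{scalar})$ does not match this, and your expectation that ``the $p_f$--$p_f$ cross term'' directly produces $B^*\theta' B$ with the correct sign is not supported by the computations of Sections~2--3: those yield $p_j\Xi p_j$, which Lemma~\ref{2507172236} compares to a scalar weight, not to $B^*\Xi B$.  Also, invoking ``the already-established LAP control on $p_j\phi$'' is a category error here: Proposition~\ref{25081322b} is an \emph{operator} inequality, so only the operator identity of Lemma~\ref{2507172236} is available at this stage, not Theorem~\ref{thm:12.7.2.7.9}.  The missing idea in your plan is that the cross term $-\sigma\mathop{\mathrm{Re}}\bigl(f^{-1}\Theta(\alpha_f p_f-z+q_0)\bigr)$ is the bridge between the algebraic and the analytic radiation conditions; you must isolate it explicitly from a first commutator estimate and then prove a separate lower bound for it in terms of $B^*f^{-1+2\kappa}B$.
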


We split the proof of Proposition~\ref{25081322b} into two steps,
and for the first step we use a commutator method. 
The weight function of the section is given by 
\[
\Theta=\Theta^\kappa_a=\bar\chi_af^{2\kappa}
\]
with parameters $a\in\mathbb N$ and $\kappa> 0$. 
Recall notation $\sigma$ defined right after \eqref{25100422}. 

\begin{lemma}\label{25100513}
Suppose Assumption~\ref{2571017b},
let $I\subset \mathbb R\setminus[m_-,m_+]$ be a compact interval, and let $\kappa> 0$. 
There exist $c,C>0$ and $a\in\mathbb N$ such that for any $z\in I_\pm$ 
\begin{align*}
&\mathop{\mathrm{Im}}\bigl((\sigma p_f\mp(z-q_0))^*\Theta (H-z)\bigr)
\\&\ge 
c\pi_{\mp}^2f^{-1+2\kappa}
-\sigma \bigl(\tfrac12-\kappa\bigr)\mathop{\mathrm{Re}}\bigl(f^{-1}\Theta(\alpha_fp_f-z+q_0)\bigr)
\\&\phantom{{}={}}{}
-Cf^{-1-\rho+2\kappa}
-C(H-z)^*f^{1+2\kappa}(H-z)
,
\end{align*}
respectively. In particular, it follows that for any $\epsilon>0$ there exist $c',C'>0$ such that for any $z\in I_\pm$ 
\begin{align*}
&
c'\pi_{\mp}^2f^{-1+2\kappa}
-\sigma \bigl(\tfrac12-\kappa\bigr)\mathop{\mathrm{Re}}\bigl(f^{-1}\Theta(\alpha_fp_f-z+q_0)\bigr)
\\&\le 
C'f^{-1-\rho+2\kappa}
+C'(H-z)^*f^{1+2\kappa}(H-z)
\\&\phantom{{}={}}{}
-\epsilon(p_f-\alpha_f(z-q_0))^*f^{-1+2\kappa}(p_f-\alpha_f(z-q_0))
,
\end{align*}
respectively. 
\end{lemma}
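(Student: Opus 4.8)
The plan is to prove the first, commutator inequality by computing $\mathop{\mathrm{Im}}\bigl((\sigma p_f\mp(z-q_0))^*\Theta(H-z)\bigr)$ directly as a differential operator — in the spirit of Lemma~\ref{25071822}, but with the \emph{asymmetric} first-order operator $B=\sigma p_f\mp(z-q_0)$ in place of the symmetric $A_\Theta$ — and then to obtain the ``in particular'' bound from it by a weighted Cauchy--Schwarz argument. As in the proofs of Rellich's theorem it suffices to treat $\sigma=+$ with the upper signs.

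For the commutator inequality I would expand $B^*\Theta(H-z)$ using $B^*=\sigma p_f^*\mp(\bar z-q_0)$ with $p_f^*=p_f-\mathrm i(\Delta f)$, the radial/spherical splitting $\alpha_jp_j=\alpha_fp_f+\alpha_j\ell_{jk}p_k$ of \eqref{2507191540}, and the identity $\alpha_fq_0=\widetilde q_0\alpha_f$ (immediate from \eqref{250719}, since $\alpha_jq_0=\widetilde q_0\alpha_j$). Taking imaginary parts and applying the product rule, the purely radial pieces reorganize — after completing a square in $\alpha_fp_f-(z-q)$ and using $\alpha_f^2=1$, $\pi_\mp=\tfrac12(1\mp\sigma\alpha_f)$ on $\mathop{\mathrm{supp}}\Theta$ — into the positive term $c\pi_\mp^2f^{-1+2\kappa}$ plus a first-order residue, which carries no sign and is kept as the cross term $-\sigma\bigl(\tfrac12-\kappa\bigr)\mathop{\mathrm{Re}}\bigl(f^{-1}\Theta(\alpha_fp_f-z+q_0)\bigr)$; its coefficient $\tfrac12-\kappa$ emerges once the leading $\tfrac{d-1}{f}$ contributions of $\Delta f$, of the spherical Dirac operator, and of the weight derivative $\Theta'\sim2\kappa f^{-1}\Theta$ cancel one another. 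The second-order terms produced when $p_f$ meets $H$ are reorganized with Lemma~\ref{2507172236} and the identity $z-q=-(H-z)+\alpha_jp_j$, turning them into $(z-q)^*(z-q)\Theta\sim\Theta$ plus remainders carrying an explicit factor $(H-z)$, which after a Cauchy--Schwarz step are absorbed into $C(H-z)^*f^{1+2\kappa}(H-z)$ — the generous weight $f^{1+2\kappa}$ being exactly the one matching the source class $L^2_{1/2+\kappa}$. It then remains to see that every potential contribution is short-range, i.e.\ $O(f^{-1-\rho+2\kappa})$: the terms with $q_2$ and with $\partial_fq_1$ are so by \eqref{eq:60} directly, whereas the ones coming from $q_1$ alone are only borderline, and it is precisely here that Assumption~\ref{2571017b} is used — the bound $|(q_0+\widetilde q_0)q_1-q_1(q_0+\widetilde q_0)|\le C'f^{-1-\rho}$ removes the dangerous $q_0$--$q_1$ interaction, and $|q_0-\widetilde q_0|\le C'f^{-\rho}$, equivalently $[\alpha_f,q_0]=(\widetilde q_0-q_0)\alpha_f=O(f^{-\rho})$ and so $[\pi_\mp,q_0]=O(f^{-\rho})$, lets one commute $\pi_\mp$ past $q_0$ at an admissible cost. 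The $\bar\chi_a'$-terms, first order with a coefficient of definite sign supported where $f\sim2^a$, are disposed of by Lemma~\ref{2507172236} and Cauchy--Schwarz once $a$ is large, and $a$ large also secures $\lambda-q_0\ge c_1>0$ uniformly on $I$, needed for the square completion. Collecting the pieces gives the first inequality.

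For the ``in particular'' part, write $B=\sigma\bigl((p_f-\alpha_f(z-q_0))-2\pi_\mp(z-q_0)\bigr)$ and $\alpha_fp_f-z+q_0=\alpha_f\bigl(p_f-\alpha_f(z-q_0)\bigr)$, and set $R=p_f-\alpha_f(z-q_0)$, $w=f^{-1+2\kappa}$. Splitting the scalar weight as $\Theta=w^{1/2}\cdot\bar\chi_af^{(1+2\kappa)/2}$ and applying Cauchy--Schwarz gives, as quadratic forms, $\mathop{\mathrm{Im}}(B^*\Theta(H-z))\le\eta B^*wB+\tfrac1{4\eta}(H-z)^*f^{1+2\kappa}(H-z)$ for every $\eta>0$. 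Expanding $B^*wB=R^*wR-4\mathop{\mathrm{Re}}\bigl(R^*w\pi_\mp(z-q_0)\bigr)+4w(\bar z-q_0)\pi_\mp(z-q_0)$, estimating the cross term by Cauchy--Schwarz, and using $\pi_\mp^2=\pi_\mp$, $[\pi_\mp,q_0]=O(f^{-\rho})$ and $|z-q_0|\le C$ on $\mathop{\mathrm{supp}}\Theta$, one gets $B^*wB\le(1+\eta')R^*wR+C\pi_\mp w+Cf^{-1-\rho+2\kappa}$. Inserting this into the first inequality and rearranging — with $\eta,\eta'$ so small that the coefficient of $\pi_\mp^2w$ stays positive — yields the asserted bound with the prescribed $\epsilon$, a fixed multiple of $\eta$.

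The main obstacle is the commutator computation itself. Unlike the symmetric $A_\Theta$ of Lemma~\ref{25071822}, the asymmetry of $B$ forces a first-order residue that must be pinned down \emph{exactly} as the stated cross term — no positivity is available to spare — and at the same time one must verify that all potential contributions, in particular the borderline $q_1$-terms, are short-range; this is the precise role of Assumption~\ref{2571017b} and where the algebra is most delicate. Once $B$ is re-expressed through $\pi_\mp$ and $p_f-\alpha_f(z-q_0)$, the ``in particular'' deduction is routine.
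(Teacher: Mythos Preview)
Your proposal is essentially correct and follows the paper's own route. Both arguments compute $\mathop{\mathrm{Im}}(B^*\Theta(H-z))$ directly, reduce the $p_f$-contribution to the $A_\Theta$-commutator of Lemma~\ref{25071822}, and extract the coefficient $\tfrac12-\kappa$ from the interplay between the spherical piece $\tfrac12 f^{-1}\Theta$ and the radial piece $\tfrac12\Theta'=\kappa f^{-1}\Theta$; the paper's version of this is the passage \eqref{251005}--\eqref{251005b}. Your attribution of the $\tfrac{d-1}{f}$ cancellation to $\Delta f$ is a slight misdescription --- in the paper the $(\Delta f)\Theta$-terms land entirely in $\mathop{\mathrm{Re}}(\,\cdot\,(H-z))$ and are absorbed into $Q$, not into the $\tfrac12-\kappa$ bookkeeping --- but this is only a matter of how the computation is organised and does not affect the argument. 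Your identification of where Assumption~\ref{2571017b} enters is accurate: the bound on $[(q_0+\widetilde q_0),q_1]$ is exactly what controls the zero-order residue $\mathop{\mathrm{Im}}((2\lambda-q_0-\widetilde q_0)\Theta q)$ hidden inside \eqref{251005}, and $|q_0-\widetilde q_0|\le C'f^{-\rho}$ handles the $[\pi_\mp,q_0]$ commutators. For the ``in particular'' part your Cauchy--Schwarz on $B^*\Theta(H-z)$ followed by expansion of $B^*wB$ is equivalent to the paper's procedure of first splitting $B^*$ through $R=p_f-\alpha_f(z-q_0)$ and $\pi_\mp$ and then applying Cauchy--Schwarz termwise.
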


\begin{proof}
The latter inequality follows from the former, since by Assumption~\ref{2571017b} 
and the Cauchy--Schwarz inequality 
\begin{align*}
&
\mathop{\mathrm{Im}}\bigl((\sigma p_f\mp(z-q_0))^*\Theta (H-z)\bigr)
\\&
=
\sigma \mathop{\mathrm{Im}}\bigl((p_f-\alpha_f(z-q_0))^*\Theta (H-z)\bigr)
\mp2\mathop{\mathrm{Im}}\bigl(\pi_\mp (z^*-q_0)\Theta (H-z)\bigr)
\\&\phantom{{}={}}{}
+\sigma\mathop{\mathrm{Im}}\bigl(\alpha_f (q_0-\widetilde q_0)\Theta (H-z)\bigr)
\\&
\le 
\epsilon(p_f-\alpha_f(z-q_0))^*f^{-1+2\kappa}(p_f-\alpha_f(z-q_0))
+\tfrac12 c'\pi_{\mp}^2f^{-1+2\kappa}
\\&\phantom{{}={}}{}
+C_1f^{-1-\rho+2\kappa}
+C_1(H-z)^*f^{1+2\kappa}(H-z)
.
\end{align*}
Hence it suffices to show the former inequality of the assertion. 
Fix any compact interval $I\subset \mathbb R\setminus[m_-,m_+]$ and $\kappa>0$. 
Choose $a\in\mathbb N$ such that 
\[
\sigma(\lambda-q_0)\ge c_1>0\ \ \text{uniformly in }\lambda \in I\text{ and }|x|\ge 2^a.
\]
The below estimates are uniform in $z=\lambda+\mathrm i\Gamma\in I_\pm$,
and we set for short 
\[
Q=f^{-1-\rho+2\kappa}+p_jf^{-1-\rho+2\kappa}p_j+(H-z)^*f^{1+2\kappa}(H-z)
.
\]
Note that by Lemma~\ref{2507172236} and the Cauchy--Schwarz inequality we can see that 
\begin{equation*}
Q\ge -C_2f^{-1-\rho+2\kappa}-C_2(H-z)^*f^{1+2\kappa}(H-z),
\end{equation*}
and thus $Q$ is in fact negligible. 

Now we start to compute the left-hand side of the former assertion. 
We expand it, and use \eqref{2507191540}, \eqref{25072714}, Assumption~\ref{2571017b}, the Cauchy--Schwarz inequality 
and Lemma~\ref{25071822} to have 
\begin{align}
\begin{split}
&
\mathop{\mathrm{Im}}\bigl((\sigma p_f\mp(z-q_0))^*\Theta (H-z)\bigr)
\\&
=
\sigma\mathop{\mathrm{Im}}\bigl(p_f \Theta (H-\lambda)\bigr)
\mp\tfrac12\mathop{\mathrm{Im}}\bigl((2\lambda-q_0-\widetilde q_0)\Theta H\bigr)
-\sigma \Gamma\mathop{\mathrm{Re}}(p_f^*\Theta )
\\&\phantom{{}={}}{}
\pm\tfrac12\Gamma(2\lambda-q_0-\widetilde q_0)\Theta
\pm\tfrac12\mathop{\mathrm{Im}}\bigl((q_0-\widetilde q_0)\Theta (H-z)\bigr)
\pm\Gamma\mathop{\mathrm{Re}}(\Theta (H-z))
\\&
\ge 
\tfrac12\sigma\mathop{\mathrm{Im}}(A_\Theta H)
-\tfrac12\sigma\mathop{\mathrm{Re}}\bigl(((\Delta f)\Theta+\Theta') (H-z)\bigr)
\\&\phantom{{}={}}{}
\mp\tfrac14\sigma(2\lambda\alpha_f-\alpha_fq_0-q_0\alpha_f)\Theta' 
-\sigma \Gamma\mathop{\mathrm{Re}}\bigl(\alpha_f\Theta \alpha_jp_j\bigr)
\\&\phantom{{}={}}{}
+\sigma \Gamma\mathop{\mathrm{Re}}\bigl(\alpha_f\Theta \alpha_j\ell_{jk}p_k\bigr)
\pm\tfrac12\Gamma(2\lambda-q_0-\widetilde q_0)\Theta
\mp C_3\sigma\Gamma f^{-\rho}\Theta
-C_3Q
\\&
\ge 
\tfrac12\sigma\mathop{\mathrm{Re}}(\alpha_f\Theta'p_f)
+\tfrac12\sigma\mathop{\mathrm{Re}}(\alpha_jf^{-1}\Theta\ell_{jk}p_k\bigr)
\mp\tfrac14\sigma(2\lambda\alpha_f-\alpha_fq_0-q_0\alpha_f)\Theta' 
\\&\phantom{{}={}}{}
-\tfrac12\sigma \Gamma\Theta (2\lambda\alpha_f-\alpha_fq_0-q_0\alpha_f)
+\sigma \Gamma\mathop{\mathrm{Re}}\bigl(\alpha_l\alpha_j\Theta (\partial_lf)\ell_{jk}p_k\bigr)
\\&\phantom{{}={}}{}
\pm\tfrac12\Gamma(2\lambda-q_0-\widetilde q_0)\Theta
\mp C_4\sigma\Gamma f^{-\rho}\Theta
-C_4Q
.
\end{split}
\label{251005}
\end{align}
We combine the first to the third terms of \eqref{251005} 
by using \eqref{2507191540}, Assumption~\ref{2571017b} and the Cauchy--Schwarz inequality as 
\begin{align}
\begin{split}
&
\tfrac12\sigma\mathop{\mathrm{Re}}(\alpha_f\Theta'p_f)
+\tfrac12\sigma\mathop{\mathrm{Re}}(\alpha_jf^{-1}\Theta\ell_{jk}p_k\bigr)
\mp\tfrac14\sigma(2\lambda\alpha_f-\alpha_fq_0-q_0\alpha_f)\Theta' 
\\&
\ge 
2\sigma \kappa \pi_{\mp}(\lambda -q_0)\pi_{\mp}f^{-1}\Theta 
-\sigma\bigl(\tfrac12-\kappa\bigr)\mathop{\mathrm{Re}}(f^{-1}\Theta (\alpha_fp_f-\lambda+q_0)\bigr)
\\&\phantom{{}={}}{}
+\tfrac12\sigma\mathop{\mathrm{Re}}(f^{-1}\Theta (H-z)\bigr)
-\tfrac12\sigma f^{-1}\Theta (q_1+q_2)
-C_5Q
\\&
\ge 
c_2\pi_{\mp}^2f^{-1+2\kappa}
-\sigma\bigl(\tfrac12-\kappa\bigr)\mathop{\mathrm{Re}}(f^{-1}\Theta (\alpha_fp_f-\lambda+q_0)\bigr)
-C_6Q
.
\end{split}
\label{251005b}
\end{align}
We combine the fourth and the sixth terms as 
\begin{align}
\begin{split}
-\tfrac12\sigma \Gamma\Theta (2\lambda\alpha_f-\alpha_fq_0-q_0\alpha_f)
\pm\tfrac12\Gamma(2\lambda-q_0-\widetilde q_0)\Theta
&\ge 
\pm 2\Gamma \pi_\mp(\lambda - q_0)\pi_\mp\Theta
\\&
\ge 0.
\end{split}
\label{25101313}
\end{align}
As for the seventh term of \eqref{251005}, we bound it as  
\begin{equation}
\mp C_4\sigma\Gamma f^{-\rho}\Theta 
=
\pm C_4\sigma\mathop{\mathrm{Im}}(f^{-\rho}\Theta (H-z))
\mp C_4\sigma\mathop{\mathrm{Im}}(f^{-\rho}\Theta \alpha_jp_j)
\ge 
-C_7Q
.
\label{251005bb}
\end{equation}
We compute and bound the fifth term of \eqref{251005} by using \eqref{25071914} and \eqref{251005bb} as 
\begin{align}
\begin{split}
&
\sigma \Gamma\mathop{\mathrm{Re}}\bigl(\alpha_l\alpha_j\Theta (\partial_lf)\ell_{jk}p_k\bigr)
\\&
=
\tfrac12\sigma \Gamma\mathop{\mathrm{Re}}\bigl(\alpha_l\alpha_j\Theta (\partial_lf)\ell_{jk}p_k\bigr)
+\tfrac12\sigma \Gamma\mathop{\mathrm{Re}}\bigl(\alpha_j\alpha_lp_k\Theta (\partial_lf)\ell_{jk}\bigr)
\\&
\ge 
\mp C_8\sigma \Gamma f^{-1}\Theta 
-C_8Q
\\&
\ge 
-C_9Q
.
\end{split}
\label{2510054}
\end{align}
Therefore by \eqref{251005}--\eqref{2510054} we obtain 
\begin{align}
\begin{split}
&\sigma\mathop{\mathrm{Im}}\bigl((p_f\mp(z-q_0))^*\Theta (H-z)\bigr)
\\&
\ge 
c_2\pi_{\mp}^2f^{-1+2\kappa}
-\sigma\bigl(\tfrac12-\kappa\bigr)\mathop{\mathrm{Re}}(f^{-1}\Theta (\alpha_fp_f-\lambda+q_0)\bigr)
-C_9Q.
\end{split}
\label{251013}
\end{align}
Hence we are done. 
\end{proof}

The second step for the proof of Proposition~\ref{25081322b} is the following. 
We remark that Assumption~\ref{2571017} suffices for it. 

\begin{lemma}\label{25100514}
Suppose Assumption~\ref{2571017}, let $I\subset \mathbb R\setminus[m_-,m_+]$ be a compact interval, and let 
$\kappa> 0$.
Then there exist $c,C>0$ and $a\in\mathbb N$ such that 
for any $z\in I_\pm$
\begin{align*}
\begin{split}
-\sigma\mathop{\mathrm{Re}}\bigl(f^{-1}\Theta(\alpha_fp_f-z+q_0+q_1)\bigr)
&
\ge 
c(p_f-\alpha_f(z-q_0))^*f^{-1+2\kappa}(p_f-\alpha_f(z-q_0))
\\&\phantom{{}={}}{}
-Cf^{-2-\rho+2\kappa}
-C(H-z)^*f^{\rho+2\kappa}(H-z)
,
\end{split}
\end{align*}
respectively. 
\end{lemma}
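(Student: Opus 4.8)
The goal is to bound the scalar quantity $-\sigma\mathop{\mathrm{Re}}(f^{-1}\Theta(\alpha_fp_f-z+q_0+q_1))$ from below by (a positive multiple of) $(p_f-\alpha_f(z-q_0))^*f^{-1+2\kappa}(p_f-\alpha_f(z-q_0))$, modulo the two negligible remainders $f^{-2-\rho+2\kappa}$ and $(H-z)^*f^{\rho+2\kappa}(H-z)$. The natural move is to recognize $\alpha_fp_f-z+q_0 = \alpha_f(p_f-\alpha_f(z-q_0))$ up to a correction involving $q_0-\widetilde q_0$, since $\alpha_f\alpha_f=1$ on $\mathop{\mathrm{supp}}\Theta$ and $\alpha_fq_0\alpha_f$ enters. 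Let me write $D_{\pm}=p_f-\alpha_f(z-q_0)$ for the "radial derivative operator" whose smallness we want to quantify, so that $\alpha_fp_f-z+q_0=\alpha_f D_\pm + (\text{terms like }\alpha_f\widetilde q_0-q_0)$, and keep in mind $\sigma\alpha_f\pi_\mp=\mp\pi_\mp$.

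\medskip

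\textbf{Plan.} First I would rewrite the left-hand side by inserting $\alpha_f^2=1$ and extracting $D_\pm$: schematically $-\sigma\mathop{\mathrm{Re}}(f^{-1}\Theta\,\alpha_f D_\pm)$ plus lower-order pieces. The key algebraic identity to exploit is that $\mathop{\mathrm{Re}}(f^{-1}\Theta\,\alpha_f D_\pm)$, after commuting $p_f$ through $f^{-1}\Theta$ to symmetrize, becomes (up to sign and factors) $D_\pm^* (\text{positive weight}) D_\pm$ minus a commutator contribution. Concretely, one writes $\mathop{\mathrm{Re}}(f^{-1}\Theta\,\alpha_f D_\pm)=\tfrac12(D_\pm^*\alpha_f f^{-1}\Theta + f^{-1}\Theta\,\alpha_f D_\pm)$ and then uses the fact that $\alpha_f D_\pm + D_\pm^*\alpha_f$ reduces to $2\mathop{\mathrm{Re}}(\alpha_f p_f) - 2\alpha_f\mathop{\mathrm{Re}}(z-q_0)\alpha_f$-type terms; the "good" part is the one that, via completing the square, produces $D_\pm^* f^{-1+2\kappa}D_\pm$ with the correct sign given the choice $\sigma(\lambda-q_0)\ge c_1$. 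The nonradial part of the Dirac operator, i.e. $\alpha_j\ell_{jk}p_k$, enters through $H-z$: one adds and subtracts $\alpha_j\ell_{jk}p_k$ to turn $\alpha_f p_f$ into $\alpha_j p_j = H - q - (\text{i times something})$, absorbing the difference into the $(H-z)^*f^{\rho+2\kappa}(H-z)$ error by Cauchy--Schwarz (paying a factor $f$, which is why the weight is $f^{\rho+2\kappa}$ not $f^{1+2\kappa}$ — the $f^{-1}\Theta=f^{-1+2\kappa}\bar\chi_a$ already carries one negative power). The derivative falling on $f^{-1}\Theta$ contributes $|(f^{-1}\Theta)'|\lesssim f^{-2+2\kappa}$, which combines with a $D_\pm$ (bounded in the relevant norm by the ellipticity just invoked) to give at worst $f^{-2-\rho+2\kappa}$ after another Cauchy--Schwarz split — actually more carefully, $f^{-2+2\kappa}\cdot |D_\pm|$ pairs into $\epsilon D_\pm^* f^{-1+2\kappa}D_\pm + C_\epsilon f^{-3+2\kappa}$ and $f^{-3+2\kappa}\le f^{-2-\rho+2\kappa}$ since $\rho\le 1$. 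Finally, the $q_1$ term on the left is of size $f^{-(1+\rho)/2}$, and $f^{-1}\Theta\cdot f^{-(1+\rho)/2}$ paired against $f^{-1}\Theta$ times the square-root weight of $D_\pm^*f^{-1+2\kappa}D_\pm$ lands in $f^{-2-\rho+2\kappa}$; the $q_0-\widetilde q_0$ discrepancy (which appears when $\alpha_f q_0\alpha_f\neq q_0$) is $O(f^{-\rho})$ by Assumption~\ref{2571017b}, and $f^{-1}\Theta f^{-\rho}$ paired with the same gives the same order.

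\medskip

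\textbf{Execution order.} (1) Expand $-\sigma\mathop{\mathrm{Re}}(f^{-1}\Theta(\alpha_fp_f-z+q_0+q_1))$, using $\alpha_fp_j\ell_{jk}$-type algebra and $\widetilde q_0=\alpha_f q_0\alpha_f$ on $\mathop{\mathrm{supp}}\Theta$, to isolate the term proportional to $D_\pm^*\,w\,D_\pm$ with $w\sim f^{-1+2\kappa}$. (2) Symmetrize $\mathop{\mathrm{Re}}(f^{-1}\Theta\alpha_fp_f)$, picking up one commutator $[p_f,f^{-1}\Theta]\sim f^{-2+2\kappa}$. (3) Replace $\alpha_fp_f$ by $\alpha_jp_j-\alpha_j\ell_{jk}p_k = (H-q)-\alpha_j\ell_{jk}p_k$, pushing the $H-z$ piece into the error with an $f$-loss and the spherical piece $\alpha_j\ell_{jk}p_k$ into the error as well (this is the same manoeuvre as in Lemma~\ref{25100513}, using $\ell_{jk}=f(\partial_j\partial_kf)$ and a Cauchy--Schwarz split; note $\ell_{jk}p_k$ is controlled because $\mathop{\mathrm{Re}}(f^{-1}\Theta\alpha_j\ell_{jk}p_k)$ can be symmetrized at cost $f^{-2}\Theta$-type terms). (4) Complete the square in $D_\pm$ using $\sigma(\lambda-q_0)\ge c_1>0$ for $|x|\ge 2^a$ — this is where the sign of the main term is born — absorbing the cross terms with $\epsilon$-Cauchy--Schwarz. (5) Collect all errors ($q_1$-term, $q_0-\widetilde q_0$-term, commutators, the $\Gamma$-term from $z$ not being real) into $f^{-2-\rho+2\kappa}$ and $(H-z)^*f^{\rho+2\kappa}(H-z)$, retaking $a$ larger if needed so that $f^{-\rho}$ and $f^{-\min\{\rho,1\}}$ factors beat the relevant constants.

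\medskip

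\textbf{Main obstacle.} The delicate point is step (3)–(4): getting the spherical part $\alpha_j\ell_{jk}p_k$ and the first-order factor $D_\pm$ that appears \emph{linearly} (not yet squared) absorbed without destroying the positivity constant, when the operators $\alpha_f$ and $q_0$ do not commute. The linear-in-$D_\pm$ term $-\sigma(\tfrac12-\kappa)\mathop{\mathrm{Re}}(f^{-1}\Theta(\alpha_fp_f-z+q_0))$ from Lemma~\ref{25100513} is precisely what this lemma must convert into the quadratic form; the trick is that $\mathop{\mathrm{Re}}(f^{-1}\Theta\,\alpha_f\,\alpha_f D_\pm)=\mathop{\mathrm{Re}}(f^{-1}\Theta D_\pm)$, and then symmetrizing $\mathop{\mathrm{Re}}(f^{-1}\Theta D_\pm)$ against the "doing and undoing" of $D_\pm$ on both sides produces $D_\pm^* f^{-1}\Theta D_\pm$ up to a commutator $[D_\pm, f^{-1}\Theta]$ and the operator $[\alpha_f,q_0]$-type discrepancy; because $D_\pm$ contains $p_f$ which differentiates $f^{-1}\Theta$ to order $f^{-2}$, and the discrepancy is $O(f^{-\rho})$, both stay within the stated errors. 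Verifying that the constant $c$ survives this shuffling — i.e. that $\bigl(\tfrac12-\kappa\bigr)$ times the extracted positive weight, minus the $\epsilon$-absorbed cross terms, is still bounded below by a positive multiple of $f^{-1+2\kappa}$ — is the real content, and it forces $\kappa<\tfrac12$, consistent with the hypothesis $\kappa\in(0,\rho/2)\subset(0,1/2)$ in Theorem~\ref{25081317}.
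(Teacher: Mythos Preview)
Your plan has a genuine gap at the central step: you never explain how a \emph{linear} form in $D_\pm=p_f-\alpha_f(z-q_0)$ can be bounded \emph{below} by the positive quadratic form $D_\pm^*f^{-1+2\kappa}D_\pm$. ``Symmetrizing $\mathop{\mathrm{Re}}(f^{-1}\Theta D_\pm)$'' only rewrites $\tfrac12(f^{-1}\Theta D_\pm+D_\pm^*f^{-1}\Theta)$, which is still first order; no amount of commuting $p_f$ through $f^{-1}\Theta$ upgrades it to $D_\pm^*wD_\pm$. Likewise, your step (3) replaces $\alpha_fp_f$ by $H-q-\alpha_j\ell_{jk}p_k$ and throws everything into errors, after which $D_\pm$ has disappeared and nothing positive remains. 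The square has to be produced, not merely completed.

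The paper runs the argument in the opposite direction. It introduces the weight $\Xi=\sigma(\lambda-q_0)^{-1}f^{-1}\Theta$ (this is where $\sigma(\lambda-q_0)\ge c_1$ enters, making $\Xi\ge 0$) and expands the quadratic form $(\alpha_fp_f-z+q)^*\Xi(\alpha_fp_f-z+q)$. One cross term is $-\mathop{\mathrm{Re}}((z-q_0)^*\Xi(\alpha_fp_f-z+q_0+q_1))$, and since $(\lambda-q_0)\Xi=\sigma f^{-1}\Theta$ this \emph{is} (up to a $\Gamma$-piece) the linear form $-\sigma\mathop{\mathrm{Re}}(f^{-1}\Theta(\alpha_fp_f-z+q_0+q_1))$ you want on the left. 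The other cross term is $\mathop{\mathrm{Re}}(p_f^*\alpha_f\Xi(\alpha_fp_f-z+q))$, which after substituting $\alpha_fp_f-z+q=(H-z)-\alpha_j\ell_{jk}p_k$ leaves the radial--spherical operator $-\mathop{\mathrm{Re}}(p_f^*\alpha_f\Xi\alpha_j\ell_{jk}p_k)$. This is a genuine second-order term, not absorbable by a single Cauchy--Schwarz against $(H-z)$ as you suggest; the paper spends an entire display exploiting the anticommutation relations and $\ell_{jk}=f\partial_j\partial_kf$ to show it is bounded by $\tfrac14$ of the original quadratic form plus $Q$, and that absorption closes the estimate. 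Two smaller points: the lemma assumes only Assumption~\ref{2571017}, so you cannot invoke $|q_0-\widetilde q_0|\le Cf^{-\rho}$ (you don't actually need it, since $\alpha_f^2=1$ on $\mathop{\mathrm{supp}}\Theta$ gives $\alpha_fp_f-z+q_0=\alpha_fD_\pm$ exactly); and the lemma is stated for all $\kappa>0$, so the restriction $\kappa<\tfrac12$ cannot be ``forced'' here---it arises only when this lemma is combined with Lemma~\ref{25100513}.
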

\begin{proof}
Fix a compact interval $I\subset \mathbb R\setminus[m_-,m_+]$ and $\kappa>0$,
and choose $a\in\mathbb N$ such that 
\[
\sigma(\lambda-q_0)\ge c_1>0\ \ \text{uniformly in }\lambda \in I\text{ and }|x|\ge 2^a.
\]
The below estimates are uniform in $z=\lambda+\mathrm i\Gamma\in I_\pm$.
Set for short 
\[
Q=f^{-2-\rho+2\kappa}+p_jf^{-2-\rho+2\kappa}p_j
+(H-z)^*f^{\rho+2\kappa}(H-z)
.
\]
By Lemma~\ref{2507172236} we see that $Q$ is absorbed into the 
last two terms of the asserted inequality, and thus we can regard it as a negligible error.  
In this proof we start with the first term on the right-hand side of the asserted inequality. 
Let us denote 
\[
\Xi=\sigma (\lambda-q_0)^{-1}f^{-1}\Theta
, \quad 
\widetilde \Xi=\sigma (\lambda-\widetilde q_0)^{-1}f^{-1}\Theta, 
\]
and then by the Cauchy--Schwarz inequality and Assumption~\ref{2571017} we have 
\begin{align*}
&(\alpha_fp_f-z+q_0)^*f^{-1+2\kappa}(\alpha_fp_f-z+q_0)
\\&
\le 
C_1(\alpha_fp_f-z+q_0)^*f^{-1}\Theta(\alpha_fp_f-z+q_0)
+C_1Q
\\&
=
C_1(\alpha_fp_f-z+q)^*f^{-1}\Theta(\alpha_fp_f-z+q)
\\&\phantom{{}={}}{}
-2C_1\mathop{\mathrm{Re}}\bigl((q_1+q_2)^*f^{-1}\Theta(\alpha_fp_f-z+q)\bigr)
+C_1(q_1+q_2)^2f^{-1}\Theta
+C_1Q
\\&
\le 
C_2(\alpha_fp_f-z+q)^*f^{-1}\Theta(\alpha_fp_f-z+q)
+C_2Q
\\&
\le 
C_3(\alpha_fp_f-z+q)^*\Xi(\alpha_fp_f-z+q)
+C_2Q
\\&
=
C_3\mathop{\mathrm{Re}}\bigl(p_f^*\alpha_f\Xi(\alpha_fp_f-z+q)\bigr)
-C_3\mathop{\mathrm{Re}}\bigl((z-q_0)^*\Xi(\alpha_fp_f-z+q_0+q_1)\bigr)
\\&\phantom{={}}{}
+C_3\mathop{\mathrm{Re}}\bigl((q_1+q_2)\Xi(\alpha_fp_f-z+q_0+q_1)\bigr)
-C_3\mathop{\mathrm{Re}}\bigl((z-q)^*\Xi q_2\bigr)
+C_2Q
\\&
\le 
C_3\mathop{\mathrm{Re}}\bigl(p_f^*\alpha_f\Xi(H-z)\bigr)
-C_3\mathop{\mathrm{Re}}\bigl(p_f^*\alpha_f\Xi \alpha_j\ell_{jk}p_k\bigr)
\\&\phantom{={}}{}
-C_3\sigma\mathop{\mathrm{Re}}\bigl(f^{-1}\Theta(\alpha_fp_f-z+q_0+q_1)\bigr)
-C_3\Gamma\mathop{\mathrm{Im}}\bigl(\Xi(\alpha_fp_f-z+q_0+q_1)\bigr)
\\&\phantom{={}}{}
+\tfrac14(\alpha_fp_f-z+q_0+q_1)^*f^{-1+2\kappa}(\alpha_fp_f-z+q_0+q_1)
+C_4Q
\\&
\le 
-C_3\mathop{\mathrm{Re}}\bigl(p_f^*\alpha_f\Xi \alpha_j\ell_{jk}p_k\bigr)
-C_3\sigma\mathop{\mathrm{Re}}\bigl(f^{-1}\Theta(\alpha_fp_f-z+q_0+q_1)\bigr)
\\&\phantom{={}}{}
-C_3\Gamma\mathop{\mathrm{Im}}\bigl(\Xi(\alpha_fp_f-z+q_0+q_1)\bigr)
\\&\phantom{={}}{}
+\tfrac12(\alpha_fp_f-z+q_0)^*f^{-1+2\kappa}(\alpha_fp_f-z+q_0)
+C_5Q
.
\end{align*}
This implies that 
\begin{align}
\begin{split}
&
(\alpha_fp_f-z+q_0)^*f^{-1+2\kappa}(\alpha_fp_f-z+q_0)
\\&
\le 
-2C_3\mathop{\mathrm{Re}}\bigl(p_f^*\alpha_f\Xi\alpha_j\ell_{jk}p_k\bigr)
-2C_3\sigma\mathop{\mathrm{Re}}\bigl(f^{-1}\Theta(\alpha_fp_f-z+q_0+q_1)\bigr)
\\&\phantom{={}}{}
-2C_3\Gamma\mathop{\mathrm{Im}}\bigl(\Xi(\alpha_fp_f-z+q_0+q_1)\bigr)
+C_6Q
.
\end{split}
\label{250813}
\end{align}

We further compute and bound the right-hand side of \eqref{250813}.
As for the first term, we can proceed, omitting $C_3$ and using \eqref{25071914},  
\eqref{eq:15091112} and the Cauchy--Schwarz 
inequality, as  
\begin{align}
\begin{split}
&
-2\mathop{\mathrm{Re}}\bigl(p_f^*\alpha_f\Xi\alpha_j\ell_{jk}p_k\bigr)
\\
&=
2p_f^*\alpha_f\Xi\alpha_fp_f
-2\mathop{\mathrm{Re}}\bigl(p_i(\partial_if)(\partial_lf)\alpha_l\Xi\alpha_jp_j\bigr)
\\
&=
2p_f^*\widetilde \Xi p_f
-\mathop{\mathrm{Re}}\bigl(\alpha_l\alpha_j p_i(\partial_if)(\partial_lf)\widetilde \Xi p_j\bigr)
-\mathop{\mathrm{Re}}\bigl(\alpha_j\alpha_lp_j(\partial_if)(\partial_lf)\widetilde \Xi p_i\bigr)
\\
&=
2p_f^*\widetilde \Xi p_f
-\mathop{\mathrm{Re}}\bigl(\alpha_l\alpha_j p_i(\partial_if)(\partial_lf)\widetilde \Xi p_j\bigr)
-\mathop{\mathrm{Re}}\bigl(\alpha_j\alpha_lp_i(\partial_if)(\partial_lf)\widetilde \Xi p_j\bigr)
\\&\phantom{={}}{}
+\mathop{\mathrm{Im}}\bigl(\alpha_j\alpha_l(\partial_i(\partial_if)(\partial_lf)\widetilde \Xi) p_j\bigr)
-\mathop{\mathrm{Im}}\bigl(\alpha_j\alpha_l(\partial_j(\partial_if)(\partial_lf)\widetilde \Xi) p_i\bigr)
\\
&=
\mathop{\mathrm{Im}}\bigl(\alpha_j\alpha_l(\Delta f)(\partial_lf)\widetilde \Xi p_j\bigr)
+\mathop{\mathrm{Im}}\bigl(\alpha_j\alpha_l(\partial_lf)(\partial_f\widetilde \Xi) p_j\bigr)
\\&\phantom{={}}{}
-\mathop{\mathrm{Im}}\bigl(\alpha_j\alpha_l(\partial_j\partial_if)(\partial_lf)\widetilde \Xi p_i\bigr)
-\mathop{\mathrm{Im}}\bigl(\alpha_j\alpha_l(\partial_if)(\partial_j\partial_lf)\widetilde \Xi p_i\bigr)
\\&\phantom{={}}{}
-\mathop{\mathrm{Im}}\bigl(\alpha_j\alpha_l(\partial_if)(\partial_lf)(\partial_j\widetilde \Xi) p_i\bigr)
\\
&=
2\mathop{\mathrm{Im}}\bigl((\Delta f)\widetilde \Xi p_f\bigr)
-\mathop{\mathrm{Im}}\bigl(\alpha_f(\Delta f)\Xi \alpha_jp_j\bigr)
+2\mathop{\mathrm{Im}}((\partial_f\widetilde \Xi) p_f)
\\&\phantom{={}}{}
-\mathop{\mathrm{Im}}\bigl(\alpha_f(\partial_f\Xi) \alpha_jp_j\bigr)
+\mathop{\mathrm{Im}}\bigl(\alpha_ff^{-1}\Xi \alpha_j\ell_{ji}p_i\bigr)
-\mathop{\mathrm{Im}}\bigl((\Delta f)\widetilde \Xi p_f\bigr)
\\&\phantom{={}}{}
-2\mathop{\mathrm{Im}}((\partial_f\widetilde \Xi) p_f)
+\mathop{\mathrm{Im}}(\alpha_f\alpha_j(\partial_j\widetilde \Xi) p_f)
\\
&=
-\mathop{\mathrm{Im}}\bigl(\alpha_f(f^{-1}-(\Delta f))\Xi (\alpha_fp_f-z+q_0+q_1+q_2)\bigr)
\\&\phantom{={}}{}
+\mathop{\mathrm{Im}}\bigl(\alpha_f(f^{-1}-(\Delta f))\Xi (H-z)\bigr)
+\mathop{\mathrm{Im}}(\alpha_f\alpha_j\ell_{jk}(\partial_k\widetilde \Xi) p_f)
\\&
\le 
\tfrac14C_3^{-1}(\alpha_fp_f-z+q_0)^*f^{-1+2\kappa}(\alpha_fp_f-z+q_0)
+C_7Q
.
\end{split}
\label{250826223}
\end{align}
On the other hand, the third term on the right-hand side of \eqref{250813} is bounded by 
using the Cauchy--Schwarz inequality and \eqref{251005bb} twice as 
\begin{align}
\begin{split}
&-2C_3\Gamma\mathop{\mathrm{Im}}\bigl(\Xi(\alpha_fp_f-z+q_0+q_1)\bigr)
\\&
\le 
\tfrac18(\alpha_fp_f-z+q_0+q_1)^*f^{-1+2\kappa}(\alpha_fp_f-z+q_0+q_1)
+C_8\Gamma^2f^{-1+2\kappa}
\\&
\le 
\tfrac14(\alpha_fp_f-z+q_0)^*f^{-1+2\kappa}(\alpha_fp_f-z+q_0)
+C_9Q
. 
\end{split}
\label{25082923}
\end{align}
Therefore by \eqref{250813}, \eqref{250826223} and \eqref{25082923} we obtain the assertion. 
\end{proof}

\begin{proof}[Proof of Proposition~\ref{25081322b}]
The assertion is clear from Lemmas~\ref{25100513} and \ref{25100514}. 
\end{proof}

\begin{proof}[Proof of Theorem~\ref{25081317}]
The assertion almost trivially follows from Proposition~\ref{25081322b} and Theorem~\ref{thm:12.7.2.7.9}. 
We only remark that, 
in order to verify that all weighted norms finite, 
we use the boundedness of $R(z)$ as $L^2_s\to L^2_s$ 
for any $z\in\mathbb C\setminus\mathbb R$ and $s\in\mathbb R$.  
Thus we are done.
\end{proof}

\subsection{Applications}

At last we verify Corollaries~\ref{cor:Limiting-Absorption-Principle-Stark}, 
\ref{cor:RC-bound-real} and \ref{cor:Sommerfeld-unique-result}, 
as applications of Theorems~\ref{250717}, \ref{thm:12.7.2.7.9} and \ref{25081317}.

\subsubsection{Proof of LAP}

\begin{proof}[Proof of Corollary~\ref{cor:Limiting-Absorption-Principle-Stark}]
Take any $I$, $s$ and $\epsilon$ as in the assertion,
and set $s'=s-\epsilon$. 
Let us discuss on $I_+$ since the arguments for $I_-$ are completely the same. 
For any $a\in\mathbb N_0$ and $z,w\in I_+$ we decompose 
\begin{equation}\label{eq:1812121815}
\begin{split}
R(z) - R(w) 
&= 
\bigl(\chi_aR(z)\chi_a - \chi_aR(w)\chi_a \bigr)
+ \bigl( R(z) - \chi_aR(z)\chi_a \bigr) 
\\&\phantom{{}={}}{}
- \bigl( R(w) - \chi_aR(w)\chi_a \bigr),
\end{split}
\end{equation}
see \eqref{eq:11.7.11.5.14} for $\chi_a$, and $\bar\chi_a$ appearing below. 
By Theorem~\ref{thm:12.7.2.7.9} 
we can bound the second term on the right-hand side of \eqref{eq:1812121815} 
uniformly in $a\in\mathbb N_0$ and $z\in I_+$ as 
\begin{equation} \label{eq:1812121817}
\begin{split}
\| R(z) - \chi_aR(z)\chi_a \|_{\mathcal L(L^2_s, L^2_{-s})} 
&\le 
\| f^{-s}R(z)\bar\chi_af^{-s} \|_{\mathcal L(\mathcal H)} 
\\&\phantom{{}={}}{}
+ \| f^{-s}\bar\chi_aR(z)\chi_af^{-s} \|_{\mathcal L(\mathcal H)} 
\\&\le 
C_12^{-a(s-s')}
\\&
=C_12^{-a\epsilon}.
\end{split}
\end{equation}
Similarly for the third term of \eqref{eq:1812121815}, 
we have uniformly in $a\in\mathbb N_0$ and $w\in I_+$  
\begin{equation} \label{eq:1812121819}
\| R(w) - \chi_aR(w)\chi_a \|_{\mathcal L(L^2_s, L^2_{-s})} 
\le 
C_12^{-a\epsilon }.
\end{equation}
As for the first term of \eqref{eq:1812121815}, we rewrite it as 
\begin{align*}
\chi_aR(z)\chi_a - \chi_aR(w)\chi_a 
&= 
\mathrm i\chi_aR(z)\alpha_f\chi_{a+1}'R(w)\chi_a 
+(z-w)\chi_aR(z)\chi_{a+1}R(w)\chi_a 
\\&= 
\mathrm i\chi_aR(z)\pi_{f,+}\chi_{a+1}'R(w)\chi_a 
-\mathrm i\chi_aR(z)\chi_{a+1}'\pi_{f,-}R(w)\chi_a 
\\&\phantom{{}={}}{}
+(z-w)\chi_aR(z)\pi_{f,+}\chi_{a+1}R(w)\chi_a 
\\&\phantom{{}={}}{}
+(z-w)\chi_aR(z)\chi_{a+1}\pi_{f,-}R(w)\chi_a 
.
\end{align*}
Then by Theorems~\ref{thm:12.7.2.7.9} and \ref{25081317}
uniformly in $a\in\mathbb N_0$ and $z, w \in I_+$
\begin{equation}
\|\chi_aR(z)\chi_a - \chi_aR(w)\chi_a\|_{\mathcal L(L^2_s, L^2_{-s})}
\le 
C_22^{-a\epsilon }
+C_22^{a(1-\epsilon)}|z-w|.
\label{eq:1812121827}
\end{equation}
By \eqref{eq:1812121815}, \eqref{eq:1812121817}, \eqref{eq:1812121819} 
and \eqref{eq:1812121827}, we obtain uniformly in $a\in\mathbb N_0$ and $z, w \in I_+$
\[
\| R(z) - R(w) \|_{\mathcal L(L^2_s, L^2_{-s})} 
\le 
C_32^{-\epsilon n}
+C_22^{(1-\epsilon)n}|z-w|.
\]

Now, if $|z-w|\le 1$, choose $a\in\mathbb N_0$ such that $2^a \le |z-w|^{-1} \le 2^{a+1}$, 
and then 
\begin{equation*}
\| R(z) - R(w) \|_{\mathcal L(L^2_s, L^2_{-s})} 
\le 
C_4|z-w|^\epsilon.
\end{equation*}
The same bound is trivial for $|z-w|>1$,
and hence the H\"older continuity \eqref{eq:Holder-continuity} is verified for $R(z)$. 
We can argue similarly for $p_jR(z)$.

The existence of the limits \eqref{eq:uniform-limit-z-to-lambda} follows from \eqref{eq:Holder-continuity}. 
In addition, by Theorem~\ref{thm:12.7.2.7.9}
$R_\pm(\lambda)$ and $p_jR(\lambda)$ map into $\mathcal B^*$,
and then with the density argument they extend continuously as $\mathcal B\to\mathcal B^*$. 
Hence we are done.
\end{proof}

\subsubsection{Proof of radiation condition bounds for limiting resolvents}

\begin{proof}[Proof of Corollary~\ref{cor:RC-bound-real}]
The assertion is a direct consequence of 
Theorem~\ref{25081317} and Corollary~\ref{cor:Limiting-Absorption-Principle-Stark}.
Hence we are done.
\end{proof}

\subsubsection{Proof of Sommerfeld's uniqueness}

\begin{proof}[Proof of Corollary~\ref{cor:Sommerfeld-unique-result}]
Let $\lambda\in\mathbb R\setminus[m_-,m_+]$ and $\kappa\in (0,\rho/2)$. 
First, for any $\psi\in L^2_{1/2+\kappa}$ set $\phi = R_\pm(\lambda)\psi$. 
Then  by Corollaries~\ref{cor:Limiting-Absorption-Principle-Stark} and \ref{cor:RC-bound-real} 
we can see that $\phi\in\mathcal B^*$, 
and that the conditions \ref{item:18122818} and \ref{item:18122819} hold true.
Conversely, let $\phi\in f^\kappa\mathcal B^*$ and $\psi\in f^{-\kappa}\mathcal B$,
and assume the conditions \ref{item:18122818} and \ref{item:18122820}.
If we set
\[
\Phi = \phi - R_\pm(\lambda)\psi\in\psi\in f^{\kappa}\mathcal B^*
,
\]
then by Corollaries~\ref{cor:Limiting-Absorption-Principle-Stark} 
and \ref{cor:RC-bound-real} we have 
\begin{enumerate}
\item[1${}'$.]
$(H-\lambda)\Phi=0$ in the distributional sense,
\item[3${}'$.]
$\pi_{\mp}\Phi\in f^{-\kappa}\mathcal B_0^*$,
or 
$(p_f\mp\sigma (\lambda-q_0))\Phi\in f^{-\kappa}\mathcal B_0^*$ 
\end{enumerate}
Here we note that 1$'$ and 3${}'$ imply $\pi_{\mp}\Phi\in \mathcal B_0^*$. 
To see this, it suffices to consider the case $(p_f\mp\sigma (\lambda-q_0))\Phi\in f^{-\kappa}\mathcal B_0^*$ in 3$'$. 
For any $a\in\mathbb N$ let us compute  
\begin{align*}
\mathop{\mathrm{Re}}(\alpha_f\bar\chi_a'(H-\lambda)) 
&= 
\mathop{\mathrm{Re}}\bigl(\bar\chi_a'(p_f\mp\sigma(\lambda-q_0))\bigr) 
\pm\sigma 2\mathop{\mathrm{Re}}\bigl(\pi_\mp\bar\chi_a'(\lambda-q_0)\bigr) 
\\&\phantom{{}={}}{}
+\mathop{\mathrm{Re}}\bigl(\alpha_f\alpha_j\bar\chi_a'\ell_{jk}p_k\bigr) 
+\mathop{\mathrm{Re}}\bigl(\alpha_f\bar\chi_a'(q_1+q_2)\bigr) 
\\&= 
\mathop{\mathrm{Re}}\bigl(\bar\chi_a'(p_f\mp\sigma(\lambda-q_0))\bigr) 
\pm\sigma 2\pi_\mp\bar\chi_a'(\lambda-q_0) \pi_\mp
\\&\phantom{{}={}}{}
\mp\sigma \tfrac12\bar\chi_a'(q_0-\widetilde q_0)
+\mathop{\mathrm{Re}}\bigl(\alpha_f\bar\chi_a'(q_1+q_2)\bigr), 
\end{align*}
and then, taking the expectation of the above identity on the state $\Phi$, we obtain as $a\to\infty$
\begin{align*}
0&\le 
c_1\langle\pi_\mp\Phi,\bar\chi_a'\pi_\mp\Phi\rangle 
\\&
\le 
\bigl\langle\pi_\mp\Phi,\bar\chi_a'(\lambda-q_0) \pi_\mp\Phi\bigr\rangle 
\\&
=
\mp\sigma\tfrac12\mathop{\mathrm{Re}}\bigl\langle\Phi,\chi_a'(p_f\mp\sigma(\lambda-q_0))\Phi\bigr\rangle 
+ \tfrac14\bigl\langle\Phi,\chi_a'(q_0-\widetilde q_0)\Phi\bigr\rangle
\\&\phantom{{}={}}{}
\mp\sigma\tfrac12\mathop{\mathrm{Re}}\bigl\langle\Phi,\alpha_f\chi_a'(q_1+q_2)\Phi\bigr\rangle
\\&
\to 0
.
\end{align*}
Thus we obtain $\pi_{\mp}\Phi\in \mathcal B_0^*$. 
Finally let us further verify that $\Phi \in\mathcal B_0^*$.
To see this, we compute for any  $a\in\mathbb N$ 
\[
2\mathop{\mathrm{Im}}(\chi_a(H-\lambda)) 
= 
\alpha_f\chi_a'
= 
\pm \chi_a'
\mp 2\pi_{\mp}\chi_a'
,
\]
and this and 1$'$ imply that as $a\to\infty$
\begin{equation*}
0 \le 
\langle \Phi,\bar\chi_a'\Phi \rangle
= 
2\langle \Phi,\bar\chi_a'\pi_{\mp}\Phi\rangle
\to 0.
\end{equation*}
Thus we obtain $\Phi \in\mathcal B^*_0$, and then by Theorem~\ref{250717} 
we conclude $\Phi=0$, or $\phi = R_\pm(\lambda)\psi$. 
Hence we are done.
\end{proof}

\bigskip
\noindent
\subsubsection*{Acknowledgements} 
S.A. is supported by Forefront Physics and Mathematics Program to
Drive Transformation (FoPM), a World-leading Innovative Graduate Study
(WINGS) Program, the University of Tokyo. 
K.I. is partially supported by JSPS KAKENHI Grant Number 23K03163. 
A part of this work was done while K.I. visited Aarhus University. 
He would like to express his gratitude for kind hospitality by Arne Jensen and Erik Skibsted there. 
The authors would like to sincerely thank Kyohei Itakura for pointing out a serious error in the previous version of the paper.

\providecommand{\bysame}{\leavevmode\hbox to3em{\hrulefill}\thinspace}
\providecommand{\MR}{\relax\ifhmode\unskip\space\fi MR }
\providecommand{\MRhref}[2]{%
  \href{http://www.ams.org/mathscinet-getitem?mr=#1}{#2}
}
\providecommand{\href}[2]{#2}

\end{document}